\documentclass[twoside,11pt,final]{entics} 

\usepackage{thmtools}
\usepackage{enticsmacro}
\usepackage{graphicx}
\usepackage{quiver}
\usepackage{booktabs}
\usepackage{makecell}
\usepackage{hyperref}
\usepackage{color}
\usepackage{stmaryrd}
\usepackage{amsmath}
\usepackage[all]{xy}

 \newcommand{\Cat}{\mathbf{Cat}}

\newcommand{\profto}{\mathrel{\ooalign{$\longrightarrow$\cr\hfil\!$\mid$\hfil\cr}}}

\newcommand{\collapse}{\mathit{collapse}}
\newcommand{\Sat}{\mathit{sat}}
\newcommand{\op}{{\rm op}}

\newcommand{\cov}{\mathbin{{{\mathrel-\joinrel\subset}}}}
\newcommand{\longcov}[1]{{\stackrel{#1}{\mathrel-\joinrel\relbar\joinrel\subset}}}

\renewcommand{\max}{{\it top}}

\newcommand{\imc}{\rightarrowtriangle}
\newcommand{\setdif}{\setminus}

\newcommand{\sig}{\sigma}

\newcommand{\fsubseteq}{\subseteq_{\rm fin}}

\newcommand{\CC}{{\rm C\!\!C}}
\newcommand{\cc}{\mathrm{\,c\!\!\!\!c}\,}
\providecommand{\pol}{}\renewcommand{\pol}{{\mathrm{pol}}}

\newcommand{\Strat}{{\mathbf{Strat}}}

\newcommand{\F}{{\mathcal F}}
\newcommand{\G}{{\mathcal G}}

\newcommand{\ESS}{{\mathbf {EvSym}}}
\newcommand{\ESSP}{\ESS_\mathrm{p}}
\newcommand{\ES}{{\mathbf {Ev}}}

\def\Con{{\rm Con}}

\newcommand{\Fam}{{\cal F}}

\newcommand{\arr}[1]{{{\stackrel{#1}{\longrightarrow}}}}
\newcommand{\id}{{\rm id}}

\newcommand{\arrow}{\rightarrow}
\newcommand{\set}[2]{{\{  #1\  | \  #2 \} }}
\newcommand{\setof}[1]{{\{ #1 \} }}

\newcommand{\Set}{\mathbf{Set}}
\newcommand{\pSet}{\mathbf{pSet}}

\newcommand{\eqdef}{\coloneqq}

\newcommand{\iso}{\cong}

\newcommand{\ie}{{\it i.e.}}

\usepackage{amsfonts}
\usepackage{tikz-cd}
\usepackage{amssymb}
\usepackage{mathrsfs}
\usepackage{mathtools}
\usepackage{todonotes}
\newcommand{\confsym}[1]{\mathscr C(#1)}

\newcommand{\strongequiv}{\simeq_{\mathrm{s}}}
\newcommand{\weakequiv}{\simeq}

\newcommand{\sat}[1]{\mathrm{sat}(#1)}
\newcommand{\E}{\mathcal{E}}
\newcommand{\D}{\mathcal{D}}

\newcommand{\longto}{\longrightarrow}
\newcommand{\xto}{\xrightarrow}
\newcommand{\xiso}[1]{\mathrel{\mathrlap{\smash{\xto[\smash{\raisebox{1.3mm}{$\scriptstyle\sim$}}]{#1}}}\hphantom{\xto{#1}}}}
\newcommand\toiso{\xto{\smash{\raisebox{-.5mm}{$\scriptstyle\sim$}}}}
\newcommand\lefttoiso{\xleftarrow{\smash{\raisebox{-.5mm}{$\scriptstyle\sim$}}}}

\newcommand{\tosym}[1]{\xtosym{}{#1}}
\newcommand{\xtosym}[2]{
  \mathrel{
    \mathrlap{\smash{\xto[\smash{\raisebox{0mm}{\scriptsize $\widetilde{#2}$}}]{\
          #1\ }}}
    \hphantom{\ \xto{#1}\ }}
}
\newcommand{\pto}{\rightharpoonup}
\newcommand{\sym}[1]{\widetilde{#1}}
\newcommand{\Setoid}{\mathbf{Setoid}}
\newcommand{\SetoidProf}{\Setoid\text{-}\mathbf{Prof}}
\newcommand{\SetoidFib}{\mathbf{SetoidFib}}
\newcommand{\SetoidSFib}{\mathbf{SetoidSFib}}
\newcommand{\calC}{\mathcal{C}}
\newcommand{\calD}{\mathcal{D}}
\newcommand{\calE}{\mathcal{E}}
\newcommand{\calA}{\mathcal{A}}
\newcommand{\calB}{\mathcal{B}}

\newcommand{\fibre}[2]{#1^{-1}({#2})}
\newcommand{\essfibre}[2]{{#1_\sim^{-1}}({#2})}
\newcommand{\fibres}{\mathbf{fibres}}
\newcommand{\essfibres}{\mathbf{ess\text{-}fibres}}
\newcommand{\inter}{\circledast}
\newcommand{\comp}{\odot}
\DeclarePairedDelimiter\coll\|\|
\newcommand{\config}[1]{{\mathscr C}(#1)}

\newcommand{\posincl}{\mathrel{\lhook\joinrel\xrightarrow{\raisebox{-0.35ex}{$\scriptscriptstyle\boxplus$}}}}
\newcommand{\negincl}{\mathrel{\lhook\joinrel\xrightarrow{\raisebox{-0.35ex}{$\scriptscriptstyle\boxminus$}}}}
\newcommand{\negrev}{\mathrel{\xleftarrow{\raisebox{-0.35ex}{$\scriptscriptstyle\boxminus$}}\joinrel\rhook}}

\newcommand{\bplus}{{\scriptscriptstyle\boxplus}}
\newcommand{\bminus}{{\scriptscriptstyle\boxminus}}

\usetikzlibrary{calc,fit,matrix,decorations.markings,decorations.pathreplacing,arrows,cd,positioning,shapes.misc}
\usetikzlibrary{decorations.pathmorphing}
\usetikzlibrary{backgrounds}
\tikzstyle{game-causality}=[dotted, thick]
\tikzstyle{strat-causality}=[->, thick,  -open triangle 60]
\tikzstyle{conflict}=[decorate, decoration={snake,amplitude=.3mm,segment length=2mm},-]

\tikzset{
  posnode/.style args={#1}{
  fill=blue!25, draw, thick,  inner sep=1pt, minimum size=10pt,    
    label={[font=\scriptsize, inner sep=2pt]270:$#1$},
path picture={
    \draw[thick]
      (path picture bounding box.north) -- (path picture bounding box.south)
      (path picture bounding box.west) -- (path picture bounding box.east);
  }}}
\tikzset{
  negnode/.style={
  fill=red!25, draw, thick, inner sep=1pt, minimum size=10pt,
path picture={
    \draw[thick]
      (path picture bounding box.west) -- (path picture bounding box.east);
  }}}
\tikzset{
  negnodel/.style args={#1}{
  fill=red!25, draw, thick, inner sep=1pt, minimum size=10pt,
    label={[font=\scriptsize, inner sep=2pt]270:$#1$},
path picture={
    \draw[thick]
      (path picture bounding box.west) -- (path picture bounding box.east);
  }}}
  \tikzset{
frame/.style={
  draw,
  rounded corners=6pt,
  thick,
  fill=none,
  inner sep=10pt
}}

\def\conf{MFPS 2026} 	%
\volume{NN}			%
\def\lastname{Paquet, Winskel} %
\begin{document}
\begin{frontmatter}
  \title{Concurrent Strategies as Street Fibrations}
  \author{Hugo Paquet\thanksref{a}\thanksref{myemail}}	%
   \author{Glynn Winskel\thanksref{b}\thanksref{coemail}}		%
   \address[a]{Inria and \'Ecole Normale Sup\'erieure\\				%
    Paris, France}
   \thanks[myemail]{Email: \href{mailto:hugo.paquet@inria.fr} {\texttt{\normalshape
        hugo.paquet@inria.fr}}}
  \address[b]{Queen Mary University of London\\
    United Kingdom}
  \thanks[coemail]{Email:  \href{mailto:g.winskel@qmul.ac.uk} {\texttt{\normalshape
        g.winskel@qmul.ac.uk}}}
\begin{abstract}
Concurrent games and strategies based on event structures are an established and expressive model of interactive computation. In this paper we revisit the problem of equipping this model with symmetry. Symmetry is essential for many applications, but it is also a source of mathematical complexity, not least because strategies with symmetry can be compared up to several distinct notions of equivalence. 

Our first contribution is a complete characterization of two classes of strategies for which composition admits an identity. For these two classes the identity laws hold, respectively, up to a `weak' and a `strong' notion of equivalence. The main result is that the weak class of strategies corresponds precisely to those inducing Street fibrations over the game. We also show how the established `thin' approach to symmetry also fits in this general framework. 

A second contribution is a formal connection between games and setoid-valued profunctors. This follows a long tradition of relating games and relational models, but here additionally includes a very general model of symmetries. Our characterization of strategies as Street fibrations makes the connection to profunctors significantly easier to establish. 
\end{abstract}
\begin{keyword}
 Event structures, game semantics, fibrations, profunctors 
\end{keyword}
\end{frontmatter}

\section{Introduction}

Event structures \cite{NPW,evstrs} are a simple mathematical model for concurrency: a set of computational events, constrained by causality and conflict relations, describes the possible behaviours of a concurrent process.

More recently event structures have also provided the foundation for an expressive theory of games and strategies with concurrent features \cite{lics11}. In that context, the `events' are occurrences of moves in two-player games, and a strategy is a concurrent process describing the behaviour of one player as it interacts with the opponent over the course of the game. These concurrent games and strategies form an expressive model of program interaction, relying on the mathematics of event structures. This model already admits several extensions \cite{quantstrat,ictac2015,aurore,mfps2018} and is intimately connected to other instances of concurrency in game semantics \cite{FP,Asgames,Congames,murawski,asynchronousTemplate}.

Applications of concurrent games to programming language semantics (e.g., \cite{lics15,lics18}) have required a generalization of the model in which event structures are equipped with `symmetry', an equivalence relation that expresses when executions are essentially the same \cite{ESS}. This generalization is essential (e.g.~to support monads and comonads) but it greatly complicates the mathematical theory: with symmetry, strategies require a new composition operation and new identity morphisms, and new notions of equivalence weaker than isomorphism \cite{lics14}. 

In this paper we revisit symmetry in concurrent games with hindsight, making connections to established categorical structures for managing symmetries. We construct a new model of games and strategies with symmetry, arguing that it is the `maximally general' model, which embeds all previous constructions. Furthermore, we give a new characterization of `weak' strategies as Street fibrations (\S\ref{subsec:intro-characterization}), and establish a formal connection to pseudo-profunctors (\S\ref{subsec:intro-profunctors}). 

\subsection{The problem of characterizing strategies}
\label{subsec:intro-characterization}

Concurrent strategies interact and compose (see \S\ref{sec:games-and-strategies}), but it is a technical challenge to show that this composition obeys the expected laws of composition. In particular, the existence of an identity strategy (the `copycat') satisfying the left and right unit laws is not automatic, and relies on certain conditions being imposed on strategies. These conditions characterize strategies within a broader class of strategy-like structures (the `pre-strategies') for which composition is not necessarily unital \cite{lics11}. 

The problem of characterizing strategies is more difficult for games with symmetry. In \cite{lics14} a set of conditions is shown to be sufficient for a strategy with symmetry to satisfy the unit laws up to a notion of equivalence of strategies called \emph{strong equivalence} (\S\ref{subsec:pre-strategies}). This result falls short of a characterization: these sufficient conditions are in fact not all necessary. We will remedy this to give a complete characterization of \emph{strong strategies} (those satisfying the laws up to strong equivalence). 

In this paper we also consider a weaker notion of equivalence of strategies, perhaps more natural from a categorical point of view, called \emph{weak equivalence}. Strong equivalence implies weak equivalence, and therefore the class of \emph{weak strategies} (those satisfying the unit laws up to weak equivalence) necessarily includes the strong strategies. Our main result (Theorem~\ref{thm:characterization}) is that weak strategies also admit a characterization, namely as forming Street fibrations over a particular category induced by the game. (Street fibrations are a weakening of the usual definition of fibrations, due to Ross Street~\cite{street1980fibrations}, with the advantage of being stable under equivalence of categories. We give precise definitions in \S\ref{sec:fibrations}.) This result extends an existing characterization of strategies without symmetry: strategies are discrete fibrations \cite{fossacs13}. 

\subsection{Strategies and profunctors}
\label{subsec:intro-profunctors}

It is well known that games and strategies share many features with categories of generalized relations (matrices, spans, profunctors; all examples of `cartesian bicategories' \cite{carboni1987cartesian}).  There are various ways to make this connection precise \cite{baillot1997timeless,boudes2009thick,calderon2010understanding,tsukada2016plays,pierre-hugo-lics23} but typically one obtains an oplax functor between the two models. 
The functor is oplax because, despite the similarities, there is a key conceptual difference: the composition of strategies is `interactive' and not `relational', the latter being less constrained. 

This perspective helps to guide our analysis of symmetry in concurrent strategies. In \S\ref{sec:profunctors} we describe an oplax functor, from our model of games and concurrent strategies with symmetry, to a model of small categories and setoid-valued (pseudo) profunctors (Theorem~\ref{thm:oplax-functor}). This functor is especially easy to describe given our characterization of strategies as Street fibrations, using the well-known categorical equivalence between fibred and indexed structures.

The representation of strategies as profunctors is designed to abstract away the complexities of interaction (specific to games and event structures) whilst maintaining the description of symmetries. This abstraction also explains the mechanisms for symmetry in concurrent games in terms of established categorical structures for managing symmetries (setoids, profunctors)

A technical point is that the two models involved are instances of 3-dimensional categories. The 3-cells are necessary, because the 2-cells themselves exhibit symmetry, but we will see that both 3-categories are highly degenerate, so 3-dimensional coherence issues do not arise. We will explain this degeneracy in \S\ref{sec:setoids} by  highlighting the role of setoids (i.e. degenerate groupoids) in the theory of event structures with symmetry. 

\subsection{Summary of contributions and outline of the paper}

Our main contributions are  Theorems~\ref{thm:strong-characterization} and \ref{thm:characterization}, which characterize weak and strong strategies respectively, and Theorem~\ref{thm:oplax-functor}, which establishes the connection to profunctors. 

To state and prove these results we develop some background. In \S\ref{sec:setoids} we give some background on event structures with symmetry, in \S\ref{sec:polarity} we enrich this with polarity and define the `Scott category' of configurations. In \S\ref{sec:fibrations} we give background on categorical fibrations, proving results relevant to our purposes, and define the profunctor model. In \S\ref{sec:games-and-strategies} we finally move to the theory of games and strategies; \S\ref{sec:characterizations} contains the characterization results and \S\ref{sec:strategies-as-profunctors} the connection to profunctors. Finally \S\ref{sec:thin-games} makes a connection to thin concurrent games, a popular approach to symmetry.

\section*{Preliminaries on event structures.} 

An \emph{event
  structure} consists of a set of events $E$ with a partial order
relation $\leq$ of \emph{causal dependency} and a nonempty set $\Con$ of \emph{consistent} finite subsets
of $E$ satisfying four axioms: (1) \emph{finite causes}:
for all $e \in E$, the set $\{ e' \in E \mid e' \leq e \}$ is finite;
(2) for all $e \in E$, the singleton $\{ e \}$ is consistent; (3) if $X \in \Con$ and $Y \subseteq X$ then $Y \in \Con$; (4) if $e
\leq e'$ where $e' \in X \in \Con$, then $X \cup \{ e \} \in \Con$.

A \emph{(finite) configuration} of $(E, \leq, \Con)$ is a finite, consistent,
and down-closed subset of $E$. The set of all such configurations is
  denoted $\config{E}$. We typically use $E$ as shorthand for $(E, \leq,
\Con)$, when no confusion arises. A \emph{(total) map of event structures} $(E, \leq_E, \Con_E) \to (D,
\leq_D, \Con_D)$ is a function $f : E \to D$ satisfying two axioms:
\begin{itemize}
\item \emph{configurations are preserved by direct image}: for all $x \in \config{E}$, $f x \in \config{D}$;
\item \emph{local injectivity}: the restriction of $f$ to any
  configuration $x \in \config{E}$ is injective. 
\end{itemize}

Informally, possible executions of $E$ are all represented via $f$ as
possible executions of $D$. Indeed for every configuration
$x \in \config{E}$, there is a bijection between $x$ and its direct
image: $f_{\mid x} : x \toiso fx$ which is order-reflecting ($f_{\mid x}^{-1}$ is always monotone). In general, a map $f:E\to D$  need not preserve causal dependency; when it does it is called \emph{rigid}. Equivalently, a map $f$ is rigid iff when $x\in\config E$ and $y\in\config D$ and $y\subseteq f(x)$ then there is $z\in\config E$, necessarily unique, for which $z\subseteq x$ and $fz =y$~\cite{ESS}.

In the spectrum of possible models for concurrency, event structures are 
central in that they relate to other models via adjunctions~\cite{handbook,goubault2012formal}. %
They provide a concise, abstract alternative to representations as sets of traces, but at the same time they are more concrete and operational than categorical or topological models for concurrency. Maintaining this compromise is sometimes challenging but holds us close to operational significance and relations with other models. %

\section{Symmetry in event structures and the role of setoids}
\label{sec:setoids}

We review the basic elements of symmetry in event structures
\cite{ESS} and prove a number of new results. One main goal of
this section is to point out the special role played by setoids. (We
will see below in Lemma~\ref{lem:ess-setoid-fibres} that a map of event structures with symmetry induces a setoid structure on each fibre.)

\begin{definition}
\label{def:ess}
  An \emph{event structure with symmetry} consists of an event
  structure $E$ equipped with a family $\sym E = \{
  \sym E [x, y]\}_{x, y \in \config{E}}$, where each $\sym E[x, y]$ is a
  set of bijections $x \xtosym {\ } {} y$, satisfying the following
  axioms:
  \begin{itemize}
  \item \emph{Congruence}: The family $\sym E$ contains all
    identity bijections, and it is closed under composition of
    bijections, inverses, and restriction of a bijection to a
    sub-configuration.
    \item \emph{Bisimulation}: for every $\theta \in \sym E[x, y]$, if
      $x \subseteq x' \in \config{E}$, then there exists $y' \in
      \config{E}$ and $\theta' \in \sym E [x', y']$ such that
      $\theta'_{\mid x} = \theta$.
    \end{itemize}
\end{definition}

\begin{remark}\label{rem:spans} Event structures with symmetry arise from the combination of two abstract ideas: internal equivalence relations as
  spans in a category (e.g. \cite{johnstone2002sketches}), and bisimulation as open maps
  \cite{JNW}. Indeed, to give an event structure with symmetry
  $(E, \sym E)$ is to give a jointly monic span
  $E \xleftarrow{l} R \xrightarrow{r} E$ satisfying the axioms for an
  internal equivalence relation in the category of event structures
  and in which the maps $l$ and $r$ are open. (\emph{Open} means for
  $l$ that it is rigid and if $x \in \config{R}$ and $y \in \config{E}$ with
  $l x \subseteq y$ then there is $z \in \config{R}$ such that
  $x \subseteq z$ and $l z = y$.) The notions of maps and homotopy we
  give below, as well as games with symmetry, could all be developed
  in this style.
\end{remark}

\begin{definition}
A \emph{map of event structures with symmetry} $f : (E, \sym E) \to (D, \sym
D)$ is a map of event structures $f : E \to D$ such that, for every
$\theta \in \sym E [x, y]$, the composite below lies in $\sym D[fx, fy]$:
\[
f \theta \quad := \quad f  x \xiso{f_{\mid x}^{-1}} x \xtosym{\theta} E y \xiso{f_{\mid y}} fy.
\]
\end{definition}

Symmetry gives a notion of 2-cell between maps of event structures. For
maps $f, g : (E, \sym E) \to (D, \sym D)$ of event structures with
symmetry, a \emph{homotopy} $f \sim g$ consists of a symmetry bijection $\theta_x : fx
\tosym D
gx$ for every $x \in \config{E}$, such that
\[
 \begin{tikzcd}[row sep=2mm]
	& fx \\
	x &\\
	& gx
	\arrow["{\theta_x}"{inner sep=0.8ex}, "{\rotatebox{90}{$\sim$}}"'{inner sep=0.2ex}, from=1-2, to=3-2]
	\arrow["{f_{\mid x}}", "{\rotatebox{25}{$\sim$}}"'{inner sep=-0.2ex}, from=2-1, to=1-2]
	\arrow["{g_{\mid x}}"',"{\rotatebox{-25}{$\sim$}}"{inner sep=-0.3ex}, from=2-1, to=3-2]
\end{tikzcd}
  \]
  commutes. It is immediate that if such a homotopy exists then it
  must be unique, with $\theta_x = g_{\mid x} \circ f_{\mid x}^{-1}$,
  and so $\sim$ is just a relation on maps. It is an equivalence
  relation, since there are identity homotopies and homotopies can be
  composed, and moreover it is preserved by composition of maps. Therefore:
\begin{proposition}
 There is a 2-category $\ESS$ of event structures with
  symmetry, maps between them, and homotopies as 2-cells.\footnote{In fact, $\ESS$ has the structure of a homotopy category with path and cylinder objects~\cite{lics14}.}
\end{proposition}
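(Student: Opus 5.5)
The plan is to check the 2-category axioms directly. Since homotopies are unique when they exist, $\ESS$ is a locally posetal 2-category, in fact locally an equivalence relation. So it is enough to show three things: (i) event structures with symmetry and their maps form a category; (ii) $\sim$ is an equivalence relation on each hom-set; (iii) $\sim$ is compatible with composition on both sides. The interchange and unit/associativity coherence laws for 2-cells then hold automatically, because any two parallel 2-cells are equal.

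For (i), identities are maps of event structures. They preserve symmetries because $\id\,\theta = \theta$. For composition, take maps $f : E \to D$ and $g : D \to C$. Then $g\circ f$ preserves configurations. It is locally injective because $f_{\mid x}$ is injective and its image $fx$ is a configuration on which $g$ is injective. For symmetries, note that $(g f)_{\mid x} = g_{\mid fx}\circ f_{\mid x}$. Hence $(gf)\theta = g_{\mid fy}\, f_{\mid y}\,\theta\, f_{\mid x}^{-1}\, g_{\mid fx}^{-1} = g(f\theta)$, and this lies in $\sym C$ because $f\theta \in \sym D[fx,fy]$. Associativity and unit laws are inherited from functions.

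For (ii), reflexivity uses the identity bijection $\id_{fx}$, which is in $\sym D$ by congruence. For symmetry, take the inverse $\theta_x^{-1} = f_{\mid x}\circ g_{\mid x}^{-1}$, which is again a symmetry by closure under inverses. For transitivity, compose $h_{\mid x} g_{\mid x}^{-1}\circ g_{\mid x} f_{\mid x}^{-1}$, which stays in $\sym D$ by closure under composition. In each case the required triangle commutes by construction.

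For (iii), there are two whiskering cases. First let $f\sim g : E\to D$ and $k : D\to C$. The candidate component is $(kg)_{\mid x}(kf)_{\mid x}^{-1} = k_{\mid gx}\,\theta_x\,k_{\mid fx}^{-1} = k\theta_x$, which lies in $\sym C$ because $k$ preserves symmetry. Second let $h : B\to E$. We need $f_{\mid hz}\, h_{\mid z}\, h_{\mid z}^{-1}\, g_{\mid hz}^{-1}$, which equals $\theta_{hz}$ as a bijection $f(hz)\to g(hz)$; this uses that $(fh)_{\mid z} = f_{\mid hz}\circ h_{\mid z}$. Horizontal composition of two homotopies then follows from whiskering together with transitivity.

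The only step needing care is the first whiskering case. There one must note that "preservation of symmetries" is stated via restricted bijections, so $k\theta_x$ is defined relative to the configurations $fx$ and $gx$ of $D$. This is legitimate because $\theta_x \in \sym D[fx,gx]$. Nothing here requires the bisimulation axiom.
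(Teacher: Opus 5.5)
Your proposal is correct and follows the same route as the paper, which argues exactly this way: homotopies are unique when they exist ($\theta_x = g_{\mid x}\circ f_{\mid x}^{-1}$), $\sim$ is an equivalence relation, and it is preserved by composition; your explicit check that maps compose is the only extra detail. (One small slip: in the precomposition case the composite you wrote, $f_{\mid hz}\, h_{\mid z}\, h_{\mid z}^{-1}\, g_{\mid hz}^{-1}$, is $\theta_{hz}^{-1}$, whereas the intended one is $g_{\mid hz}\, h_{\mid z}\, h_{\mid z}^{-1}\, f_{\mid hz}^{-1} = \theta_{hz}$; this changes nothing, since symmetries are closed under inverses.)
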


Each hom-category $\ESS(\calE, \calD)$ is a 
groupoid with at most one element in each hom-set. This is equivalently presented as a \emph{setoid}: a set with an equivalence
relation. (In fact one could regard the 2-category $\ESS$ simply as a $\Setoid$-enriched
category, where $\Setoid$ denotes the cartesian monoidal category of
setoids and equivalence-preserving functions.)

There is an internal notion of equivalence in $\ESS$: for
event structures with symmetry $\E$ and $\D$, an \emph{equivalence}
$\E \simeq \D$ is a pair of maps $f : \E \to \D$ and $g : \D \to \E$
such that $f \circ g \sim \id_\D$ and $\id_\E \sim g \circ f$.

\begin{definition}
\label{def:cat-config}
  The \emph{category of configurations} $\confsym{\E}$ of an event
  structure with symmetry $\E = (E, \sym E)$ is the sub-category of
  $\Set$ with object set $\config{E}$ and morphisms generated by
  inclusion maps $x \hookrightarrow y$ (where $x \subseteq y$) and
  symmetry bijections $x \tosym E y$.
\end{definition}
By the properties of symmetry, one can show that a function $x \to y$
is a morphism in $\confsym{\E}$ if and only if it can be factored as
$x \tosym{E} z \hookrightarrow y$ for some configuration
$z \in \config{E}$, and moreover this factorization is unique. We now discuss two important operations on event structures with symmetry.

\subsection{Synchronization via pseudo-pullbacks}
\label{subsec:synchronization}

A \emph{pseudo-pullback} (or \emph{homotopy pullback})
is a strict 2-limit which represents a form of synchronization up to
symmetry. In $\ESS$, where 2-cells are an equivalence relation, the
pseudo-pullback of a cospan
$E \xto{f} D \xleftarrow{f'} E'$ consists of an object $P$ with projections
$p : P \to E$ and $p' : P \to {E'}$ satisfying $fp \sim f'p'$,
having the following universal property: for every other $E \xleftarrow{q}
Q \xto{q'} E'$ such that $fq \sim f'q'$, there is a unique $h : Q \to
P$ such that $q = ph$ and $q' = p'h$.

\newcommand{\bijgraph}{\mathrm{graph}} The 2-category $\ESS$ has all
pseudo-pullbacks \cite{ESS}. For a cospan
$E \xto{f} D \xleftarrow{f'} E'$, we say $x \in \config{E}$ and
$x' \in \config{E'}$ are \emph{synchronizable via
  $\theta : fx \tosym D f'x'$} if there is a partial order on the
graph of the bijection $\theta$ such that the two bijections
$x \toiso \bijgraph(\theta) \lefttoiso x'$ are monotone.\footnote{One can
  always construct a pre-order on $\bijgraph(\theta)$ induced by the
  orders on $x$ and $x'$. The non-automatic property is antisymmetry,
  which corresponds to the absence of deadlocks (or causal loops) in
  the synchronization of $x$ and $x'$.} We can identify the
configurations of the pseudo-pullback $P$ with the sets
$\bijgraph(\theta)$ for tuples $(x, x', \theta)$ where $x$ and $x'$ are
synchronizable via $\theta$. The symmetry $\sym P$ is characterized as
follows: for synchronizable pairs $(x, x', \theta)$ and
$(y, y', \xi)$, a bijection
$\varphi : \bijgraph(\theta) \toiso \bijgraph(\xi)$ is in the symmetry on
$P$ if the dashed bijections below are in the symmetries of $E$ and
$E'$.
\[\begin{tikzcd}
	{x } & {\bijgraph(\theta)} & {x'} \\
	y & {\bijgraph(\xi)} & {y'}
	\arrow["{\sim}"'{outer sep=-1.2ex},from=1-1, to=1-2]
	\arrow["\rotatebox{90}{$\sim$}"{outer sep=-1.2ex}, dashed, from=1-1, to=2-1]
	\arrow["\rotatebox{90}{$\sim$}"'{outer sep=-1.5ex}, "\varphi"', from=1-2, to=2-2]
	\arrow["{\sim}"{outer sep=-1.2ex}, from=1-3, to=1-2]
	\arrow["\rotatebox{90}{$\sim$}"'{outer sep=-1.5ex}, dashed, from=1-3, to=2-3]
	\arrow["{\sim}"'{outer sep=-1.2ex},from=2-1, to=2-2]
	\arrow["{\sim}"{outer sep=-1.2ex},from=2-3, to=2-2]
\end{tikzcd}\]

\subsection{Hiding}
\label{subsec:hiding}

If $(E, \leq, \Con)$ is
an event structure and $V \subseteq E$ is a subset of events, we can
restrict $E$ to $V$. Let $E \downarrow V$ denote the event structure
with events $V$, partial order restricted from $E$, and consistent
subsets the restriction of $\Con$ to subsets of $V$. For
$x \in \config{E \downarrow V}$ we can always complete $x$ to a
configuration $[x]$ of $E$, defined as the minimal
$y \in \config{E}$ such that $y \cap V = x$. Concretely, $[x] = \{ e
\in E\mid \exists e' \in x.\ e \leq e' \}$.

Now let $\sym E$ be a symmetry on $E$. If $V$ is closed under symmetry
(in the sense that for all $\theta \in \sym E[x, y]$, if $e \in x \cap
V$ then $\theta(e) \in V$) then $E \downarrow V$ has a symmetry
consisting of all bijections $\varphi : x \toiso y$ for which there exists
$\psi : [x] \tosym E [y]$ such that the diagram below commutes.
\[\begin{tikzcd}[row sep = 1em]
	x & {[x]} \\
	y & {[y]}
	\arrow[hook, from=1-1, to=1-2]
	\arrow["\varphi"', from=1-1, to=2-1]
	\arrow["\psi", from=1-2, to=2-2]
	\arrow[hook, from=2-1, to=2-2]
      \end{tikzcd}\]

\section{Polarity and the Scott category of configurations}
\label{sec:polarity}

Game semantics begins with the assignment of a polarity to each event,
positive when the event represents a move of Player, and negative for a move
of Opponent.

\begin{definition} An \emph{event structure with
    symmetry and polarity} is an event structure with symmetry
  $\E = (E, \sym E)$ equipped with a labelling function
  $\pol : E \to \{ \boxminus, \boxplus\}$, such that every bijection in $\sym E$
  preserves the polarity of events.
\end{definition}

We denote by $\ESSP$ the 2-category of event structures with symmetry
and polarity, with maps required to preserve polarity and homotopy
2-cells defined as in $\ESS$.

We have seen that an event structure with symmetry has a category of
configurations, a subcategory of $\Set$ containing symmetry
bijections and inclusion maps. We now generalize this to the setting with
polarities in a way that turns out to be useful for characterizing
strategies. First we fix some terminology: for $x, y \in \config{E}$ and
$x \subseteq y$, say the inclusion map $x \hookrightarrow y$ is
\emph{positive} (denoted $x \posincl y$) when all events in
$y \setminus x$ have positive polarity, and \emph{negative}
($x \negincl y$) when they are all negative. Let $\pSet$ denote the
category of sets and partial functions.
\begin{definition}
  If $\E$ is an event structure with polarity and symmetry,
the \emph{Scott category} of $\E$ is the subcategory $\confsym{\E}$ of $\pSet$ with
as objects the elements of $\config{E}$, and morphisms generated by:
\begin{itemize}
\item symmetry bijections $x \tosym E y$;
\item \emph{positive} inclusions maps $x \posincl y$, where $x \subseteq y$ and $y \setminus x$ only contains positive events;
    and
    \item \emph{negative} \emph{reverse} inclusions maps $y \negrev x$, represented by partial
      functions $y \pto x$ acting as identity
      on $x$ and undefined otherwise.
\end{itemize}
\end{definition}

Remarkably, one can show that every morphism $y \pto x$ in the Scott
category $\confsym{E}$ admits a unique factorization of the form
$y \negrev w \toiso u \posincl x$~\cite{lics14}. We will make extensive
use of this fact.

\begin{remark}
  One can see an ordinary event structure with symmetry (without
  polarity) as one with polarity having only positive events. This gives an embedding $\ESS \to \ESSP$. This
  way, the category of configurations of Definition~\ref{def:cat-config}
  becomes a special case of the Scott category.
\end{remark}

The definition of the Scott category already appears in
\cite{lics14}. The novelty in this paper consists in using the Scott
category in a characterization of strategies. This simplifies, and
generalizes, the results of \cite{lics14}. The terminology ``Scott'' is
because of an analogy with the order on functions in domain theory; see
\cite{fossacs13} for further discussion.

We now extend the Scott category construction $\confsym{-}$ to a
2-functor $\ESSP \to \Cat$ acting on maps and homotopies. Given a map
$f : \E \to \D$, recalling that every morphism in $\confsym{\E}$
must be of the form
$x \negrev z \xiso{\theta} w \posincl y$, we can apply $f$ to each component to obtain
$fx \negrev fz \xiso{f\theta} fw \posincl fy$ in $\confsym{\D}$. If $g : \E \to \D$ is another
map with $f \sim g$, then by definition we have a symmetry
$\theta_x : f x \toiso gx$ for every $x \in \confsym{\E}$ and so a
transformation $\confsym{f} \cong \confsym{g}$.

\begin{proposition}
  This definition of $\confsym{-}$ determines a 2-functor, which is
  locally full and faithful. In other words, there is exactly one natural transformation $\confsym{f}\Rightarrow\confsym{g}$ if
  $f \sim g$, and no such transformations otherwise.
\end{proposition}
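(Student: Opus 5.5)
We must establish three things: that $\confsym{f}$ is a well-defined functor for each map $f$, that the assignment is 2-functorial, and that it is locally full and faithful.

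\emph{Functoriality of $\confsym{f}$.} Rather than working with the factorization, observe that $\confsym{\E}$ is a subcategory of $\pSet$. Each morphism $\alpha : x \pto y$ is a partial function, so we can set $\confsym{f}(\alpha) := f_{\mid y}\circ \alpha \circ f_{\mid x}^{-1} : fx \pto fy$. This uses the local bijections $f_{\mid x} : x \toiso fx$. Because it is a conjugation by bijections, it visibly preserves identities and composition: $f_{\mid y}^{-1}f_{\mid y}=\id_y$.

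It then remains to check that $\confsym{f}(\alpha)$ is a morphism of $\confsym{\D}$. Since conjugation respects composites, it suffices to check generators. A symmetry $\theta$ is sent to $f\theta \in \sym D$ by the definition of maps of event structures with symmetry. A positive inclusion $x\posincl y$ is sent to the inclusion $fx\subseteq fy$, and this is positive because $f$ preserves polarity and $fy\setminus fx = f(y\setminus x)$ by local injectivity on $y$. The negative reverse inclusions are handled in the same way. This description agrees with the componentwise one given via the factorization $x \negrev z \xiso{\theta} w \posincl y$. Preservation of composition of maps, $\confsym{gf}=\confsym{g}\confsym{f}$, is likewise immediate from $(gf)_{\mid x}=g_{\mid fx}f_{\mid x}$.

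\emph{Action on homotopies.} If $f\sim g$, take components $\theta_x = g_{\mid x}f_{\mid x}^{-1}\in\sym D[fx,gx]$. Naturality with respect to any $\alpha:x\pto y$ reduces to the following equality of partial functions:
\[ g_{\mid y}f_{\mid y}^{-1}\,f_{\mid y}\alpha f_{\mid x}^{-1} = g_{\mid y}\alpha g_{\mid x}^{-1}\, g_{\mid x}f_{\mid x}^{-1}, \]
which holds trivially. Compatibility with vertical composition holds because homotopies compose by composing these bijections. Whiskering is handled similarly, so $\confsym{-}$ is a 2-functor. Note that everything is strict, since both sides are concrete compositions of partial functions.

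\emph{Local fullness and faithfulness.} This is the only substantive step. Let $\eta:\confsym{f}\Rightarrow\confsym{g}$ be any natural transformation, with components $\eta_x : fx\pto gx$ in $\confsym{\D}$. We must show $\eta_x = g_{\mid x}f_{\mid x}^{-1}$; this gives both uniqueness and $f\sim g$.

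Fix $x$ and $e\in x$. Consider the down-closure $[e]$, which is contained in $x$, and the inclusions $\emptyset\subseteq[e]\subseteq x$. The obstacle is that the inclusions are not morphisms in $\confsym{\E}$ unless they have the right polarity, so we cannot apply naturality to them directly. Instead we argue in two cases.

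\begin{itemize}
\item If $e$ is positive, take the positive inclusion $[e]\setminus\{e\}\posincl[e]$, or more generally restrict attention to positive inclusions.
\item If $e$ is negative, use the negative reverse inclusion $x \negrev x\setminus\{e'\}$ for maximal negative events $e'$.
\end{itemize}

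A cleaner route, which I would try first, uses only negative reverse inclusions. Consider $\rho : x\negrev x^{-}$, where $x^-$ is obtained by removing events, and compare $\eta$ on smaller configurations. Alternatively, exploit that every morphism of $\confsym{\D}$ factors uniquely as $\negrev\,\toiso\,\posincl$. Naturality squares for the generators of $\confsym{\E}$ then constrain the partial function $\eta_x$ to send $f(e)$ to $g(e)$ for each $e$.

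Concretely, first show that $\eta_\emptyset$ is trivial. Then proceed by induction on $|x|$, using a maximal event $e\in x$:

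\begin{itemize}
\item If $e$ is positive, the square for $x\setminus\{e\}\posincl x$ gives $\eta_x\circ f(\text{incl}) = g(\text{incl})\circ\eta_{x\setminus e}$. This pins down $\eta_x$ on $f(x\setminus e)$ and forces $\eta_x$ to be total there.
\item If $e$ is negative, the square for $x\negrev x\setminus\{e\}$ says that restricting $\eta_x$ agrees with $\eta_{x\setminus e}$.
\end{itemize}

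In both cases the value on $f(e)$ is then forced by bijectivity and cardinality: $\eta_x$ has the form $\negrev\toiso\posincl$ between sets of equal size, so it is a bijection. This cardinality-and-factorization argument, showing that $\eta_x$ is a total bijection equal to $g_{\mid x}f_{\mid x}^{-1}$, is the main point. It yields that $g_{\mid x}f_{\mid x}^{-1}$ is a symmetry, hence $f\sim g$, and that $\eta$ is the unique transformation.
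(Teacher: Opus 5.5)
\textbf{Verdict: the approach matches the paper's, but the key step is justified by a false claim.} Your overall plan is the paper's. The conjugation formula $f_{\mid y}\circ\alpha\circ f_{\mid x}^{-1}$ gives functoriality cleanly, which the paper omits, and your treatment of homotopies is fine. For an arbitrary $\eta$ you then run an induction along a covering chain, using the naturality squares for $x\setminus\{e\}\posincl x$ and $x\negrev x\setminus\{e\}$, exactly as in the paper. The problem is the step you call the main point: ``$\eta_x$ has the form $\negrev\toiso\posincl$ between sets of equal size, so it is a bijection.''

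\textbf{Why the claim fails.} Take a game with two concurrent events, $a$ negative and $b$ positive, and only identity symmetries. The empty partial function $\{a\}\negrev\emptyset\toiso\emptyset\posincl\{b\}$ is a morphism between sets of equal size, but it is not a bijection. Equal total size only says that $\negrev$ drops as many (negative) events as $\posincl$ adds (positive) ones. It does not force either number to be zero.

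\textbf{The missing ingredient is polarity.} This is precisely what the paper uses. Since $f$ and $g$ preserve polarity and are injective on $x$, $fx$ and $gx$ have the same number of positive events and the same number of negative events. Write $\eta_x=(w\posincl gx)\circ\theta\circ(fx\negrev z)$; the symmetry $\theta$ preserves polarity. Then $|(fx)^{\bplus}|=|z^{\bplus}|=|w^{\bplus}|$ forces $w=gx$, and $|(gx)^{\bminus}|=|w^{\bminus}|=|z^{\bminus}|$ forces $z=fx$. Hence $\eta_x=\theta$ is a symmetry. The paper does this for every $x$ up front, and uses the induction only to identify $\theta$ with $g_{\mid x}f_{\mid x}^{-1}$.

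\textbf{How to repair your induction.} Inside your induction the same fact is needed locally in each case:
\begin{itemize}
\item \emph{Positive case.} $f(e)$ lies in the domain of $\eta_x$, because $\negrev$ removes only negative events. Totality, injectivity and $|fx|=|gx|$ then give bijectivity.
\item \emph{Negative case.} Naturality only shows that $\eta_x(f(e))$ is either $g(e)$ or undefined. Undefinedness would give $z=f(x\setminus\{e\})$ and hence $w=g(x\setminus\{e\})$. That leaves the negative event $g(e)$ in $gx\setminus w$, contradicting $w\posincl gx$.
\end{itemize}
With this repair the argument is complete. The exploratory paragraphs before ``Concretely'' (down-closures $[e]$, maximal negative events, the ``cleaner route'') are superseded and can be dropped.
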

\begin{proof}
We omit the proof of functoriality.
For local fullness and faithfulness, suppose $\varphi : \confsym{f} \Rightarrow \confsym{g}$ is a
natural transformation. We show that for every $x \in \confsym{E}$
the function $\varphi_x : fx \to gx$ is a symmetry bijection. By
the representation of morphisms in $\confsym{\D}$ we have that
\[
\varphi_x \quad = \quad fx \negrev z \xtosym{\theta} \D w \posincl gx.
\]
By the bisimulation property of $\theta$, there must be $t \in
\config{\D}$ such that $w \negincl t$ and $t
\tosym{\D} fx$. But $f x$ and
$gx$ must have the same number of events of each polarity, so we must
have $w = t = gx$ and $fx = z$, thus $\varphi_x = \theta : fx \tosym \D
gx$.

Now we show that $\varphi_x$ must be the specific bijection
$f(e) \mapsto g(e)$, rather than another symmetry. This is trivially
true if $x = \emptyset$, and by naturality if $\varphi_x$ satisfies
the property and $x \cov^e y$ then $\varphi_y$ also satisfies the
property. The result follows because for every configuration there
must exist at least one $\cov$-chain.
\end{proof}

\textbf{Point of notation.} In the rest of the paper all event structures are equipped with symmetry. So we use traditional letters ($E, A, B$, \dots) to denote them, keeping the symmetries ($\sym E, \sym A, \sym B$, \dots) implicit. Accordingly, $\confsym E$ always refers to the Scott category $\confsym{E, \sym E}$; this should cause no confusion.

\section{Fibrations, Street fibrations, and $\Setoid$-valued profunctors}
\label{sec:fibrations}

The goal of this paper is to study maps of event structures with
symmetry and polarity (called `pre-strategies' from \S\ref{sec:games-and-strategies}) through the fibrational properties of the induced functors on Scott categories.

\begin{definition}
  Let $p : \calC \to \calD$ be any functor and let $D \in \calD$. The
  \emph{fibre} of $p$ over $D$ is the category
  $\fibre p D$ consisting of the objects of $\calC$ that are mapped to
  $D$ and the morphisms of $\calC$ that are mapped to $\id_D$.

The \emph{essential fibre} of $p$ over $D$ is the category $\essfibre
p D$ defined to have as objects pairs
$(C \in \calC, \varphi : pC \toiso D)$ and morphisms
$(C, \varphi) \to (C', \varphi')$ the $f \in \calC(C, C')$ such that
$\varphi' \circ pf = \varphi$.
\end{definition}
The theory of fibrations specifies conditions on
$F$ under which the assignments $D \mapsto \fibre p D$ and $D \mapsto
\essfibre p D$ give pseudofunctors $\calD^\op \to \Cat$.

The next lemma, a key observation for this paper,  shows that for a functor of the form $\confsym{f} : \confsym{\E} \to \confsym{\D}$
between Scott categories, all fibres and essential fibres are setoids.
\begin{lemma}
\label{lem:ess-setoid-fibres}
  Let $f : \E \to \D$ be a map of event structures with symmetry and
  polarity and consider the induced functor $\confsym{f}$. For all
  $x \in \confsym{\D}$, both the fibre and essential fibres over $x$
  are setoids.
\end{lemma}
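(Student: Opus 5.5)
Being a setoid means being a groupoid with at most one morphism between any two objects. So I must show two things for each fibre and each essential fibre: every morphism is invertible, and parallel morphisms are equal.

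\textbf{Step 1: morphisms in the fibres are symmetries.} Take a morphism $h : y \pto y'$ in $\confsym{\E}$. Write it in its unique factorization $y \negrev w \xiso{\theta} u \posincl y'$. Then $\confsym{f}(h)$ is $fy \negrev fw \xiso{f\theta} fu \posincl fy'$.

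For the fibre over $x$, suppose $\confsym{f}(h) = \id_x$. Then $fy = fy' = x$. By uniqueness of factorization in $\confsym{\D}$ we get $fw = fy$ and $fu = fy'$. Since $f$ is locally injective, restricting it to $y$ gives a bijection $y \toiso fy$. Hence $w \subseteq y$ with $|fw| = |fy|$ forces $w = y$, and similarly $u = y'$. So $h = \theta$ is a symmetry bijection.

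For the essential fibre over $x$, we have isomorphisms $\varphi : fy \toiso x$ and $\varphi' : fy' \toiso x$ with $\varphi' \circ \confsym{f}(h) = \varphi$. So $\confsym{f}(h)$ is an isomorphism in $\confsym{\D}$. Its negative and positive parts must then be identities, by counting events of each polarity, as in the proof of local fullness above. The same local-injectivity argument again gives $h = \theta$. In either case $h$ is an isomorphism of $\confsym{\E}$, and its inverse $\theta^{-1}$ lies in the fibre (respectively the essential fibre). So both categories are groupoids.

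\textbf{Step 2: at most one morphism between any two objects.} This is the main point. Let $\theta, \theta' : y \to y'$ be two symmetries of $E$ lying over the same morphism of $\confsym{\D}$. In the fibre both lie over $\id_x$. In the essential fibre both equal $\varphi'^{-1}\varphi$ after applying $\confsym f$. So in either case $f\theta = f\theta'$ as bijections $fy \to fy'$. This means $f_{\mid y'} \circ \theta = f_{\mid y'} \circ \theta'$ as functions on $y$. Since $f_{\mid y'}$ is injective on the configuration $y'$, it follows that $\theta = \theta'$.

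I expect no real obstacle. The only delicate point is Step 1, where uniqueness of factorization is combined with local injectivity to rule out nontrivial positive or negative inclusion parts. The polarity-counting argument already used in the paper handles the essential-fibre case.
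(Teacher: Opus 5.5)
Your proof is correct and takes essentially the paper's approach. The paper's entire argument is the commuting square in your Step 2, namely $f_{\mid y'} \circ h = \confsym{f}(h) \circ f_{\mid y}$ as partial functions, so that $h = f_{\mid y'}^{-1} \circ \confsym{f}(h) \circ f_{\mid y}$; applied directly to an arbitrary morphism $h$ in the fibre it gives $h = f_{\mid y'}^{-1}\circ f_{\mid y}$, a total bijection, and so delivers invertibility and uniqueness at once, which makes your factorization-based Step 1 valid but unnecessary.
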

\begin{proof}
  For any morphism $\alpha : y \pto z$ in $\confsym{\E}$
the diagram
\[\begin{tikzcd}[column sep=1em, row sep=1em]
	y & z \\
	fy & fz
	\arrow["\alpha", harpoon, from=1-1, to=1-2]
	\arrow["{f_{\mid y}}"', from=1-1, to=2-1]
	\arrow["{f_{\mid z}}", from=1-2, to=2-2]
	\arrow["{f\alpha}"', harpoon, from=2-1, to=2-2]
      \end{tikzcd}\] commutes. If $\alpha$ lies in the fibre over $x$,
    then by definition we have $f y = fz = x$ and $f \alpha = \id_x$,
    therefore the commutative diagram implies $\alpha = f_{\mid z}^{-1} \circ f_{\mid y}$. Similar proof for the essential fibre. 
\end{proof}

\subsection{Rudiments of fibrations and setoids}

We recall the general notions in the next definition, see e.g.~\cite[B1.3]{johnstone2002sketches} for a textbook account. The situation will
be greatly simplified when fibres are setoids (Lemma~\ref{lem:setoidfibs}).

\begin{definition}
  For a functor $p : \calC \to \calD$, a morphism
  $f : A \to B$ in $\calC$ is called \emph{cartesian} if for every
  $g : C \to B$ and $u : pC \to pA$ such that $pg = pf \circ u$, there
  is a unique map $v : C \to A$ such that $p v = u$ and $f \circ v =
  g$:
  \[
    \begin{tikzcd}
	C & A & B \\
	pC & pA & pB
\arrow["{v}"{description}, dashed, from=1-1, to=1-2]
	\arrow["g", curve={height=-24pt}, from=1-1, to=1-3]
	\arrow[maps to, from=1-1, to=2-1]
	\arrow["f", from=1-2, to=1-3]
	\arrow[maps to, from=1-2, to=2-2]
	\arrow[maps to, from=1-3, to=2-3]
	\arrow["u"', from=2-1, to=2-2]
	\arrow["pf"', from=2-2, to=2-3]
\end{tikzcd}
\]
We say that $p$ is a \emph{fibration} if for every $B \in \calC$ and $h : X \to pB$
in $\calD$ there is a cartesian morphism $f : A \to B$
with $p f = h$. The morphism $f$ is called a \emph{cartesian lift} of $h$ at $B$.

We say that $p$ is a \emph{Street fibration} if the morphism
$h : X \to pB$ merely has a cartesian \emph{pseudo-lift} at $B$: a
cartesian morphism $f : A \to B$ together with an isomorphism
$\theta : pA \toiso X$ such that $p f = h \circ \theta$, as pictured
below.
\[\begin{tikzcd}
	A && B \\
	pA & X & pB
	\arrow["f", from=1-1, to=1-3]
	\arrow[maps to, from=1-1, to=2-1]
	\arrow[maps to, from=1-3, to=2-3]
	\arrow["\theta"', "\sim", from=2-1, to=2-2]
	\arrow["h"', from=2-2, to=2-3]
\end{tikzcd}\]
\end{definition}
(Fibrations are sometimes called \emph{Grothendieck fibrations}, to
distinguish them from Street fibrations.)

In general, there may be many possible lifts of a morphism $h : X \to
pB$ at an
object $B$, but a cartesian
lift satisfies a universal property which relates it to every other
lift. Cartesian lifts are \emph{essentially unique}: if $f : A \to B$
and $f' : A' \to B$ are cartesian lifts of $h : X \to pB$ then there a
unique isomorphism $\varphi : A \to A'$ such that $p \varphi = \id_X$
and $f = f' \circ \varphi$.  Likewise for pseudo
lifts: if $(f : A \to B, \theta : pA \toiso X)$ and
$(f' : A' \to B, \theta' : pA' \toiso X)$ are cartesian pseudo-lifts
of $h$, then there is a unique isomorphism $\psi : A \to A'$ such that
$p \psi = {\theta'}^{-1} \circ \theta$ and $f = f' \circ \psi$.

We now consider the special case in which a functor $p$ has
essentially unique lifts:
\begin{definition}
  A functor $p : \calC \to \calD$ has \emph{essentially unique lifts}
  when the following condition holds: for every $B \in \calC$ and
  $h : X \to pB$ in $\calD$, if $f : A \to B$ and $f' : A' \to B$
  satisfy $pf = pf' = h$ then there is a unique isomorphism
  $\varphi : A \to A'$ such that $p\varphi = \id_X$ and
  $f' = f \circ \varphi$.

  Similarly, we say that $p : \calC \to \calD$ has \emph{essentially
    unique pseudo-lifts} when the following condition holds: for every
  $B \in \calC$ and $h : X \to pB$ in $\calD$, if
  $(f : A \to B, \theta : pA \toiso X)$ and
  $(f' : A' \to B, \theta' : pA' \to X)$ satisfy $pf = \theta \circ h$
  and $pf' = \theta' \circ h$ then there is a unique isomorphism
  $\varphi : A \to A'$ such that $\theta' = \theta \circ p\varphi$ and
  $f' = f \circ \varphi$.
\end{definition}
When lifts are essentially unique they are automatically cartesian, the
functor is automatically a fibration, and all fibres are setoids; and
similarly for pseudo-lifts.
\begin{lemma}
  \label{lem:setoidfibs}
  The following are equivalent for a functor $p : \calC \to \calD$.
  \begin{itemize}
  \item $p$ has essentially unique lifts (resp. pseudo-lifts).
  \item $p$ is a fibration (resp. Street fibration) and every fibre
    (resp. essential fibre) is a setoid.
  \end{itemize}
  When these conditions are satisfied, we call the functor $p$ a
  fibration in setoids (resp. a Street fibration in setoids).
\end{lemma}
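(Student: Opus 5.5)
We prove the two equivalences separately, first for ordinary lifts and then for pseudo-lifts, in both directions.

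\emph{Unique lifts $\Rightarrow$ fibration with setoid fibres.} Let $f : A \to B$ be any lift of $h$; it exists, since this is part of what the hypothesis must be read as providing. We show $f$ is cartesian. Take $g : C \to B$ and $u : pC \to pA$ with $pg = h \circ u$.

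1. Use the factorisation of $u$ to relate $g$ to $f$. The clean way is to ask for a lift of $u$ at $A$. If the definition only quantifies lifts of morphisms landing in $pB$, I would instead argue as follows:
   \begin{itemize}
   \item $g$ is itself a lift of $h\circ u$ at $B$;
   \item assuming a lift $v_0 : C' \to A$ of $u$ exists, $f\circ v_0$ is a second lift of $h\circ u$ at $B$;
   \item essential uniqueness gives an iso $\varphi : C \to C'$ over $\id$ with $g = f v_0 \varphi$;
   \item then $v = v_0\varphi$ is the required factorisation.
   \end{itemize}
   The existence of lifts is the main subtlety. I expect the intended reading of the hypothesis to include existence, as in the informal statement ``a functor with essentially unique lifts is a fibration''. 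So I will state explicitly that the condition presupposes existence of a lift of every $h$ at every $B$. Existence of a lift of $u$ at $A$ is then directly available.

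2. Uniqueness of $v$. Suppose $v, v'$ both satisfy $pv = pv' = u$ and $fv = fv' = g$. Then $v$ and $v'$ are two lifts of $u$ at $A$, so there is a unique iso $\psi$ over $\id$ with $v' = v\psi$. The pair $(g, \id_C)$ admits both $\id_C$ and $\psi$ as such isos, since $g\psi = f v \psi = f v' = g$. Uniqueness in the hypothesis, applied to the lift $g$ compared with itself, forces $\psi = \id$, hence $v = v'$.

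3. Fibres are setoids. Given parallel morphisms $\alpha, \beta : A \to A'$ in a fibre, both are lifts of $\id_X$ at $A'$. The identity $\id_{A'}$ is also a lift of $\id_X$ at $A'$. Essential uniqueness gives unique isos over $\id$ from $A$ comparing $\alpha$, and likewise $\beta$, with $\id_{A'}$. Hence $\alpha$ and $\beta$ are both isomorphisms and equal to that unique iso, so $\alpha = \beta$. Every fibre morphism is therefore invertible and hom-sets are subsingletons, i.e.\ the fibre is a setoid.

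\emph{Fibration with setoid fibres $\Rightarrow$ unique lifts.} Given lifts $f, f'$ of $h$ at $B$, choose a cartesian lift $c : A_0 \to B$.
\begin{itemize}
\item Cartesianness of $c$ factors $f = c\,a$ and $f' = c\,a'$ with $a, a'$ in the fibre over $X$.
\item Since the fibre is a setoid, $a$ and $a'$ are isomorphisms.
\item The candidate comparison is $\varphi = a^{-1}a'$, which satisfies $f' = f\circ\varphi$ and lies over $\id_X$.
\item Any other such $\varphi'$ lies in the same hom-set of the fibre, so $\varphi' = \varphi$ because hom-sets of a setoid have at most one element.
\end{itemize}

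\emph{The pseudo-lift case.} We run the same argument with essential fibres replacing fibres.
\begin{itemize}
\item A pseudo-lift $(f, \theta)$ of $h$ corresponds to a lift of $h$ in the functor $\mathcal C \to \mathcal D$ after correcting the base by $\theta$.
\item Cartesian morphisms are unchanged, because the cartesian condition does not involve base isomorphisms.
\item Every step above goes through with ``iso over $\id_X$'' replaced by ``iso $\varphi$ with $\theta' = \theta\circ p\varphi$''. The latter says exactly that $\varphi$ is a morphism in the essential fibre $\essfibre{p}{X}$.
\end{itemize}
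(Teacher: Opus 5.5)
Your step~3 is where the argument breaks, and under the paper's literal definition the gap cannot be closed. The uniqueness clause fixes the comparison isomorphism only \emph{relative to the particular pair of lifts} being compared.
\begin{itemize}
\item Comparing $\alpha$ with $\id_{A'}$ gives the unique iso $\varphi$ over $\id_X$ with $\alpha=\id_{A'}\circ\varphi$, i.e.\ $\varphi=\alpha$.
\item Comparing $\beta$ with $\id_{A'}$ gives $\varphi=\beta$.
\end{itemize}
These are two uniqueness statements with different side conditions. There is no single ``unique iso'' that both must equal, so $\alpha=\beta$ does not follow. In step~2 you correctly checked the side condition $g\psi=g$ before invoking uniqueness; step~3 silently drops it.

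In fact the implication is false. Take a nontrivial group $G$ and the functor $p : BG\to\mathbf{1}$. It has lifts of everything, and any two lifts $f,f'\in G$ of $\id$ are related by exactly one $\varphi\in G$ with $f=f'\varphi$. So $p$ has essentially unique lifts (and pseudo-lifts) and is a fibration. Yet its fibre, and its essential fibre, is $BG$: a groupoid that is not a setoid.

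What step~3 does prove is that every fibre morphism is invertible. So, read literally, the definition characterizes (Street) fibrations whose (essential) fibres are \emph{groupoids}. Your converse direction is fine as written.

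To get setoids, you need a stronger definition. For any two lifts of $h$ at $B$, there should be exactly one isomorphism over $\id_X$ between their domains, full stop, and it should mediate between the two lifts. For pseudo-lifts, replace ``over $\id_X$'' by compatibility with $\theta,\theta'$. Under that reading:
\begin{itemize}
\item your step~3 is valid verbatim;
\item steps~1--2 still go through, since the strengthened condition implies the one you used;
\item your converse still holds, because thinness makes $a^{-1}a'$ the only such isomorphism.
\end{itemize}
The paper's own proof is just ``elementary verification'' and does not address this point. For the functors it actually uses, thinness of fibres is proved independently in Lemma~\ref{lem:ess-setoid-fibres}, so nothing downstream is affected.

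The rest of your proposal is sound:
\begin{itemize}
\item making existence of lifts part of the hypothesis, which is necessary since otherwise the forward direction fails;
\item the cartesianness argument;
\item the converse via a chosen cartesian lift;
\item the transfer to pseudo-lifts.
\end{itemize}
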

\begin{proof}
Elementary verification.
\end{proof}

We now discuss the relationship of fibrations and Street fibrations in
setoids with \emph{indexed} setoids.
\begin{definition}
  A fibration-in-setoids $p : \calC \to \calD$ is \emph{cloven} when
  it is equipped with a choice of lift $f^*C \to C$ for every $C \in
  \calC$ and $f : D \to pC$. Similarly, a Street fibration-in-setoids
  is \emph{cloven} when equipped with a choice of pseudo-lifts.
\end{definition}

Assuming the axiom of choice, such a choice of lifts can always
be made for a fibration. We consider the following 2-categories for a
small category $\calD$:
\begin{itemize}
\item $\SetoidSFib(\calD)$ is the 2-category with objects pairs
  $(\calC, p : \calC \to \calD)$ where $\calC$ is a small category and
  $p$ is a Street fibration-in-setoids. The morphisms
  $(\calC, p) \to (\calC', p')$ are pairs consisting of a functor
  $F : \calC \to \calC'$ and a natural isomorphism
  $\varphi : p \Rightarrow p' \circ F$, and the 2-cells $(F, \varphi)
  \to (F, \varphi)$.

\item $\SetoidFib(\calD)$ is the sub-2-category of $\SetoidSFib(\calD)$
  consisting of cloven fibrations-in-setoids, strict
  morphisms (with $p = p' \circ F$ and $\varphi = \id$), and all
  2-cells between them.
\item $[\calD^\op, \Setoid]$ is the 2-category of pseudo-functors
  $\calD^\op \to \Setoid$. Morphisms are pseudo-natural
  transformations (with the naturality square commuting up to
  equivalence), and 2-cells are component-wise equivalence of pseudo
  natural transformations.
\end{itemize}

The fibres of a cloven fibration, and the essential
fibres of a cloven Street fibration, can be presented functorially.
Indeed there are 2-functors
  \begin{align*}
&    \fibres : \SetoidFib(\calD) \longto [\calD^\op, \Setoid] \\
&\essfibres : \SetoidSFib(\calD) \longto [\calD^\op, \Setoid]
  \end{align*}
(Via the Grothendieck construction  \cite[B1.3]{johnstone2002sketches},  both are actually
equivalences of 2-categories; $\fibres$ is even a strict
equivalence.)

\subsection{A 3-dimensional profunctor model}
\label{sec:profunctors}

We introduce a model of profunctors valued in setoids. Although setoids are categorically
equivalent to sets, this level of structure is important for us because it contains just the right amount of information to model the symmetries of event structures.

The profunctor model is formally a tricategory, but the structure at level 3 is very simple because the
3-cells encode an equivalence relation on the 2-cells. In particular
there are no three-dimensional coherence axioms to verify. (The same will apply to games and strategies in \S\ref{sec:games-and-strategies}.)

\begin{definition}
The tricategory $\SetoidProf$ is given by the following components.
\begin{itemize}
\item Objects are small categories ($\calC, \calD, \dots$).
\item For objects $\calC$ and $\calD$, $\SetoidProf[\calC, \calD]$ is the 2-category
    $[\calD^\op \times \calC, \Setoid]$ of pseudo-functors, pseudo
    natural transformations, and equivalences. We write $P : \calC
    \profto \calD$ if $P \in \SetoidProf[\calC, \calD]$.
\item The identity on $\calC$ is the hom-functor
   $\calC(-, =) : \calC^\op \times \calC \to \Set$ post-composed with
    the embedding $\Set \to \Setoid$ that regards a set as a discrete
    setoid.
\item The composition of $P : \calC \profto \calD$ and $Q : \calD \profto
  \calE$ is the profunctor $Q \odot P : \calE^\op \times \calD \to
  \Setoid$ whose action on objects is given by
  \[
(Q\odot P)(E, C) = \int^{D \in \calD} P(D, C) \times Q(E, D)
\]
where the integral sign denotes a pseudo-coend in $\Setoid$, constructed as
the set $\sum_{D \in \calD} P(D, C) \times Q(E, D)$ with the smallest
equivalence relation $\sim$ such that
\begin{align*}
& (D, p, q) \sim (D, p', q)  && \text{when } p \sim p' \text{ in }
  P(D, C), \\
& (D, p, q) \sim (D, p, q')  & &\text{when } q \sim q' \text{ in }
                               Q(E, D), \\
 & (D, P(\alpha, C)(p'), q) \sim (D', p', P(E, \alpha)(q))  &&
                                                             \text{for
                                                             all } \alpha
                                                             \in
                                                             \calD(D',
                                                             D), \\
&&&\quad  p' \in P(D', C), q \in Q(E, D)
                                                             .
\end{align*}
We omit the action of the functor $Q\circ P$ on morphisms, as well as
the horizontal composition of 2-cells and 3-cells.
\end{itemize}
\end{definition}

Pseudo-profunctors of this kind are not new to this paper. See the references \cite{Lawler2014,Chikhladze2015LaxFormalTheory} for fully-detailed presentations, albeit in greater generality.

\begin{remark}
  We emphasize that $\SetoidProf$ is not a fully weak tricategory. The hom-2-categories are all
  strict and so the ``vertical'' composition of 2-cells and 3-cells is
  strictly associative and unital. However, the tricategory is not
  fully strict because the composition of profunctors is only
  associative and unital up to equivalence, where an \emph{equivalence}
  is a pair of 2-cells which are inverses of each other up to a 3-cell.
\end{remark}

\section{Games and strategies with symmetry}
\label{sec:games-and-strategies}

In this section we consider games and strategies as event structures with symmetry. We define identity strategies and the composition operation. These notions already appear in \cite{lics14}, but we generalize the model of \cite{lics14} to also include `weak' strategies, that satisfy the laws of composition only up to weak equivalence. 

\subsection{Games and pre-strategies} 
\label{subsec:pre-strategies}

The basic setup is simple to
describe: a \emph{game} is an event structure with symmetry and
polarity, and a \emph{pre-strategy} on a game $A$ is an event
structure with symmetry and polarity $S$ equipped with a map
$\sigma : S \to A$. We will later define a strategy as a pre-strategy
for which composition works well; see Definition~\ref{def:strategy}.

All reasoning about strategies and pre-strategies will be up to
equivalence, where equivalence is defined as follows. For pre-strategies
$\sigma : S \to A$ and $\sigma' : S' \to A$, a \emph{map of
  pre-strategies $\sigma \to \sigma'$} is a map $f : S \to S'$ such
that $\sigma' \circ f \sim \sigma$. The map $f$ is called
\emph{strict} if $\sigma' \circ f = \sigma$. A \emph{weak equivalence} (or just \emph{equivalence})
$\sigma \weakequiv \sigma'$ is a pair of maps $f : \sigma \to \sigma'$ and
$g : \sigma' \to \sigma$ which form an equivalence $S \simeq S'$ in $\ESS$. It
is a \emph{strong equivalence} if $f$ and $g$ are strict maps (in which case we write $\sigma \strongequiv \sigma'$), an
\emph{isomorphism} if $f$ and $g$ are inverses of each other, and a
\emph{strong isomorphism} if they are both strict and inverses.

\begin{example} To illustrate, consider a simple game $A$
  consisting of two positive, symmetric, and consistent events. We
  might draw $A$ as:
  \[ 
  \begin{tikzpicture} 
  \node [posnode=0] (a) at (0, 0) {}; 
  \node [posnode=1] (b) at (2,0) {};
  \node (sym) at (1, 0) {$\cong$};
\begin{scope}[on background layer]
\node[
  draw,
  rounded corners=6pt,
  thick,
  fill=gray!5,
  inner sep=10pt,
  fit=(a) (b) (sym)
] {};
\end{scope}
   \end{tikzpicture}
   \]
     where `$\cong$' indicates a symmetry between singleton configurations (and an auto-symmetry for the two-event configuration). The following are six possible pre-strategies $S$ on this game. In each case the map $S \to A$ is the only function preserving the labels. Recall that a squiggly line between events indicates conflict (\ie\ any set containing both is inconsistent):
  \tikzset{every picture/.style={baseline}}
\[
\begin{tikzpicture}
\node[posnode=0] (a) at (0,0) {};
\node[posnode=1] (b) at (1.8,0) {};
\draw[conflict] (a) -- (b);
\node[frame,fit=(a)(b)] {};
\node[below=5pt] at (current bounding box.south) {(a)};
\end{tikzpicture}
\hspace{0.8em}
\begin{tikzpicture}
\node[posnode=0] (a) at (0,0) {};
\node[posnode=0] (b) at (1.8,0) {};
\node (sym) at (0.9, 0) {$\cong$};
\draw[conflict,bend left=25] (a) to (b);
\node[frame,fit=(a)(b)] {};
\node[below=5pt] at (current bounding box.south) {(b)};
\end{tikzpicture}
\hspace{0.8em}
\begin{tikzpicture}
\node[posnode=1] (a) at (0,0) {};
\node[frame,fit=(a)] {};
\node[below=5pt] at (current bounding box.south) {(c)};
\end{tikzpicture}
\hspace{0.8em}
\begin{tikzpicture}
\node[posnode=0] (a) at (0,0) {};
\node[frame,fit=(a)] {};
\node[below=5pt] at (current bounding box.south) {(d)};
\end{tikzpicture}
\hspace{0.8em}
\begin{tikzpicture}
\node[posnode=0] (a) at (0,0) {};
\node[posnode=1] (b) at (1.8,0) {};
\node[frame,fit=(a)(b)] {};
\node[below=5pt] at (current bounding box.south) {(e)};
\end{tikzpicture}
\hspace{0.8em}
\begin{tikzpicture}
\node[posnode=0] (a) at (0,0) {};
\node[posnode=0] (b) at (1.8,0) {};
\node[posnode=1] (c) at (3,0) {};
\node (sym) at (0.9, 0) {$\cong$};
\draw[conflict,bend left=25] (a) to (b);
\node[frame,fit=(a)(b)(c)] {};
\node[below=5pt] at (current bounding box.south) {(f)};
\end{tikzpicture}
\]
The pre-strategies (b) and (d) are strongly equivalent to each other and weakly equivalent to (c), while (e) and (f) are strongly equivalent. (No other equivalences hold.)
\end{example}

\newlength{\mylen}
\newlength{\mywid}
\settoheight{\mylen}{$t$}
\settowidth{\mywid}{$A$}
\renewcommand{\parallel}{\mathbin{\raisebox{0.15em}{\resizebox{\mywid}{\mylen}{$\Vert$}}}}

\subsection{The composition of pre-strategies}
For games $A$ and $B$, a
pre-strategy \emph{from $A$ to $B$} is a pre-strategy on the game
$A^\perp \parallel B$, where $(-)^\perp$ inverts the polarity of every
event, and $\parallel$ is the parallel composition operator on event
structures with symmetry. (The latter is defined simply as a disjoint union of
events with all structure inherited component-wise.)
For strategies of the form $\sigma : S \to A^\perp \parallel B$, we
sometimes keep $S$ implicit and write $\sigma : A \profto B$.

Pre-strategies $\sigma : S \to A^\perp \parallel B$ and
$\tau : T \to B^\perp \parallel C$ are composed in two steps. The
first step is to describe the interaction of $\sigma$ and $\tau$ via  the homotopy pullback
\[\begin{tikzcd}[row sep=0em]
	& {S \inter T} \\
	{S \parallel C} && {A \parallel T} \\
	& {A \parallel B \parallel C}
	\arrow["{\pi_1}"', from=1-2, to=2-1]
	\arrow["{\pi_2}", from=1-2, to=2-3]
	\arrow["\scalebox{2}{$\lrcorner$}"{anchor=center, pos=0.125, rotate=-45}, draw=none, from=1-2, to=3-2]
	\arrow["{\sigma \parallel C}"', from=2-1, to=3-2]
	\arrow["{A \parallel \tau}", from=2-3, to=3-2]
      \end{tikzcd}\]
in $\ESS$, where the maps  $S \to A \parallel B$
and $T \to B \parallel C$ are the morphisms underlying the two
pre-strategies.
The event structure with symmetry $S \inter T$ is called the \emph{interaction}
of $\sigma$ and $\tau$ and admits two maps $S \inter T \to A \parallel B \parallel C$ which are equivalent. We pick one for preciseness: let $\sigma \inter \tau$ denote the left way around the diagram.

The second step is to ``hide'' the events of $B$, by restricting $S
\circledast T$ to the subset of events mapped to $A \parallel C$.
This gives an event structure with symmetry denoted $S \comp T$. There is an unambiguous way to add polarities back to obtain a
pre-strategy $\sigma \comp \tau : S \comp T \to A^\perp \parallel C$,
called the \emph{composition} of $\sigma$ and $\tau$.

This composition is associative up to equivalence, and functorial with respect to maps of strategies. In particular, weak equivalences $\sigma \weakequiv \sigma'$ and $\tau \weakequiv \tau'$ induce a weak equivalence $\tau \comp \sigma \weakequiv \tau' \comp \sigma'$; and the same holds if $\weakequiv$ is replaced with $\strongequiv$. (For proofs of these facts see \cite{castellan2017concurrent,lics14}.)

\subsection{Copycat}
\label{subsec:copycat}

Every game $A$ admits a ``copycat'' strategy $A \profto A$, to serve as the identity morphism. This will be denoted $\cc_A : \CC_A  \to A^\perp \parallel A$. Both $\CC_A$ and $\cc_A$ are characterized by the next lemma, which we have formulated in such a way that the connection with the profunctor model of Section~\ref{sec:profunctors} is manifest. (The explicit construction of copycat can  be found in \cite{lics14}; it is not required for this paper.)

\newcommand{\Tw}{\mathrm{Tw}}
Recall that, for a category $\calC$, the \emph{twisted arrow category} $\Tw(\calC)$ is obtained as the category of elements of the hom-functor $\calC^\op \times \calC \to \Set$. That is, objects are arrows $(f : c \to c')$ in $\calC$ and morphisms $(f_1 : c_1 \to c_1') \to (f_2 : c_2 \to c_2')$ are pairs of morphisms $g : c_2 \to c_1$ and $g' : c_1' \to c_2'$ such that $f_2 = g' \circ f_1 \circ g$.

\begin{lemma}[Configurations of copycat, \cite{lics14}]
\label{lem:conf-copycat}
For a game $A$, there exists an event structure with symmetry and polarity $\CC_A$ such that the category $\confsym{\CC_A}$ is isomorphic to the  category $\Tw(\confsym{A})$. Furthermore there is a map of event structures with symmetry $\cc_A : \CC_A \to A^\perp \parallel A$, corresponding (via $\confsym{-}$ and the mediating isomorphisms) to the canonical projection $\Tw(\confsym{A}) \to \confsym{A}^\op \times \confsym{A}$.
\end{lemma}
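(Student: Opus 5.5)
Since the statement is an existence claim, I will give an explicit construction of $\CC_A$ and then identify its Scott category with $\Tw(\confsym{A})$.

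\emph{The event structure.} Take the standard copycat construction. Its events are those of $A^\perp \parallel A$. Causality is the transitive closure of the order on $A^\perp \parallel A$ together with the extra pairs $(\bar a, a)$ when $a$ is positive in the right copy $A$, and $(a, \bar a)$ when $a$ is negative in $A$. Here $\bar a$ is the copy of $a$ in $A^\perp$, so the extra arrows always go from an Opponent event of $A^\perp \parallel A$ to the Player event it is copied to. A finite set is consistent when its down-closure projects to consistent sets in both copies. A configuration of $\CC_A$ is then a pair $(x, y) \in \config{A}\times\config{A}$ such that $y \supseteq_{\bplus}$-ish $x$ in the standard sense. Concretely, with $z = x \cap y$, we have $x \negincl$-extends $z$ only positively in $A^\perp$ (i.e.\ $z \negincl x$ read in $A$) and $z \posincl y$.

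\emph{Symmetry.} Declare a bijection $(x,y) \toiso (x',y')$ to be a symmetry when its two components $\theta_1 : x \tosym{A} x'$ and $\theta_2 : y \tosym{A} y'$ are symmetries of $A$ that agree on the copied part $x \cap y$. Congruence is immediate. For bisimulation I extend one event at a time and use bisimulation of $\sym A$ in each copy, keeping the two extensions compatible across the copycat arrows.

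\emph{Identifying the Scott category.} The key observation is that a configuration $(x,y)$ of $\CC_A$ is the same thing as a morphism $x \negrev z \posincl y$ in $\confsym{A}$: a reverse negative inclusion followed by a positive inclusion. By the unique factorization of Scott-category morphisms, every morphism $x \pto y$ factors as $x \negrev w \toiso u \posincl y$. Unfortunately a configuration only records pure inclusions, not a symmetry in the middle. So I refine the construction: the events of $\CC_A$ are still those of $A^\perp\parallel A$, but the symmetry $\sym{\CC_A}$ is enlarged so that the copied part may be matched up to a symmetry $w \tosym{A} u$. I expect a cleaner route, however: define the functor $\Phi:\confsym{\CC_A}\to\Tw(\confsym A)$ sending $(x,y)$ to the morphism $x\negrev x\cap y \posincl y$. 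Then check three things in order.

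\begin{enumerate}
\item $\Phi$ is functorial. A Scott morphism of $\CC_A$ decomposes into negative-reverse, symmetry and positive pieces. Negative events of $\CC_A$ lie in the left copy when they are positive in $A$, so a morphism of $\CC_A$ yields a pair $(g:x_2\to x_1, g':y_1\to y_2)$ in $\confsym A$ with the twisted square commuting.
\item $\Phi$ is bijective on objects. This needs every Scott morphism of $A$ to be representable by a configuration; the factorization $w\toiso u$ is absorbed by choosing the representative with $w=u$ and transferring the symmetry into the morphisms of $\Tw$.
\item $\Phi$ is fully faithful, which follows by the same factorization lemma together with the order-reflection argument from the proof of Lemma~\ref{lem:ess-setoid-fibres}.
\end{enumerate}

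\emph{The map $\cc_A$.} This is the identity on events. The composite $\Phi$ followed by the domain/codomain projection is $\confsym{\cc_A}$ on the nose, since $(x,y)\mapsto(x,y)$.

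\emph{Main obstacle.} The hard step is object-level surjectivity in step~2: morphisms of $\confsym A$ with a nontrivial middle symmetry are not literally pairs of configurations. I would first try to show that any such morphism is isomorphic in $\Tw$ to one with identity middle part, which is exactly what fullness needs. If genuine isomorphism on objects fails, I would fall back on building $\CC_A$ as in \cite{lics14}, with events $A^\perp\parallel A$ and the symmetry generated so that the isomorphism holds.
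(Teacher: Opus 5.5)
There is a genuine gap, and it is exactly at the step you flag as the main obstacle.

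\textbf{The objects do not match.} Your $\CC_A$ has the events of $A^\perp\parallel A$, and $\cc_A$ is the identity on events. So a configuration of $\CC_A$ is determined by its image pair, and $\confsym{\cc_A}$ is injective on objects. The projection out of $\Tw(\confsym{A})$ is not injective on objects: $\Tw(\confsym{A})$ has one object per Scott morphism, and Scott morphisms with the same endpoints can differ in their middle symmetry.

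Take the game of \S\ref{subsec:pre-strategies}, with two consistent, symmetric positive events $a_0,a_1$. In $\confsym{A}$ there are two morphisms $\{a_0\}\to\{a_0,a_1\}$: the inclusion, and the one sending $a_0\mapsto a_1$. Your $\CC_A$ has at most one configuration over that pair. Counting everything, $\Tw(\confsym{A})$ has $14$ objects while your $\CC_A$ has $9$ configurations, so no bijection on objects exists at all. Step~2 of your plan is therefore false.

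\textbf{Your repairs do not help.} Enlarging $\sym{\CC_A}$ adds morphisms to the Scott category but never objects, because the objects are the configurations and these do not depend on the symmetry. ``Absorbing'' the middle symmetry by choosing a representative with $w=u$ only shows that every object of $\Tw(\confsym{A})$ is isomorphic to one in the image of $\Phi$. That is essential surjectivity, which gives at best an equivalence, not the isomorphism the statement claims. The fallback ``as in \cite{lics14}, with events $A^\perp\parallel A$'' fails for the same reason.

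\textbf{What is missing.} You need a copycat whose events record the matching, so that a configuration determines the whole Scott morphism $y\negrev w\xiso{\theta} u\posincl x$, including $\theta$. Necessarily $\cc_A$ is then not injective on events: an event of $A^\perp\parallel A$ may have several copies in $\CC_A$, one for each way of matching its causal history. One way to do this is as follows.
\begin{enumerate}
\item Let $\mathcal F$ contain, for each Scott morphism, the set of its Opponent events (untagged) together with its Player events, each tagged with the Opponent event it copies under $\theta$.
\item Show that $\mathcal F$ is a stable family whose inclusion order is extension of Scott morphisms. This uses that symmetries of $A$ are closed under restriction and themselves form a stable family.
\item Set $\CC_A=\Pr(\mathcal F)$, with the symmetry induced by pairs of symmetries of $A$ acting on the two sides.
\item Check bisimulation, and check that the generating inclusions and symmetries of $\CC_A$ correspond exactly to the morphisms of the twisted arrow category.
\end{enumerate}
This is the kind of construction the paper delegates to \cite{lics14}. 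The lemma forces it, because configurations must be in bijection with Scott morphisms.

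\textbf{Orientation.} With your causal links, the left-only events of a configuration $(x,y)$ (left, right) are positive in $A$, and the right-only events are negative. So $(x,y)$ encodes $y\negrev x\cap y\posincl x$, not $x\negrev x\cap y\posincl y$, and your $\Phi$ is not well-typed as written. Check variance on the one-event game with $a$ positive. There $\confsym{\CC_A}$ is the span $\emptyset\leftarrow\{\bar a\}\to\{\bar a,a\}$. With the convention for $\Tw$ stated just before the lemma, this is $\Tw(\confsym{A})^{\op}$, not $\Tw(\confsym{A})$.
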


\begin{lemma}[Composition with copycat, adapted from \cite{lics14}]
For every pre-strategy $\sigma : A \profto B$ there are strict maps of strategies $\lambda_\sigma : \sigma \to \cc_B \comp \sigma$ and $\rho_\sigma : \sigma \to \sigma \comp \cc_A$, natural in $\sigma$.
\end{lemma}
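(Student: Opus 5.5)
The plan is to construct $\lambda_\sigma$ and $\rho_\sigma$ directly at the level of configurations, using the description of the interaction as a homotopy pullback (\S\ref{subsec:synchronization}) and of copycat via Lemma~\ref{lem:conf-copycat}. I treat $\rho_\sigma : \sigma \to \sigma\comp\cc_A$; the case of $\lambda_\sigma$ is symmetric, using $\cc_B$ on the $B$ side.

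The key observation is that each configuration $x \in \config{S}$, with $\sigma x = x_A \parallel x_B$, has a canonical partner in copycat. By Lemma~\ref{lem:conf-copycat}, $\confsym{\CC_A} \cong \Tw(\confsym{A})$. I take the identity arrow $\id_{x_A}$ in $\confsym{A}$, viewed as an object of $\Tw(\confsym A)$. This gives a configuration $c_x \in \config{\CC_A}$ projecting to $x_A \parallel x_A$.

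Next I check that $x$ and $c_x$ are synchronizable via the identity bijection on $x_A$. Intuitively, copycat only adds causal links from a negative event in one copy to its positive twin. Those links go in the direction that $\sigma$ already respects, so no causal loop arises; I would verify antisymmetry of the induced preorder on the graph by induction on a covering chain of $x$. This yields a configuration of $S \inter \cc_A$. Hiding the middle copy of $A$, and using the minimal completion $[-]$ from \S\ref{subsec:hiding}, gives a configuration of $S \comp \CC_A$. Its events are in canonical bijection with those of $x$: each event of $S$ is paired with one copy event on the visible $A$ side.

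This assignment is monotone and preserves covering steps, so it defines an event map $\rho_\sigma : S \to S \comp \CC_A$. Concretely, $\rho_\sigma$ sends an $S$-event over $B$ to itself, and an $S$-event $s$ over $A$ to the visible copycat event over $\sigma(s)$, taken inside the minimal configuration synchronized with $[s]$. Local injectivity and preservation of configurations then follow from the configuration-level construction. Preservation of symmetry follows because a symmetry $\theta : x \cong y$ in $S$ induces the twisted-arrow isomorphism $\id_{x_A} \to \id_{y_A}$ given by the pair $(\theta_A^{-1}, \theta_A)$. That isomorphism is a symmetry in $\CC_A$, and the pair of symmetries meets the pseudo-pullback symmetry criterion.

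Strictness, $(\sigma\comp\cc_A)\circ\rho_\sigma = \sigma$, holds on the nose. This is because the visible copy event lies over exactly $\sigma(s)$, and the left-way-around map $\sigma\inter\tau$ was chosen consistently with this. Naturality for a map $f : \sigma \to \sigma'$ is then a direct check on events, using functoriality of composition. The only subtlety is that $f$ may be non-strict, so naturality holds up to the homotopy $\sim$ carried by $f$.

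I expect the main obstacle to be the synchronizability and well-definedness step. I need to show that the minimal copycat partner of $[s]$ exists and is unique, that the result is independent of the chosen covering chain, and that the resulting function really is an event structure map. The alternative of working through copycat's explicit event-level construction would be more combinatorial.
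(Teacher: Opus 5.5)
The paper gives no argument of its own for this lemma beyond deferring to \cite{lics14}. Your construction is the standard one behind that citation, so your proposal is correct: you synchronize each $x\in\config{S}$ with the copycat configuration over the identity arrow $\id_{x_A}$ of $\Tw(\confsym{A})$ along the identity symmetry, hide, and take $\rho_\sigma(s)$ to be the prime of the visible copy inside the interaction configuration built from $[s]$, giving strictness on the nose and naturality only up to the homotopy carried by a non-strict map.

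One phrasing should be tightened. Since $\confsym{\CC_A}\cong\Tw(\confsym{A})$, $\CC_A$ is necessarily larger than plain copycat, and a positive copy event can be caused by a symmetric negative event that is not its twin. So describing copycat's causality as twin links is valid only inside configurations lying over identity arrows such as $\id_{x_A}$. That is the only place your synchronizability argument uses it.
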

In the next section, we will study in detail what happens when a pre-strategy is composed with copycat, and in particular we will recall the concrete constructions of $\lambda_\sigma$ and $\rho_\sigma$.

\subsection{Strategies}

\begin{definition}
\label{def:strategy}
  A \emph{(weak) strategy from $A$ to $B$}
  is a pre-strategy
  $\sigma : S \to A^\perp \parallel B$ for which the canonical maps
  $\lambda_\sigma$ and $\rho_{\sigma}$ are weak equivalences of pre-strategies. Say $\sigma$ is a \emph{strong strategy} if they are strong equivalences.\footnote{In this paper by `strategy' we always mean weak strategy, unless strong is specified. This emphasizes that they are the more general notion. The reference \cite{lics14} instead uses strong strategies as default, specifying `weak' otherwise.}
\end{definition}
It follows directly from this definition (and the associativity of composition) that strategies are closed under
weak equivalence and under composition. We organize all of this data into a tricategory of games and strategies. The level of strictness is
the same as for $\SetoidProf$: the composition of 2-cells is strictly
associative and unital, and the 3-cells are just a congruence relation on the 2-cells. We emphasize that nonetheless a three-dimensional structure is needed, because copycat is only an identity up to equivalence of strategies, and the definition of (both weak and strong) equivalence relies on the 3-cells.
\begin{definition}
  The tricategory $\Strat$ is given by the following components.
\begin{itemize}
 \item Objects are games $(A, B, \dots)$.
 \item For objects $A$ and $B$, $\Strat[A, B]$ is the 2-category whose
   objects are strategies from $A$ to $B$; morphisms are maps of
   strategies; and 2-cells are homotopies between maps.
 \item The identity on $A$ is the copycat strategy $\cc_A$.
 \item The composition 2-functor
   $\Strat[B, C] \times \Strat[A, B] \to \Strat[A, C]$ extends the
   composition $\comp$ to maps of strategies and their equivalences,
   via the universal property of pseudo-pullbacks.
\end{itemize}
\end{definition}
\newcommand{\StrongStrat}{\Strat_{\mathrm{strong}}}
There is a sub-tricategory consisting of games, strong strategies, strict maps between them, and homotopies. This sub-model, which we denote by $\StrongStrat$, is the `$\sim$-bicategory' developed in \cite{lics14}. 

\begin{remark}
  Tricategories are generally hard to construct because of the many coherence axioms, but in $\Strat$ all axioms involving equations of 3-cells hold automatically because the structure is thin. So it is enough to check that the axioms of a bicategory hold up to the equivalence relation represented by the 3-cells. This is much simpler and follows from our description of composition using pullback and hiding. (We emphasize that these axioms \emph{do not} hold up to equality, and that a three-dimensional structure is really needed to describe the compositional framework.)
\end{remark}

In this section we have presented a compositional model of games
and strategies up to equivalence. This is in some sense the most
general compositional model based on $\ESS$: strategies are, by definition, exactly the pre-strategies for which composition works
well. Our goal in this paper is to provide an alternative, more direct characterization of strategies.

\section{Characterizations of weak strategies and strong strategies}
\label{sec:characterizations}

The central result of this section is that weak strategies correspond to Street fibrations. To establish this, we revisit the possible lifting properties that pre-strategies may satisfy (\S\ref{subsec:lifting}), several of which have previously been used to carve out classes of well-behaved pre-strategies. We then introduce an important `collapse' construction on strategies (\S\ref{subsec:collapse}) which allows us to characterize first the strong strategies (\S\ref{subsec:strong-characterization}) and then the weak ones (\S\ref{subsec:weak-characterization}). 

\begin{remark}
Throughout this section we consider pre-strategies on a game $A$, rather than $A^\perp \parallel B$. We regard them as morphisms $\varnothing \profto A$ in $\Strat$ and study the post-composition with copycat on $A$. All our definitions and results apply equally to pre-strategies $\sigma : A \profto B$ with completely analogous arguments. Note in particular that $\cc_B \comp \sigma \comp \cc_A \strongequiv \cc_{A^\perp \parallel B} \circ \sigma$. We stick to the special case for readability.
\end{remark}

\subsection{Lifting properties for pre-strategies}
\label{subsec:lifting}

The following properties will play a role in the technical development. Pseudo-receptivity is new to this paper. 
\begin{definition}
\label{def:lifting-properties}
A pre-strategy $\sigma : S\to A$ is said to be:
\begin{itemize}
\item \emph{innocent} (a.k.a.~\emph{courteous} in \cite{Asgames,LMCS,cg2}) if, for every $x \in \confsym{S}$,  every positive inclusion $y \posincl \sigma x$  in $\config{A}$ has a (necessarily unique, by local injectivity) %
lift;
\item \emph{receptive} if, for every $x \in \confsym{S}$, there exists a unique %
lift of every negative inclusion $y \negrev \sigma x$  in $\config{A}$.
\item \emph{pseudo-receptive}
if, for every $x \in \confsym{S}$, every negative inclusion $y \negrev \sigma x$  in $\config{A}$ has an essentially unique pseudo-lift; 
\end{itemize}
The condition known as \emph{strong receptivity} in \cite{lics14} is the conjunction of receptivity and pseudo-receptivity, which are independent conditions. 
\end{definition}

\begin{remark}  One could equivalently state innocence in terms of \emph{immediate causality}: for every $s, s' \in S$, if $s$ has positive polarity and $s \imc s'$ then $\sigma s \imc \sigma s'$, where $\imc$ means $<$ with no events in between. For a proof of equivalence see \cite{fossacs13}. (We also note an unfortunate terminology clash: innocence %
is not related to the notion of innocent strategy in Hyland-Ong game semantics \cite{HO}.)
\end{remark}

The following lemma will be important in our proofs:
\begin{lemma}
  \label{lem:minusinnocence}
For a pseudo-receptive pre-strategy $\sigma : S \to A$, the following \emph{$\boxminus$-innocence} condition holds: if $s, s \in S$ with $s \imc s'$ and both $s, s'$ are negative, then $\sigma s \imc \sigma s'$. 
\end{lemma}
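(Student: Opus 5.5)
The plan is to argue by contradiction, using order-reflection of $\sigma$ on configurations together with the unique factorization of morphisms in the Scott category.

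Fix negative $s, s' \in S$ with $s \imc s'$. Let $x = [s'] = \{ t \in S \mid t \leq s' \}$, a configuration of $S$. Two structural facts about $x$ follow from $s \imc s'$:
\begin{itemize}
\item $s'$ is the unique maximal event of $x$;
\item the only event of $x$ strictly above $s$ is $s'$, since any $t \in x$ with $s < t < s'$ would contradict immediacy.
\end{itemize}
Recall that $\sigma_{\mid x} : x \toiso \sigma x$ is a bijection whose inverse is monotone. So for $a, b \in x$, $\sigma a \leq \sigma b$ in $A$ implies $a \leq b$ in $S$.

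First I rule out anything strictly between $\sigma s$ and $\sigma s'$. Suppose $\sigma s < e < \sigma s'$ in $A$. Since $\sigma x$ is down-closed, $e = \sigma t$ for some $t \in x$. Order-reflection gives $s < t < s'$, contradicting immediacy. It therefore suffices to show $\sigma s < \sigma s'$.

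Suppose instead that $\sigma s \not\leq \sigma s'$. Any event of $\sigma x$ above $\sigma s$ has the form $\sigma t$ with $t \in x$ and, by order-reflection, $s \leq t$. By the second fact above this forces $t \in \{ s, s' \}$, and $\sigma s \not\leq \sigma s'$ excludes $t = s'$. Hence $\sigma s$ is maximal in $\sigma x$. So $y \eqdef \sigma x \setminus \{ \sigma s \}$ is a configuration of $A$, and since $\sigma s$ is negative, $y \negrev \sigma x$ is a negative reverse inclusion in $\confsym{A}$.

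Now apply pseudo-receptivity at $x$. It yields a morphism $\alpha$ from $x$ to some $w$ in $\confsym{S}$ and a symmetry $\theta : y \tosym{A} \sigma w$ such that $\sigma\alpha$ equals the reverse inclusion $\sigma x \pto y$ followed by $\theta$. Factor $\alpha$ uniquely as $x \negrev u \xiso{\varphi} v \posincl w$ with $\varphi$ a symmetry of $S$. Applying $\sigma$ gives a factorization $\sigma x \negrev \sigma u \xiso{\sigma\varphi} \sigma v \posincl \sigma w$ of $\sigma\alpha$. The morphism $\sigma\alpha$ already has the factorization $\sigma x \negrev y \xiso{\theta} \sigma w \posincl \sigma w$. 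Uniqueness of factorizations in $\confsym{A}$ then forces $\sigma u = y$. Thus $u \subseteq x$ is a configuration of $S$ with $u = x \setminus \{ s \}$. But $s' \in u$ and $s < s'$, so $u$ is not down-closed, a contradiction. Hence $\sigma s < \sigma s'$, and with the first step, $\sigma s \imc \sigma s'$.

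The main point to check carefully is the direction conventions. The lift in pseudo-receptivity is a morphism out of $x$, and the composite $\theta \circ (\text{reverse inclusion})$ must be recognized as already being in normal form, with a trivial positive part. This is what allows the uniqueness of factorization in $\confsym{A}$ to transfer the negative part back to $S$. Only existence of a pseudo-lift is used, not its essential uniqueness.
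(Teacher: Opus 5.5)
There is a genuine gap: you apply pseudo-receptivity in the wrong direction. In the paper, $y \negrev x$ denotes the partial map $y \pto x$, which requires $x \subseteq y$. So the hypothesis ``every negative inclusion $y \negrev \sigma x$ has an essentially unique pseudo-lift'' concerns morphisms \emph{into} $\sigma x$ coming from negative \emph{extensions} $\sigma x \negincl y$. A pseudo-lift is then a morphism \emph{into} $x$, which amounts, up to symmetry, to a negative extension $x \negincl x'$ with $\sigma x' \cong y$. This is the Street-fibration direction.

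Your $y = \sigma x \setminus \{\sigma s\}$ instead gives a morphism $\sigma x \pto y$ \emph{out of} $\sigma x$. You then ask for a morphism out of $x$ lying over it, i.e.\ an opfibration-style lift that deletes a negative event. Your sentence ``the lift in pseudo-receptivity is a morphism out of $x$'' is exactly the misreading. Such deletions are not provided, and they fail even for strategies. Let $S$ have a negative $n$ and a positive $p$ with $n<p$, over a game $A$ with concurrent events $a$ (negative) and $b$ (positive), with $\sigma n=a$, $\sigma p=b$ and trivial symmetries. This $\sigma$ is innocent and pseudo-receptive (indeed a strong strategy). Yet $a$ is negative and maximal in $\sigma\{n,p\}=\{a,b\}$, and no configuration of $S$ lies over $\{b\}$, even up to symmetry. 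In your situation ($x=[s']$), the existence of such a deletion is essentially the statement being proved. Your preliminary steps are fine: order-reflection rules out intermediate events and $\sigma s' < \sigma s$, and $\sigma s$ is maximal in $\sigma[s']$ when $\sigma s \not\leq \sigma s'$.

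Your closing remark that only existence of a pseudo-lift is used is a symptom of the problem. With the correct direction, existence alone cannot suffice. Take negative events $s_a<s_b$ and $t_b<t_a$ in $S$, the two branches in conflict, over concurrent negative events $a,b$. Let $\sigma$ send $s_a,t_a\mapsto a$ and $s_b,t_b\mapsto b$, with trivial symmetries. Every negative extension of every $\sigma x$ lifts, even strictly, yet $s_a \imc s_b$ while $a,b$ are concurrent. What fails is essential uniqueness: $\{s_a,s_b\}$ and $\{t_b,t_a\}$ are non-symmetric lifts of $\{a,b\}\pto\emptyset$ at $\emptyset$.

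The paper's argument works forwards and uses uniqueness.
\begin{itemize}
\item Keep your reduction to the case where $\sigma s$ and $\sigma s'$ are concurrent, and set $x_0=[s']\setminus\{s,s'\}$. Then $\sigma x_0\cup\{\sigma s'\}$ is a configuration, since by order-reflection everything below $\sigma s'$ lies in $\sigma x_0$.
\item Lift the chain $\sigma x_0 \cov^{\sigma s'} \sigma x_0\cup\{\sigma s'\} \cov^{\sigma s} \sigma[s']$ by pseudo-receptivity. This gives $x_0 \cov^{t'} x_2 \cov^{t} x''$ and a symmetry $\theta:\sigma x''\cong\sigma[s']$ fixing $\sigma x_0$, with $\sigma t'\mapsto\sigma s'$ and $\sigma t\mapsto\sigma s$.
\item Now $([s']\pto x_0,\id)$ and $(x''\pto x_0,\theta)$ are two pseudo-lifts of $\sigma[s']\pto\sigma x_0$ at $x_0$. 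Essential uniqueness yields a symmetry $\varphi:[s']\cong x''$ with $\sigma\varphi=\theta^{-1}$, hence $\varphi(s)=t$ and $\varphi(s')=t'$.
\item Symmetries are order-isomorphisms, so $t<t'$. This contradicts that $x_2=x_0\cup\{t'\}$ is down-closed and does not contain $t$.
\end{itemize}
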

Recall that in an event structure the notation $x \cov^e y$ means that $e \notin y$ and $y = x \cup \{e\}$.
\begin{proof}
 We sketch a proof of the lemma. Necessarily $x \cov^s x_1 \cov^{s'} x'$ for some $x, x_1, x' \in \confsym{S}$. Suppose for a contradiction that $s \imc s'$ and $\sigma s$ and $\sigma s'$ are concurrent, so both $\sigma x \cov^{\sigma s} \sigma x_1 \cov^{\sigma s'} \sigma x'$ and also $\sigma x \cov^{\sigma s'} y \cov^{\sigma s} \sigma x'$ for some $y \in \confsym{A}$. By applying pseudo-receptivity to the inclusions $\sigma x \hookrightarrow y \hookrightarrow \sigma x'$ we obtain $x \cov^{t'} x_2 \cov^{t} x''$ and a symmetry $\theta : \sigma x'' \cong \theta x'$ where $\theta$ fixes $\sigma x$ and acts by $\sigma t' \mapsto \sigma s'$ and $\sigma t \mapsto \sigma s$. We now have two pseudo-lifts of the inclusion $\sigma x \hookrightarrow \sigma x'$, namely $x \hookrightarrow x'$ and $x \hookrightarrow x''$, so there must be a symmetry $\varphi : x \cong x'$ which maps to $\theta$. Necessarily $\varphi$ acts by $s \mapsto t$ and $s' \mapsto t'$. Since $s \imc s'$ we must have $t \imc t'$ which is a contradiction as $t'$ comes before $t$ in the chain $x \cov^{t'} x_2 \cov^{t} x''$. 
\end{proof}

Next we consider a lifting condition for symmetries, also imported from \cite{lics14}.
\begin{definition}
If $\sigma : S \to A$ is a pre-strategy, its \emph{saturation}  is the pre-strategy $\sat{\sigma} : \sat{S} \to A$ obtained as the right projection in the pseudo-pullback below.
  \[\begin{tikzcd}[row sep=0em]
	& {\sat{S}} & \\
	S && A \\
	& A
	\arrow["\pi_1"',pos=0.125, from=1-2, to=2-1]
	\arrow["{\sat{\sigma}}", pos=0.125,from=1-2, to=2-3]
	\arrow["\sigma"', from=2-1, to=3-2]
  \arrow["\scalebox{2}{$\lrcorner$}"{anchor=center, pos=0.125, rotate=-45}, draw=none, from=1-2, to=3-2]
	\arrow["{\id_{A}}", from=2-3, to=3-2]
\end{tikzcd}\]
\end{definition}
The saturation process can be understood as freely adding to $\confsym{S}$ a lift for every
isomorphism in $\confsym{A}$, in a way that is compatible with the causal structure of $S$ (in the sense that this free completion is performed internally in $\ESS$). Indeed, configurations of $\sat{S}$ corresponds to pairs of a configuration $x$ of $S$ and a symmetry $\sigma x \cong y$ and such a pair is mapped to $y$ under $\sat{\sigma}$. There is a canonical, strict map of strategies $\eta_\sigma : \sigma \to \sat{\sigma}$ of pre-strategies
 such that $\eta_\sigma x$ is the configuration corresponding to the pair $(x, \id_{\sigma x})$. 
 (The existence of $\eta_\sigma$ also follows directly from the universal property of the pseudo-pullback.)
 
A pre-strategy is said to be \emph{saturated} if
$\eta_\sigma$ is a strong equivalence of pre-strategies, i.e. it admits a pseudo-inverse $\mathrm{act}: \sat{\sigma} \to \sigma$ which is a strict map. The existence of $\mathrm{act}$ makes $\confsym{\sigma}$ an \emph{isofibration} but the converse does not hold: some isofibrations are not saturated because the isomorphism lifts do not organize themselves into a map of event structures. Here is an example.

\begin{example}
\label{ex:isofib-not-saturated}
The following pre-strategy determines a Grothendieck fibration (so in
particular an isofibration) between the induced Scott categories, but an
easy argument shows that it is not saturated.
\[
\begin{tikzpicture}[baseline]
\node[negnode] (n0) at (0,1.1) {};
\node[negnode] (n1) at (2,1.1) {};
\node (symn) at (1,1.1) {$\cong$};
\draw[conflict] (n0) -- (symn);
\draw[conflict] (symn) -- (n1);
\node[posnode=0] (p0) at (0,0) {};
\node[posnode=1] (p1) at (2,0) {};
\draw[strat-causality] (n0) -- (p0);
\draw[strat-causality] (n1) -- (p1);
\begin{scope}[on background layer]
\node[frame,fill=gray!5,fit=(n0)(n1)(p0)(p1)] {};
\end{scope}
\end{tikzpicture}
\qquad\longrightarrow\qquad
\begin{tikzpicture}[baseline]
\node[negnode] (m) at (1,1.1) {};
\node[posnode=0] (q0) at (0,0) {};
\node[posnode=1] (q1) at (2,0) {};
\node (symq) at (1,0) {$\cong$};
\draw[strat-causality] (m) -- (q0);
\draw[strat-causality] (m) -- (q1);
\begin{scope}[on background layer]
\node[frame,fill=gray!5,fit=(m)(q0)(q1)(symq)] {};
\end{scope}
\end{tikzpicture}
\]
\end{example}

In general, note that the saturation of a pre-strategy is itself saturated: this follows from elementary reasoning with pseudo-pullbacks. We shall make use of the fact that any pre-strategy is weakly equivalent to its saturation. This observation appears to be new to this paper. 
\begin{lemma}\label{lem:satweakequiv}
  For any pre-strategy $\sigma : S \to A$,  $\sigma$ and $\sat{\sigma}$ are weakly equivalent via the maps $\eta_\sigma$ and $\pi_1$.
\end{lemma}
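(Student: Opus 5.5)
We must show that $\eta_\sigma : \sigma \to \sat{\sigma}$ and $\pi_1 : \sat{\sigma} \to \sigma$ are maps of pre-strategies which form an equivalence $S \simeq \sat{S}$ in $\ESS$. The plan is to verify each required property directly from the universal property of the pseudo-pullback and the explicit description of its configurations. Recall that a configuration of $\sat{S}$ is a triple $(x, y, \theta)$ with $x \in \config{S}$, $y \in \config{A}$ and $\theta : \sigma x \tosym A y$. The synchronizability condition here is automatic: $\id_A$ is the identity and symmetries are order-isomorphisms on configurations, so the graph of $\theta$ inherits an order making both projections monotone.

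\textbf{Step 1: map conditions.} The map $\eta_\sigma$ is strict, so it is a map of pre-strategies, with $\pi_1 \circ \eta_\sigma = \id_S$ and $\sat{\sigma} \circ \eta_\sigma = \sigma$. This comes from the universal property applied to the cone $S \xleftarrow{\id} S \xrightarrow{\sigma} A$, which commutes strictly (hence up to homotopy) over the cospan $S \xrightarrow{\sigma} A \xleftarrow{\id} A$. For $\pi_1$ to be a map of pre-strategies $\sat{\sigma} \to \sigma$ we need $\sigma \circ \pi_1 \sim \sat{\sigma}$. This is exactly the homotopy $\sigma \pi_1 \sim \id_A \circ \sat{\sigma}$ that comes with the pseudo-pullback square.

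\textbf{Step 2: the two composites.} One composite is trivial: $\pi_1 \circ \eta_\sigma = \id_S$ holds strictly. For the other, we need $\eta_\sigma \circ \pi_1 \sim \id_{\sat{S}}$. A homotopy is a relation, and it exists precisely when, for each configuration $z$ of $\sat{S}$, the bijection $\id_{|z} \circ (\eta_\sigma\pi_1)_{|z}^{-1}$ between $\eta_\sigma\pi_1 z$ and $z$ lies in the symmetry of $\sat{S}$. Concretely, for $z = \bijgraph(\theta)$ with data $(x,y,\theta)$, the image $\eta_\sigma\pi_1 z$ is $\bijgraph(\id_{\sigma x})$ with data $(x,\sigma x,\id)$. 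The candidate bijection $\varphi : \bijgraph(\id_{\sigma x}) \to \bijgraph(\theta)$ sends $(s,\sigma s) \mapsto (s, \theta\sigma s)$.

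By the characterization of $\sym{\sat{S}}$ in \S\ref{subsec:synchronization}, $\varphi$ is a symmetry provided the induced dashed bijections are symmetries of $S$ and $A$. On the $S$-side the induced bijection is $\id_x$. On the $A$-side it is $\theta : \sigma x \to y$, which is a symmetry of $A$ by assumption. Hence $\varphi \in \sym{\sat{S}}$, and the family of these $\varphi$ constitutes the homotopy. An alternative route is to use the 2-dimensional uniqueness part of the universal property: two maps into $\sat{S}$ whose projections agree up to homotopy are homotopic. I would prefer the concrete check above, since the paper stated only the 1-dimensional universal property.

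\textbf{Main obstacle.} The main obstacle is making sure that the description of $\sym{\sat{S}}$ really identifies $\varphi$ as a symmetry, and that these componentwise bijections form a legitimate homotopy. The commuting-triangle condition with $\eta_\sigma\pi_1$ and $\id$ holds by construction, since $\varphi$ is defined as $\id_{|z}\circ(\eta_\sigma\pi_1)_{|z}^{-1}$. What needs care is the bookkeeping: events of $\sat{S}$ are pairs $(s,a)$, and $\eta_\sigma\pi_1$ acts on them by $(s,a) \mapsto (s,\sigma s)$. Note that the equivalence is genuinely weak and not strong, because $\pi_1$ is not strict: $\sigma\pi_1 \ne \sat{\sigma}$ in general.
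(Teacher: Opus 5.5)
Your proof is correct and is essentially the paper's argument (``elementary reasoning using the universal property of the pseudo-pullback''); your Step 2 spells out the 2-dimensional part of that universal property by checking directly that $(s,\sigma s)\mapsto(s,\theta\sigma s)$ lies in $\sym{\sat{S}}$, which is what gives $\eta_\sigma\pi_1\sim\id$. One small gloss: automatic synchronizability also uses that $\sigma$ reflects the order on configurations, not only that $\theta$ is an order-isomorphism, though the paper already takes this description of the configurations of $\sat{S}$ for granted.
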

\begin{proof}
  Elementary reasoning using the universal property of the pseudo-pullback.
\end{proof}

Next we state the key result of \cite{lics14}, that gives sufficient conditions on a pre-strategy for the unit laws to hold \emph{up to strong equivalence}.
\begin{theorem}[\cite{lics14}]
\label{thm:old-characterization}
  Let $\sigma : S \to A$ be a pre-strategy which is saturated, innocent, and strongly receptive (i.e. both receptive and pseudo-receptive). Then $\sigma$ is a strong strategy, i.e. the map $\lambda_\sigma : \sigma \to \cc_A \odot \sigma$ is a strong equivalence of pre-strategies. 
\end{theorem}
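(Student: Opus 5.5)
The plan is to construct an explicit strict map $\gamma : \cc_A \comp \sigma \to \sigma$ and show that it is a pseudo-inverse of $\lambda_\sigma$ up to homotopy. The first step is to describe the configurations of $\cc_A \comp \sigma$. By Lemma~\ref{lem:conf-copycat}, a configuration of $\CC_A$ is a morphism of $\confsym{A}$. After synchronizing with $x \in \config{S}$ along the $A^\perp$ side and hiding, a configuration of $\cc_A\comp\sigma$ amounts to (the hidden completion of) a triple $(x, \theta, y)$. Here $x \in \config{S}$, and $\theta$ is a symmetry relating $\sigma x$ to the copycat's left component. The configuration $y\in\config{A}$ is reached from $\sigma x$ by a Scott morphism $\sigma x \negrev w \tosym{A} u \posincl y$, subject to the synchronization (deadlock-freeness) constraint. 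Since Scott morphisms factor uniquely, this factorization is canonical, and I will use it throughout.

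The second step constructs $\gamma$ on configurations, using the three hypotheses in turn along this factorization. Receptivity lifts the negative reverse inclusion $w \negrev \sigma x$ uniquely to $x_1 \subseteq x$ with $\sigma x_1 = w$. Here I read receptivity on the restriction side; since $\sigma$ is rigid on the relevant parts, innocence gives this. Saturation, via $\mathrm{act} : \sat{\sigma}\to\sigma$, turns the pair $(x_1, w \tosym{A} u)$ into a configuration $x_2$ with $\sigma x_2 = u$ on the nose. Innocence then lifts the positive inclusion $u \posincl y$ uniquely to $x_2 \subseteq x_3$ with $\sigma x_3 = y$. The resulting assignment sends the configuration to $x_3$ and lies strictly over $y$. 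I must then check that this assignment preserves inclusions and symmetries, so that it is induced by a map of event structures with symmetry; the plan is to use rigidity and the representation of maps via their action on configurations. Pseudo-receptivity enters to make the construction functorial on symmetries. It guarantees that the lifts are essentially unique up to symmetry, so that symmetries of the composite are sent to symmetries of $S$. Lemma~\ref{lem:minusinnocence} is used to show that the order on the lift agrees with the order forced by the synchronization.

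The third step checks the two composites. The composite $\gamma\circ\lambda_\sigma$ is the identity on the nose: $\lambda_\sigma x$ corresponds to the trivial triple $(x,\id,\sigma x)$, and all three lifts are then trivial. For $\lambda_\sigma\circ\gamma \sim \id$, both maps lie strictly over $A$. I will exhibit, for each configuration $z$ of the composite, a symmetry between $z$ and $\lambda_\sigma\gamma z$ in the symmetry of the pseudo-pullback-plus-hiding. This symmetry is given on the $S$ side by the symmetry supplied by $\mathrm{act}$ and by pseudo-receptivity, and on the $A$ side by the identity. Compatibility is then checked with the explicit description of $\sym P$ from \S\ref{subsec:synchronization} and of hidden symmetries from \S\ref{subsec:hiding}.

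The main obstacle will be the second step: showing that $\gamma$ is a genuine map of event structures with symmetry rather than merely a function on configurations, and in particular that it is stable under inclusion, even though $\gamma$ first restricts along negative events and then extends. The problem is that a larger composite configuration may have a different negative part, giving a different $w$. I would first prove monotonicity for covering steps $z \cov z'$ by case analysis on the polarity and on which side the new event lies, using receptivity and $\boxminus$-innocence to commute negative and positive steps.
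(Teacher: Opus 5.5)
The paper does not prove this theorem; it is imported from \cite{lics14}, so your argument has to stand on its own. Its overall shape, a strict pseudo-inverse $\gamma$ of $\lambda_\sigma$ built by lifting, is right. But there is a genuine gap at the first step: you have the relation between the visible configuration $y$ and $\sigma x$ backwards.

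In $\cc_A\comp\sigma$, a positive move of $A$ is first played in $S$ and only then copied to the visible side. A negative move first appears on the visible side and only then is forwarded to $S$. So $y$ may \emph{lack} positive events of (a symmetric image of) $\sigma x$ and may have \emph{extra} negative ones. The relevant Scott morphism is therefore $y \negrev w \tosym{A} u \posincl \sigma x$, not $\sigma x \negrev w \tosym{A} u \posincl y$. Whatever convention you read into Lemma~\ref{lem:conf-copycat}, the game with one positive event settles this: copycat's right-hand copy of that event depends on its left-hand copy.

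With your orientation, Step~2 must restrict $x$ along negative events and extend it along positive events, and no hypothesis provides either:
\begin{itemize}
\item receptivity lifts $y\negrev\sigma x$ at $x$, i.e.\ it \emph{extends} $x$ by negative events;
\item innocence \emph{restricts} $x$ along positive events;
\item pre-strategies need not be rigid.
\end{itemize}
Both of your lifting steps actually fail on simple examples.
\begin{itemize}
\item Take $S=\varnothing$ over the game with a single positive event $a$; it satisfies all the hypotheses. Your description admits $y=\{a\}$ via $\varnothing\posincl\{a\}$, but this is not a configuration of the composite and nothing lies over it.
\item Take $A$ with concurrent events $a^-$, $b^+$, and $S$ with $s_1^-\le s_2^+$ mapped to $a$ and $b$. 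Removing $a$ from $\sigma\{s_1,s_2\}$ leaves $w=\{b\}$, and no sub-configuration of $\{s_1,s_2\}$ lies over it.
\end{itemize}

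Once the orientation is corrected, the hypotheses fit exactly. The morphism is then one into $\sigma x$, which is the kind of morphism the fibration property lifts at $x$. The construction becomes:
\begin{itemize}
\item innocence gives the unique $x_u\subseteq x$ with $\sigma x_u=u$;
\item $\mathrm{act}$ applied to the configuration $(x_u,\ u\tosym{A}w)$ of $\sat{S}$ gives $x_w$ with $\sigma x_w=w$;
\item receptivity extends $x_w$ uniquely along $w\negincl y$ to some $x'$ with $\sigma x'=y$;
\item $\gamma$ sends the composite configuration to $x'$.
\end{itemize}
What remains is the step you flagged yourself: showing that this assignment is monotone and comes from a map of event structures with symmetry. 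That is where pseudo-receptivity, and hence $\boxminus$-innocence, has to be used.

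A smaller point: $\gamma\circ\lambda_\sigma$ need not be the identity on the nose, because $\mathrm{act}$ is only a pseudo-inverse of $\eta_\sigma$. You get $\gamma\lambda_\sigma x$ symmetric to $x$ over $\id_{\sigma x}$, which is all a strong equivalence requires.
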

These conditions are sufficient but not necessary. Below we will give a complete characterization of strong strategies, via a slightly weaker set of conditions (Theorem~\ref{thm:strong-characterization} in \S\ref{subsec:strong-characterization}).

\begin{lemma}
\label{lem:stability}
The lifting conditions of Definition~\ref{def:lifting-properties} satisfy the stability properties we list below.
\begin{enumerate} 
\item Innocent pre-strategies are closed under weak equivalence. 
\item Pseudo-receptive pre-strategies are closed under weak equivalence.
\item Saturated pre-strategies are closed under strong equivalence, but not weak equivalence.
\end{enumerate}
\end{lemma}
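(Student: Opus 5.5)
Fix a weak equivalence between $\sigma : S \to A$ and $\sigma' : S' \to A$, given by maps $f : S \to S'$ and $g : S' \to S$ with $\sigma' f \sim \sigma$, $\sigma g \sim \sigma'$, $gf \sim \id_S$ and $fg \sim \id_{S'}$. Everything will be transported through the Scott-category 2-functor $\confsym{-}$ of \S\ref{sec:polarity}. The homotopies become natural isomorphisms whose components are the canonical symmetries, e.g.\ $\sigma' f x \tosym{A} \sigma x$ and $gfx \tosym{S} x$. So $\confsym f$ and $\confsym g$ form an equivalence of categories over $\confsym A$ up to a specified natural isomorphism. For (1) and (2) the plan is: take a lifting problem at $x' \in \config{S'}$, transport it to $g x'$ along these isomorphisms, solve it in $S$, and push the solution back along $f$.

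For (1), suppose $\sigma$ is innocent and let $y \posincl \sigma' x'$. Conjugating by the symmetry $\theta : \sigma' x' \tosym{A} \sigma g x'$ gives a positive inclusion $\theta y \posincl \sigma g x'$. Innocence of $\sigma$ gives $z \posincl g x'$ with $\sigma z = \theta y$. Applying $f$ yields $fz \subseteq fgx'$, and there is a symmetry $\psi : fgx' \tosym{S'} x'$. Its restriction along $fz$ gives $\psi(fz) \subseteq x'$, and this is a configuration because restriction of a symmetry to a sub-configuration is a symmetry. I then need to check that $\sigma'$ maps $\psi(fz)$ exactly onto $y$. This follows because the composite of canonical symmetries $\sigma' x' \to \sigma gx' \to \sigma' fgx' \to \sigma' x'$ is the identity, since homotopies are unique and compose. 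It also follows that $x' \setminus \psi(fz)$ is positive, because symmetries preserve polarity.

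Alternatively, innocence can be argued via the immediate-causality formulation from the Remark: $f$ and $g$ reflect and preserve immediate causality up to symmetry. I prefer the configuration argument because it is uniform with (2).

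For (2), pseudo-receptivity says that negative reverse inclusions $y \negrev \sigma' x'$ have essentially unique pseudo-lifts. Existence goes exactly as in (1): use the bisimulation/extension property of symmetries to extend along the negative inclusion, and then compose the pseudo-lift's isomorphism with the canonical symmetries.

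Essential uniqueness is the main obstacle. Given two pseudo-lifts $(x'_1 \negincl x', \theta_1)$ and $(x'_2 \negincl x', \theta_2)$, I will map both by $g$. After conjugation they become pseudo-lifts of the same morphism at $gx'$, so pseudo-receptivity of $\sigma$ yields a unique $\varphi : gx'_1 \tosym{S} gx'_2$ over the correct symmetry of $A$. I then transport $\varphi$ by $f$ and conjugate with $\psi$ to obtain a symmetry $x'_1 \tosym{S'} x'_2$, and check the compatibility equations using naturality and uniqueness of homotopies. Uniqueness of this symmetry in $S'$ follows from Lemma~\ref{lem:ess-setoid-fibres}: essential fibres of $\confsym{\sigma'}$ are setoids, so any two such isomorphisms coincide. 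This means uniqueness is automatic and only existence needs work.

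Finally, for (3), closure under strong equivalence: a strict equivalence between $\sigma$ and $\sigma'$ induces, by the universal property of the pseudo-pullback, a strict equivalence between $\sat{\sigma}$ and $\sat{\sigma'}$ commuting with $\eta$. Then $\mathrm{act}' := f \circ \mathrm{act} \circ \sat{g}$ is strict. It is pseudo-inverse to $\eta_{\sigma'}$, as one sees by composing homotopies.

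For the failure under weak equivalence, Lemma~\ref{lem:satweakequiv} reduces the task to exhibiting a non-saturated $\sigma$, since $\sigma \weakequiv \sat{\sigma}$ and $\sat{\sigma}$ is saturated. Example~\ref{ex:isofib-not-saturated} provides one; even the pre-strategy (c) of the earlier example should do, as it is weakly equivalent to the saturated (b).
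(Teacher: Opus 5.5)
The paper states this lemma without proof, so there is no argument of its own to compare with. Judged on its own terms, your proposal is correct in substance.

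\emph{Parts (1) and (2).} Transporting lifting problems along $\confsym{f}$, $\confsym{g}$ and the canonical symmetries works as you describe.
\begin{itemize}
\item In (1), the lift you build is just $g|_{x'}^{-1}(z)$. You show it is a configuration by exhibiting it as $\psi(fz)$, and your identity-composite argument correctly gives $\sigma'(\psi(fz)) = y$.
\item In (2), the conjugated problem at $gx'$ is a negative reverse inclusion followed by a symmetry. The bisimulation property rewrites this as a symmetry followed by a negative reverse inclusion. Pseudo-lifts are insensitive to such an initial isomorphism, so pseudo-receptivity of $\sigma$ applies.
\item You are right that uniqueness in (2) comes for free from Lemma~\ref{lem:ess-setoid-fibres}. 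Indeed $\confsym{\sigma'}$ is faithful.
\end{itemize}
Two small points. First, pseudo-lifts need not be plain reverse inclusions; in general they have the form $v \negrev w \toiso x'$. Your argument still goes through verbatim, using functoriality of $\confsym{g}$ and $\confsym{f}$ and naturality of $\confsym{f}\confsym{g} \cong \id$. Second, in (3) the claim $\sat{f}\,\sat{g} \sim \id$ needs the two-dimensional part of the universal property of the pseudo-pullback: homotopic projections yield homotopic mediating maps. This is not the part spelled out in the paper, but it holds in $\ESS$.

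\emph{Part (3): an incorrect claim.} Pre-strategy (b) of the earlier example is not saturated.
\begin{itemize}
\item Both events of (b) lie over the event $0$ of $A$.
\item The saturation of (b) contains a configuration, namely a singleton of (b) paired with the symmetry $\{0\} \cong \{1\}$ of $A$, which $\sat{\sigma}$ sends to $\{1\}$.
\item A strict $\mathrm{act}$ would have to send this configuration to a configuration of (b) lying over $\{1\}$, and there is none.
\end{itemize}
The same reasoning shows that (c) and (d) are not saturated either.

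Your counterexample for (3) is unaffected, since Lemma~\ref{lem:satweakequiv} together with any non-saturated pre-strategy suffices. Pre-strategy (c) does work, but its saturated partner is the saturation of (c): two conflicting, symmetric events lying over $0$ and $1$. It is not (b). Using (c) this way is in fact preferable to Example~\ref{ex:isofib-not-saturated}, because its non-saturation is immediate rather than resting on an argument the paper omits.
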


\subsection{The collapse of a pre-strategy}
\label{subsec:collapse}

Our first step is to describe an operation on pre-strategies which collapses `redundant' lifts of negative inclusions. This is an essential construction for this paper.

\begin{theorem}
\label{thm:collapse}
Let $\sigma : S \to A$ be an innocent and pseudo-receptive pre-strategy. There is a pre-strategy $\collapse(\sigma) : \collapse(S) \to A$, strongly equivalent to $\sigma$, which satisfies the uniqueness part of the receptivity condition.
\end{theorem}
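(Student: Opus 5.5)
The plan is to build $\collapse(S)$ as a quotient of $S$ by an equivalence relation on events that identifies redundant lifts of negative inclusions, and then to check that the quotient map and a section form a strong equivalence.

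\textbf{Step 1: the relation.} Say $s \equiv s'$ when there exist configurations $x \cov^{s} x_1$ and $x \cov^{s'} x_1'$ of $S$ with the following properties. Both extensions lie over the same negative extension $\sigma x \cov^{\sigma s} \sigma x_1$, so that $\sigma s = \sigma s'$ and $s$ is negative. Moreover, the unique isomorphism $\varphi : x_1 \cong x_1'$ provided by essential uniqueness of pseudo-lifts (pseudo-receptivity, taking $\theta = \id$) is the identity on $x$ and sends $s \mapsto s'$.

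Close this relation under the congruence generated by causal dependency. Concretely, take the least equivalence on $S$ containing these pairs such that whenever $s \equiv s'$ via $\varphi$, every event in the future of $s$ is identified with its transport under the bisimulation extension of $\varphi$. I would define this more cleanly at the level of configurations: call $x \approx y$ if $\sigma x = \sigma y$ and the unique element of the essential fibre between them, given by Lemma~\ref{lem:ess-setoid-fibres}, is an isomorphism over $\id_{\sigma x}$. Then $\collapse(S)$ has as configurations the $\approx$-classes.

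\textbf{Step 2: an event structure with symmetry.} Using Lemma~\ref{lem:minusinnocence} ($\boxminus$-innocence) together with innocence, I expect immediate causal links into and out of negative events to be reflected by $\sigma$. This should make the classes behave like a prime event structure. Events are $\approx$-classes of prime configurations $[s]$, with order and consistency inherited, and symmetry is the image of $\sym S$. The bisimulation axiom transfers because $\approx$ is compatible with extensions; this compatibility is essentially the essential uniqueness of pseudo-lifts applied stepwise along covering chains.

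Local injectivity of $\collapse(\sigma)$ holds because within one configuration two distinct events with the same image cannot be identified. Uniqueness of lifts of negative inclusions now holds by construction: two lifts of $y \negrev \sigma x$ are related by an isomorphism over the identity, hence are $\approx$-equivalent.

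\textbf{Step 3: the strong equivalence.} The quotient $q : S \to \collapse(S)$ is a strict map. For the inverse, choose a section $r$ on events by picking representatives compatibly along causal histories, for instance by well-founded induction on $[s]$ using finite causes. This makes $r$ a map with $\sigma r = \collapse(\sigma)$. Then $qr = \id$ and $rq \sim \id$, where the homotopy is exactly the fibre isomorphism witnessing $x \approx rq(x)$; it is a symmetry bijection, again by pseudo-receptivity.

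The main obstacle is Step 2, and in particular making the identification of events well defined and consistent. One must show that identifying two negative lifts forces a coherent identification of their entire futures, and that the quotient does not create causal loops or identify events within a single configuration. This is where innocence is genuinely needed: positive events must depend immediately only on what $\sigma$ sees. I would attack it first by proving a lemma that $\approx$ is preserved under positive and negative extensions, reducing everything to single covering steps.
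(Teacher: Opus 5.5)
\paragraph{The gap.} Your Step~1 relation identifies too much, and Steps~2 and~3 cannot be built on top of it. Both versions identify \emph{positive} events. The configuration-level $\approx$ identifies any two configurations related by a symmetry over $\id_{\sigma x}$. The event-level version transports identifications of negative events to their entire future. Pseudo-receptivity, however, gives coherence only for negative extensions.

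\paragraph{A counterexample.} Take $A=\{a,c\}$ with $a,c$ positive, concurrent, and trivial symmetry. Take $S=\{s_1,s_2,u_1,u_2\}$ with trivial order, $s_1\mathrel{\#}s_2$, $u_1\mathrel{\#}u_2$, $\sigma(s_i)=a$ and $\sigma(u_j)=c$. Let $\sym S$ consist of the identities, the label-preserving bijections $\{s_1\}\cong\{s_2\}$, $\{u_1\}\cong\{u_2\}$, $\{s_1,u_1\}\cong\{s_2,u_2\}$, $\{s_1,u_2\}\cong\{s_2,u_1\}$, and their inverses. This is a valid symmetry: it is closed under composition and restriction, and every bijection extends. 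Also $\sigma$ is innocent, and pseudo-receptive since there are no negative events.

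\begin{itemize}
\item The $\approx$-classes $\{\{s_1,u_1\},\{s_2,u_2\}\}$ and $\{\{s_1,u_2\},\{s_2,u_1\}\}$ are two distinct minimal upper bounds of the classes of $\{s_1\}$ and $\{u_1\}$. So they are not the configurations of any event structure.
\item If you instead take events to be classes of primes, you get $A$ itself. No strict $r:A\to S$ then satisfies $r\sigma\sim\id_S$, because configurations in the twisted class not containing $\{r(a),r(c)\}$ are not symmetric to it.
\end{itemize}
So your quotient is not strongly equivalent to $\sigma$, whereas the actual collapse of this $\sigma$ is $\sigma$ itself.

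The event-level version fails too. Put a copy of this $S$ above each of two conflicting lifts of a negative event. The bisimulation extensions of the negative identification are not unique, and transporting along all of them collapses the twist.

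Separately, your Step~2 expectation is off. $\sigma$ reflects immediate causality out of positive events (innocence) and between negative events (Lemma~\ref{lem:minusinnocence}). It does not reflect links from negative to positive events, which are exactly the links a strategy adds.

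\paragraph{The missing idea.} Collapse only negative events, and only between configurations with literally the same positive events: represent $x$ by $x^{\bplus}\cup\sigma x^{\bminus}$.
\begin{itemize}
\item By $\boxminus$-innocence, whenever $x_0\subseteq x$ and $\sigma x_0\negincl y\subseteq\sigma x$, there is a unique $z$ with $x_0\negincl z\subseteq x$ and $\sigma z=y$. This makes these sets a stable family. Its primes $[s]^{\bplus}\cup\sigma[s]^{\bminus}$ are the events of $\collapse(S)$.
\item Apply pseudo-receptivity to the negative inclusion $[x^{\bplus}]\negincl x$. It shows that configurations with the same representative are symmetric over the identity, by a symmetry fixing $[x^{\bplus}]$. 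This is what makes the transported symmetry an isomorphism family, and it supplies the homotopy for the strong equivalence.
\item Uniqueness of negative lifts is then immediate. Two negative extensions of a configuration of $\collapse(S)$ share its positive part, so if they have the same image they coincide.
\end{itemize}
Your Step~3 idea of choosing representatives compatibly along histories matches the paper's Zorn's-lemma construction of the inverse: extend $[s)\cong f[p)$ to $[s]\cong[s_0]$ and set $f(p)=s_0$. It would work once the quotient is corrected.
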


We construct $\collapse(\sigma)$ using the machinery of stable families (Appendix~\ref{app:stable-families}), convenient for constructing event structures and symmetries. A map $\sigma \to \collapse(\sigma)$ is obtained easily and canonically, but our construction of the pseudo-inverse $\collapse(\sigma) \to \sigma$ makes an essential use of Zorn's lemma, equivalent to the axiom of choice. All details of this construction are given in Appendix~\ref{app:collapse}.

\begin{corollary}
\label{cor:collapse}
Suppose $\sigma : S \to A$ is an innocent, pseudo-receptive strategy which satisfies the existence part of the receptivity condition. Then $\collapse(\sigma)$ is innocent, pseudo-receptive, and receptive. 
\end{corollary}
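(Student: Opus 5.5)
The plan is to get everything from Theorem~\ref{thm:collapse} and Lemma~\ref{lem:stability}, and to check by hand only the existence part of receptivity. By Theorem~\ref{thm:collapse}, $\collapse(\sigma)$ is strongly equivalent to $\sigma$, and it already satisfies the uniqueness half of receptivity. A strong equivalence is in particular a weak equivalence. So Lemma~\ref{lem:stability}(1) and (2) give at once that $\collapse(\sigma)$ is innocent and pseudo-receptive. What remains is that every negative reverse inclusion $y \negrev \collapse(\sigma)\, x'$ has at least one lift at $x' \in \confsym{\collapse(S)}$; combined with uniqueness, this gives receptivity.

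For existence I transport lifts along the strict equivalence. Let $f : \sigma \to \collapse(\sigma)$ and $g : \collapse(\sigma) \to \sigma$ be the strict maps, so $\collapse(\sigma)\circ f = \sigma$ and $\sigma \circ g = \collapse(\sigma)$. Fix $x'$ and a negative inclusion $y \negincl \collapse(\sigma)x'$ in $\config{A}$. Then $\sigma(g x') = \collapse(\sigma)x'$, so the existence part of receptivity for $\sigma$ yields $z \negincl gx'$ in $\config{S}$ with $\sigma z = y$. Applying the strict map $f$ gives $fz \subseteq fgx'$ with $\collapse(\sigma)(fz) = y$. So $y$ lifts below $fgx'$ on the nose.

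The remaining task is to move this lift from $fgx'$ to $x'$. Since $fg \sim \id$, there is a symmetry $\psi : fgx' \tosym{} x'$ in $\collapse(S)$ satisfying $\collapse(\sigma)\psi = \id$. This holds because homotopies are unique and both maps lie strictly over $A$. Restrict $\psi$ to $fz$. Its image $w \subseteq x'$ is a configuration: restriction preserves symmetries by congruence, and the image of a down-closed subset under a bijection in the symmetry is a sub-configuration of $x'$. Because $\collapse(\sigma)\psi$ is the identity, we get $\collapse(\sigma)w = y$, and the events of $x' \setminus w$ map onto $\collapse(\sigma)x' \setminus y$, so they are negative. Thus $w \negincl x'$ is the required lift.

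The main obstacle is this last transport step: I have to make sure that a homotopy between strict maps of pre-strategies really is a symmetry lying over the identity of $A$. That follows because $\collapse(\sigma)\circ(fg) = \collapse(\sigma)$, so $\collapse(\sigma)$ sends $\psi$ to $\collapse(\sigma)_{\mid x'} \circ (\collapse(\sigma)_{\mid fgx'})^{-1}$ composed appropriately, which is the identity on $y$. If this were delicate, an alternative would be to read existence directly off the explicit stable-family construction in Appendix~\ref{app:collapse}, where configurations of $\collapse(S)$ are equivalence classes of configurations of $S$ and down-closure is inherited.
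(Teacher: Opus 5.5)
Your handling of innocence, pseudo-receptivity and the uniqueness half of receptivity (Lemma~\ref{lem:stability} plus Theorem~\ref{thm:collapse}) is the paper's argument. The gap is in the existence half: you prove it in the wrong direction. In the paper, $y \negrev \collapse(\sigma)x'$ means $\collapse(\sigma)x' \subseteq y$ with $y \setminus \collapse(\sigma)x'$ negative. It is a morphism \emph{into} $\collapse(\sigma)x'$, and a lift at $x'$ is a configuration $x'' \supseteq x'$ with $\collapse(\sigma)x'' = y$ (compare the proof of Lemma~\ref{lem:existence-part-for-free}). In your second paragraph you switch to $y \negincl \collapse(\sigma)x'$, i.e.\ $y \subseteq \collapse(\sigma)x'$, and build $w \negincl x'$ \emph{below} $x'$. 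That is not receptivity. Nor does the hypothesis on $\sigma$ give you $z \negincl gx'$ with $\sigma z = y$, because receptivity only extends configurations upward by Opponent moves. The downward property in fact fails for very simple strategies. Suppose $A$ has a negative $a$ concurrent with a positive $b$, and $S$ plays $b$ only after $a$. Then for $x = \{s_a, s_b\}$ the configuration $\{b\} \subseteq \sigma x$ has no lift below $x$.

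Reversing the direction does not rescue the argument as written. From $gx' \negincl z$ with $\sigma z = y$ you get $fgx' \subseteq fz$. To move to $x'$ you then need the \emph{extension} property of $\psi : fgx' \cong x'$ rather than restriction. An extension $\psi' : fz \cong w$ only gives a symmetry $\collapse(\sigma)\psi' : y \cong \collapse(\sigma)w$ fixing $\collapse(\sigma)x'$, so $w$ is a pseudo-lift, not a lift. This cannot be fixed abstractly, because strict existence of negative lifts is not invariant under strong equivalence. Here is a counterexample. Let $A$ have a positive $p$ below two conflicting negatives $a_0, a_1$ with $\{p,a_0\} \cong \{p,a_1\}$. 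Let $S'$ be $A$ plus a positive $q'$ in conflict with $p$ and a single negative $c_1 > q'$, with $\{q'\} \cong \{p\}$ and $\{q',c_1\} \cong \{p,a_0\} \cong \{p,a_1\}$, mapped to $A$ by $q' \mapsto p$, $c_1 \mapsto a_1$. The inclusion $A \hookrightarrow S'$ and this projection form a strong equivalence with $\id_A$. The projection is innocent, pseudo-receptive and even collapsed, yet $\{q'\}$ has no lift to $\{p,a_0\}$.

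What makes the corollary work is the specific equivalence of Lemma~\ref{lem:collapseequiv} (its maps are named the other way round from yours). There the composite $\collapse(S) \to S \to \collapse(S)$ is literally the identity, i.e.\ your $fg = \id$, so no transport is needed: $x' = fgx' \subseteq fz$ and $\collapse(\sigma)(fz) = \sigma z = y$. Equivalently, and this is what the paper's one-line proof relies on, you can read it off the stable family $\mathcal S$. If $x \negincl x_1$ with $\sigma x_1 = y$, then $x_1^{\bplus} = x^{\bplus}$. Hence $x_1^{\bplus} \cup \sigma x_1^{\bminus} = x^{\bplus} \cup y^{\bminus}$ extends $x^{\bplus} \cup \sigma x^{\bminus}$ and lies over $y$.
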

\begin{proof}
Innocence and pseudo-receptivity are preserved by strong equivalence (Lemma~\ref{lem:stability}), and both the existence and uniqueness parts of receptivity hold for $\collapse(\sigma)$.
\end{proof}

\subsection{A characterization of strong strategies}
\label{subsec:strong-characterization}

\begin{lemma}
\label{lem:existence-part-for-free}
   If a pre-strategy $\sigma : S \to A$ is pseudo-receptive then $\Sat(\sigma)$ satisfies the existence part of receptivity.
\end{lemma}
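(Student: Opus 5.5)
We need to show: given a configuration $\bar x$ of $\sat{S}$ and a negative inclusion $y \negrev \sat{\sigma}\,\bar x$ in $\config{A}$, there exists some $\bar x' \in \config{\sat{S}}$ with $\bar x' \subseteq \bar x$ and $\sat{\sigma}\,\bar x' = y$. The lift must be a genuine lift in the Scott category, i.e.\ a negative reverse inclusion with image exactly $y$, and not merely a pseudo-lift.

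The plan is to use the description of configurations of $\sat{S}$ as pairs $(x, \theta)$. Here $x \in \config{S}$ and $\theta : \sigma x \tosym{A} z$ is a symmetry, and the pair is mapped to $z$. Sub-configurations of $(x,\theta)$ correspond to pairs $(x_0, \theta_{\mid \sigma x_0})$ with $x_0 \subseteq x$ and $x_0 \in \config{S}$; restricting $\theta$ is allowed by congruence. So write $\bar x = (x,\theta)$ with $\theta : \sigma x \tosym{A} z$, and let $y \negincl z$. The set $z \setminus y$ consists of negative events. Transport $y$ back along $\theta$ to obtain $y_0 := \theta^{-1} y \subseteq \sigma x$. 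This is a configuration because symmetries restrict to symmetries between sub-configurations. Moreover $y_0 \negincl \sigma x$, since symmetries preserve polarity.

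Next, apply pseudo-receptivity to $x$ and the negative inclusion $y_0 \negrev \sigma x$. This gives a pseudo-lift: a morphism $x \pto x_1$ in $\confsym{S}$ together with an isomorphism $\psi : \sigma x_1 \toiso y_0$ such that $\sigma$ applied to the morphism equals the reverse inclusion precomposed with $\psi$. By the unique factorization, the morphism is $x \negrev w \xiso{\varphi} x_1$. Applying $\sigma$ shows that $\sigma w$ has the same size as $y_0$, and that $\sigma x \setminus \sigma w$ consists of negative events. I would then argue that we may take $x_1 = w$, because the symmetry part can be absorbed into the saturation data. Define
\[
\bar x' := \big(w,\ \theta_{\mid y_0} \circ \psi \circ \sigma\varphi\big).
\]
The second component is a symmetry $\sigma w \tosym{A} y$, so $\bar x'$ is a configuration of $\sat{S}$ with image $y$.

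The main obstacle is ensuring that $\bar x' \subseteq \bar x$ as configurations of $\sat{S}$. This requires $\sigma w = y_0$ and the composite symmetry to agree with $\theta_{\mid \sigma w}$, i.e.\ $\psi\circ\sigma\varphi$ should be the identity on $\sigma w$. In general the pseudo-lift only guarantees this up to a symmetry. To fix it, I would use the commuting condition $\sigma(\text{morphism}) = (\text{reverse inclusion})\circ\psi$ as partial functions $\sigma x \pto y_0$. The left side acts as $\sigma\varphi$ on $\sigma w$ and is undefined elsewhere. The right side is $\psi^{-1}$-type data restricted to $y_0$. Comparing domains gives $\sigma w = y_0$. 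Comparing values gives $\psi \circ \sigma\varphi = \id$ after orienting $\psi$ correctly.

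If this bookkeeping fails, the fallback is a simpler route. Choose $w$ directly as the unique sub-configuration of $x$ with $\sigma w = y_0$, produced from the pseudo-lift by rigidity/local injectivity. Then set $\bar x' = (w, \theta_{\mid y_0})$, which is evidently below $(x,\theta)$ and maps to $\theta(y_0) = y$. Pseudo-receptivity is used only to guarantee that such a $w$ exists, namely that $y_0$ is the image of a down-closed subset of $x$.
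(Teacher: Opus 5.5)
\textbf{The direction of the negative inclusion is reversed.} This is a genuine gap, not a notational slip. In the Scott category, $y \negrev x$ is the partial map $y \pto x$ with $x \subseteq y$ and $y \setminus x$ negative. A (pseudo-)lift at $x$ is a morphism \emph{into} $x$.

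So the existence part of receptivity for $\sat{\sigma}$ asks the following. Given $\bar x = (x,\theta)$ with $\theta : \sigma x \tosym{A} z$, and an \emph{extension} $z \negincl z'$ by negative (Opponent) events, find $\bar x'$ with $\bar x \subseteq \bar x'$ and $\sat{\sigma}\,\bar x' = z'$.

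\textbf{What you set out to prove is false.} You aim for $\bar x' \subseteq \bar x$ lifting a restriction $y \negincl z$, and this fails even for the simplest strategies. Let $S$ have events $s' \leq s$ with $s'$ negative and $s$ positive. Let $A$ have concurrent events $a'$ (negative) and $a$ (positive). Take trivial symmetries and set $\sigma(s') = a'$, $\sigma(s) = a$. Then $\sigma$ is innocent, receptive and pseudo-receptive, and $\sat{\sigma} \cong \sigma$. Yet $\{a\} \negincl \{a',a\} = \sigma\{s',s\}$ has no lift below $\{s',s\}$.

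Accordingly, your individual steps cannot be justified:
\begin{itemize}
\item Pseudo-receptivity provides nothing about sub-configurations $y_0 \subseteq \sigma x$. The morphism $x \pto x_1$ you invoke goes \emph{out of} $x$, which is not what a pseudo-lift is.
\item The fallback via rigidity fails because $\sigma$ need not be rigid. In the example above, the required $w$ does not exist.
\end{itemize}

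\textbf{How to repair it.} Your instinct to absorb the symmetry defect of the pseudo-lift into the second component of a configuration of $\sat{S}$ is the right one. It just has to be applied upward, which is what the paper does.
\begin{itemize}
\item Extend $\theta^{-1} : z \tosym{A} \sigma x$ along $z \negincl z'$ by the bisimulation axiom, obtaining some $\chi : z' \tosym{A} z''$. Since symmetries preserve polarity, $\sigma x \negincl z''$.
\item Pseudo-receptivity at $x$ gives a pseudo-lift of $z'' \negrev \sigma x$. It has the form $x_1 \negrev w \toiso x$, together with a symmetry $\sigma x_1 \tosym{A} z''$.
\item Extend the symmetry $w \toiso x$ along $w \subseteq x_1$ (bisimulation in $S$). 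This yields $x \negincl x'$ together with $\psi : \sigma x' \tosym{A} z''$ restricting to the identity on $\sigma x$.
\item Set $\theta' := \chi^{-1} \circ \psi : \sigma x' \tosym{A} z'$. It extends $\theta$, so $(x', \theta')$ lies above $(x,\theta)$ in $\config{\sat{S}}$ and is sent to $z'$.
\end{itemize}
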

\begin{proof}
  The map $\Sat(\sig)$ sends a configuration $w \in \config{\sat{S}}$,
  corresponding to a pair
  $(x \in \config{S}, \sig x \xtosym{\theta}{} y)$, to $y$. If
  $y \negincl y'$ in $\config{A}$, then by the properties of symmetry
  there is $y''$ such that
\[\begin{tikzcd}[row sep=1.8em, column sep=1.8em]
	{y''} & {y'} \\
	{\sig x} & y
	\arrow["{\sim}"'{outer sep=-1.2ex}, from=1-1, to=1-2]
	\arrow["{\scriptscriptstyle\boxminus}"', hook, from=2-1, to=1-1]
	\arrow["{\sim}"'{outer sep=-1.2ex}, from=2-1, to=2-2]
	\arrow["{\scriptscriptstyle\boxminus}", hook, from=2-2, to=1-2]
\end{tikzcd}\]
 commutes. Assuming $\sig$ is pseudo-receptive, there is
 $x' \in \config{S}$ with $x \negincl x'$ with a symmetry
 $\sig x' \toiso y''$ making the following commute:
\[\begin{tikzcd}[row sep=1.8em, column sep=1.8em]
	{\sig x'} & {y''} & {y'} \\
	& {\sig x} & {y\,.}
	\arrow["{\sim}"'{outer sep=-1.2ex}, from=1-1, to=1-2]
	\arrow["{\sim}"'{outer sep=-1.2ex}, from=1-2, to=1-3]
	\arrow["{\scriptscriptstyle\boxminus}"', hook, from=2-2, to=1-1]
	\arrow["{\scriptscriptstyle\boxminus}"', hook, from=2-2, to=1-2]
	\arrow["{\sim}"'{outer sep=-1.2ex}, from=2-2, to=2-3]
	\arrow["{\scriptscriptstyle\boxminus}", hook, from=2-3, to=1-3]
\end{tikzcd}\]
  The pair $(x', \sig x' \toiso y')$ corresponds to a configuration $w'$
  of $\sat{S}$ which is sent to $y'$ under $\Sat(\sig)$.
  Now, by a straightforward characterization of the inclusion order in
  $\config{\sat{S}}$, it holds that $w \subseteq w'$ and
  $\Sat(\sig)(w') = y'$, showing the existence part of receptivity for
  $\Sat(\sig)$.
  \end{proof}

\begin{theorem}
\label{thm:strong-characterization}
A pre-strategy $\sigma : S \to A$ is a strong strategy if and only if it is saturated, innocent, and pseudo-receptive. 
\end{theorem}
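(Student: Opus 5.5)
We prove the two directions separately. The sufficient direction reduces to Theorem~\ref{thm:old-characterization} via the collapse construction. The necessary direction extracts each of the three conditions from the strong equivalence $\lambda_\sigma : \sigma \strongequiv \cc_A \comp \sigma$, using the fact that $\cc_A \comp \sigma$ always has them.

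\emph{($\Leftarrow$)} Let $\sigma$ be saturated, innocent and pseudo-receptive.
\begin{enumerate}
\item Since $\sigma$ is saturated, $\eta_\sigma : \sigma \to \sat{\sigma}$ is a strong equivalence.
\item By Lemma~\ref{lem:existence-part-for-free}, $\sat{\sigma}$ satisfies the existence part of receptivity.
\item By Lemma~\ref{lem:stability}, $\sat{\sigma}$ is still innocent and pseudo-receptive, being weakly (indeed strongly) equivalent to $\sigma$.
\item Corollary~\ref{cor:collapse} then makes $\collapse(\sat{\sigma})$ innocent, pseudo-receptive and receptive.
\item By Theorem~\ref{thm:collapse}, $\collapse(\sat{\sigma})$ is strongly equivalent to $\sat{\sigma}$, hence to $\sigma$. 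So it is saturated, by Lemma~\ref{lem:stability}(3).
\item Theorem~\ref{thm:old-characterization} now gives that $\collapse(\sat{\sigma})$ is a strong strategy.
\item Strong strategies are closed under strong equivalence. The reason is that composition with $\cc_A$ preserves $\strongequiv$, and $\lambda$ is natural, so the square relating $\lambda_\sigma$ and $\lambda_{\collapse(\sat\sigma)}$ lets us transfer the property that $\lambda$ is a strong equivalence. Hence $\sigma$ is a strong strategy.
\end{enumerate}
One point needs checking here: naturality of $\lambda$ with respect to strict maps holds on the nose, which is what makes the two-out-of-three argument for strict pseudo-inverses work.

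\emph{($\Rightarrow$)} Suppose $\lambda_\sigma$ is a strong equivalence. Since innocence and pseudo-receptivity are closed under weak equivalence, and saturation under strong equivalence (Lemma~\ref{lem:stability}), it suffices to show that $\cc_A \comp \sigma$ is saturated, innocent and pseudo-receptive for an \emph{arbitrary} pre-strategy $\sigma$. We plan to compute configurations of $\cc_A \comp \sigma$ via Lemma~\ref{lem:conf-copycat}. A configuration of the interaction amounts to a configuration $x$ of $S$, together with a morphism $\sigma x \to y$ of the Scott category $\confsym{A}$ in the direction dictated by $\Tw(\confsym A)$, synchronizable with $x$. Its image is $y$. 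Each morphism factors uniquely as negative reverse inclusion, then symmetry, then positive inclusion.

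From this description the three conditions follow:
\begin{itemize}
\item \emph{Positive inclusions.} Given $y \posincl y'$, we extend the Scott morphism by post-composing with the positive inclusion, which gives innocence.
\item \emph{Symmetries.} Given $y \cong y'$, we post-compose with the symmetry. Because this is uniform and realised by a map of event structures, we can build an explicit strict pseudo-inverse $\sat{\cc_A\comp\sigma} \to \cc_A \comp \sigma$ by composing symmetries. Essentially, $\cc_A \comp \sigma$ is already a pseudo-pullback along the identity, so we expect $\sat{\sat{\sigma}} \strongequiv \sat{\sigma}$-style reasoning to apply.
\item \emph{Negative inclusions.} Given $y' \negrev y$, we pre-compose, which shrinks only the copycat part.
\end{itemize}

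The main obstacle is pseudo-receptivity, and specifically its essential uniqueness. Two pseudo-lifts of a negative reverse inclusion correspond to Scott morphisms $\sigma x_1 \to y'$ and $\sigma x_2 \to y'$ that agree after the negative restriction. We must produce a unique symmetry between them in the hidden composite $\cc_A \comp \sigma$. The strategy would be to use the unique factorization $y \negrev w \toiso u \posincl x$ in $\confsym{A}$, so that the negative part is absorbed entirely by the copycat component and the $S$-component $x$ is unchanged. Both lifts then share the same $x$ and differ only by the factorization data, which is determined up to a unique symmetry by uniqueness of factorization. Care is needed because hiding identifies interactions with the same visible part, and we must check that the symmetry on the hidden event structure, defined via $[x]$, contains the required bijection and that it is unique.
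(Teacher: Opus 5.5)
Your ($\Leftarrow$) direction is the paper's argument: pass to $\sat{\sigma}$, get the existence half of receptivity from Lemma~\ref{lem:existence-part-for-free}, collapse via Corollary~\ref{cor:collapse}, apply Theorem~\ref{thm:old-characterization}, and transport back along $\sigma \strongequiv \collapse(\sat{\sigma})$. You are a little more careful than the paper in two places. You check explicitly (Lemma~\ref{lem:stability}(3)) that $\collapse(\sat{\sigma})$ is saturated, which Theorem~\ref{thm:old-characterization} requires and the paper leaves implicit. You also justify closure of strong strategies under $\strongequiv$ via naturality of $\lambda$. Your reduction for ($\Rightarrow$) is also the paper's. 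At that point, though, the paper simply cites Lemma~9 of \cite{lics14} for the fact that $\cc_A \comp \sigma$ is saturated, innocent and strongly receptive for every pre-strategy $\sigma$. Your attempt to prove that fact yourself does not go through as written.

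The main problem is variance. The copycat datum must point into the $S$-side, whichever way one reads the convention in Lemma~\ref{lem:conf-copycat}. A configuration of the interaction is $x \in \config{S}$ together with a Scott morphism $y \to \sigma x$; without symmetry this means $y \sqsubseteq_A \sigma x$. With your direction $\sigma x \to y$, take the empty pre-strategy on a game with a single positive event $a$. Your description would give it a configuration $(\emptyset,\ \emptyset \posincl \{a\})$, whereas $\cc_A \comp \sigma$ is empty there.

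This is exactly what breaks your innocence bullet. Innocence asks for lifts of positive inclusions $y' \posincl y$ \emph{into} the image, i.e.\ for removing Player moves. Post-composing with $y \posincl y'$ would instead show that $\cc_A \comp \sigma$ can always add arbitrary Player moves, which is false. With the correct variance, innocence, receptivity and isomorphism-lifting all become \emph{pre}composition with a morphism into $y$. Your bullets mix pre- and post-composition inconsistently. Even with the correct variance, you must check the result against hiding. A configuration of $\cc_A \comp \sigma$ corresponds to the down-closure of its visible part, and removing positive visible events can shrink the $S$-component.

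The saturation bullet is only ``we expect''. Example~\ref{ex:isofib-not-saturated} shows that lifting each isomorphism is not enough: you need an actual strict map $\sat{\cc_A \comp \sigma} \to \cc_A \comp \sigma$.

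In the pseudo-receptivity bullet, two pseudo-lifts need not share the same $x$. They sit over the given configuration $c$ via morphisms $c_i \negrev w_i \toiso c$, so their $S$-components are only symmetric to $x$ until you transport along the bisimulation property. The uniqueness of the comparison isomorphism then comes from Lemma~\ref{lem:ess-setoid-fibres}, not from uniqueness of factorization in $\confsym{A}$.

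Citing \cite{lics14} as the paper does closes this direction.
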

\begin{proof}
(Only if) If $\sigma : S \to A$ is a strong strategy then by definition it is strongly equivalent to $\cc_A \comp \sigma$, which is saturated, innocent, and strongly receptive (i.e. receptive and pseudo-receptive) by Lemma~9 of \cite{lics14}. The conclusion follows from Lemma~\ref{lem:stability}. 
(If) Assume $\sigma$ is innocent, pseudo-receptive and saturated. Since
$\sigma$ is saturated, $\sigma$ and $\Sat(\sigma)$ are strongly
equivalent, and therefore $\Sat(\sigma)$ is also innocent and
pseudo-receptive. Hence, by Theorem~\ref{thm:collapse},
$\collapse(\Sat(\sigma))$ and $\Sat(\sigma)$ are strongly equivalent,
and so $\sigma \strongequiv \collapse(\Sat(\sigma))$. Since strong
strategies are closed under strong equivalence, it now suffices to show
that $\collapse(\Sat(\sigma))$ is a strong strategy. The argument is as
follows: since $\sigma$ is pseudo-receptive, $\Sat(\sigma)$ satisfies
the existence part of receptivity by
Lemma~\ref{lem:existence-part-for-free}, and therefore by
Corollary~\ref{cor:collapse} the pre-strategy $\collapse(\Sat(\sigma))$
is innocent, pseudo-receptive, and receptive; so by
Theorem~\ref{thm:old-characterization} it is a strong strategy.
\end{proof}

\subsection{A characterization of weak strategies}
\label{subsec:weak-characterization}

\begin{theorem}
A pre-strategy $\sigma : S \to A$ is a weak strategy if and only if it is innocent and pseudo-receptive.
\end{theorem}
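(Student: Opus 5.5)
The plan is to mirror the proof of Theorem~\ref{thm:strong-characterization}, replacing strong equivalence by weak equivalence and using Lemma~\ref{lem:satweakequiv} to remove the saturation hypothesis.

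\emph{(Only if).} Suppose $\sigma$ is a weak strategy. By definition $\lambda_\sigma : \sigma \to \cc_A \comp \sigma$ is a weak equivalence. By Lemma~9 of \cite{lics14} (already invoked in the proof of Theorem~\ref{thm:strong-characterization}), the composite $\cc_A \comp \sigma$ is saturated, innocent and strongly receptive, so in particular innocent and pseudo-receptive. Innocence and pseudo-receptivity are closed under weak equivalence (Lemma~\ref{lem:stability}, parts 1 and 2). Hence $\sigma$ is innocent and pseudo-receptive. Saturation is not inherited, since by Lemma~\ref{lem:stability}(3) it is not stable under weak equivalence; this is exactly why it drops out of the characterization.

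\emph{(If).} Assume $\sigma$ is innocent and pseudo-receptive. By Lemma~\ref{lem:satweakequiv}, $\sigma \weakequiv \sat{\sigma}$, so Lemma~\ref{lem:stability} gives that $\sat{\sigma}$ is also innocent and pseudo-receptive. It is moreover saturated, since the saturation of any pre-strategy is saturated. Theorem~\ref{thm:strong-characterization} then applies, and $\sat{\sigma}$ is a strong strategy; in particular it is a weak strategy, because strong equivalences are weak equivalences. It remains to note that weak strategies are closed under weak equivalence, as remarked after Definition~\ref{def:strategy}. Concretely, composition with $\cc_A$ respects weak equivalence, which gives
\[
\sigma \weakequiv \sat{\sigma} \weakequiv \cc_A \comp \sat{\sigma} \weakequiv \cc_A \comp \sigma .
\]

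The main obstacle is that the definition asks that the \emph{canonical} map $\lambda_\sigma$ be a weak equivalence, not merely that $\sigma$ and $\cc_A\comp\sigma$ be weakly equivalent. I would handle this using naturality of $\lambda$ with respect to the map $\eta_\sigma$. Naturality gives a square that commutes up to homotopy:
\[
(\cc_A \comp \eta_\sigma)\circ \lambda_\sigma \;\sim\; \lambda_{\sat{\sigma}} \circ \eta_\sigma .
\]
In this square $\eta_\sigma$ is a weak equivalence, $\lambda_{\sat\sigma}$ is a weak equivalence, and $\cc_A\comp\eta_\sigma$ is a weak equivalence by functoriality of composition. Since weak equivalences in $\ESS$ satisfy two-out-of-three (they are homotopy equivalences), it follows that $\lambda_\sigma$ is a weak equivalence.

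The same argument, in the form of the general remark at the start of this section, handles $\rho_\sigma$ and the two-sided case $\sigma : A \profto B$.
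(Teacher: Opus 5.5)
Your proposal is correct and follows the paper's proof. The (only if) direction transports innocence and pseudo-receptivity from $\cc_A \comp \sigma$ using Lemma~9 of \cite{lics14} and Lemma~\ref{lem:stability}, and the (if) direction goes through $\sat{\sigma}$ and Theorem~\ref{thm:strong-characterization}, with your naturality and two-out-of-three argument for $\lambda_\sigma$ just spelling out the closure of weak strategies under weak equivalence that the paper asserts without proof after Definition~\ref{def:strategy}.
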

\begin{proof}
(Only if) If $\sigma : S \to A$ is a weak strategy then by definition it is weakly equivalent to $\cc_A \comp \sigma$, which is saturated, innocent, and strongly receptive (i.e. receptive and pseudo-receptive) by Lemma~9 of \cite{lics14}. Innocence
and pseudo-receptivity can be transported to $\sigma$, by Lemma~\ref{lem:stability}. (If) Recall $\sigma$ is weakly equivalent to $\sat{\sigma}$. The latter is saturated, innocent, and pseudo-receptive, and therefore a strong strategy by Theorem~\ref{thm:strong-characterization}. Since $\sigma$ is weakly equivalent to a strong strategy it is itself a weak strategy. 
\end{proof}

 It remains to show that innocence and pseudo-receptivity together correspond to the Street fibration condition. The proof involves reasoning about pseudo-lifts for maps in $\ESS$. 
\begin{theorem}[Strategies as Street fibrations]
\label{thm:characterization}
A pre-strategy $\sigma : S \to A$ is a weak strategy if and only if $\confsym{\sigma} : \confsym{S} \to \confsym{A}$ is a Street fibration. 
\end{theorem}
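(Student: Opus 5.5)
By the preceding theorem, $\sigma$ is a weak strategy if and only if it is innocent and pseudo-receptive. So it suffices to show that $\confsym{\sigma}$ is a Street fibration if and only if $\sigma$ is innocent and pseudo-receptive.

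By Lemma~\ref{lem:ess-setoid-fibres}, all essential fibres of $\confsym{\sigma}$ are setoids. Lemma~\ref{lem:setoidfibs} therefore reduces the Street fibration condition to having essentially unique pseudo-lifts of every morphism $h : y \pto \sigma x$ of $\confsym{A}$. By the unique factorization, $h = (y \negrev w \xiso{\theta} u \posincl \sigma x)$. The plan is to treat the three kinds of generators separately and then to show that essentially unique pseudo-lifts are closed under composition.

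\textbf{(If) direction.} I take the generators in turn.
\begin{itemize}
\item Positive inclusions $u \posincl \sigma x$: innocence gives a genuine lift $x' \posincl x$. It is unique among lifts with the same domain, by local injectivity. Any other pseudo-lift $(f' : x'' \pto x, \theta')$ has $\confsym{\sigma}(f') = h\circ\theta'$, and I factor $f'$ itself as $x'' \negrev \cdot \toiso \cdot \posincl x$. Counting polarities, as in the proof of local fullness of $\confsym{-}$, shows the negative part is trivial. So $f'$ is a symmetry $x''\cong z$ followed by $z \posincl x$. Innocence and uniqueness then force $z = x'$, which yields the comparison isomorphism.
\item Symmetries $\theta : w \cong u$ with $u = \sigma x$: take the identity on $x$ together with the pseudo-iso $\theta^{-1}$. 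Essential uniqueness follows again from the polarity-counting factorization argument.
\item Negative reverse inclusions: this case is exactly pseudo-receptivity.
\end{itemize}

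Essentially unique pseudo-lifts compose: given $h = h_1 h_2$, pseudo-lift $h_1$ at $x$ to $(f_1, \theta_1)$, then pseudo-lift $h_2\theta_1^{-1}$, suitably transported, at the domain of $f_1$. Uniqueness of the composite needs care. Any pseudo-lift $f$ of $h$ itself factors, by the unique factorization in $\confsym{S}$, into negative, symmetry and positive parts. These are sent by $\confsym{\sigma}$ to the corresponding parts of $h$, up to the symmetry in $A$. Here I use that $\confsym{\sigma}$ preserves the factorization, since $\sigma$ preserves polarity and maps symmetries to symmetries. This lets me compare $f$ piece by piece with the constructed lift and assemble a unique isomorphism.

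\textbf{(Only if) direction.} Assume $\confsym{\sigma}$ is a Street fibration, so by Lemma~\ref{lem:setoidfibs} it has essentially unique pseudo-lifts.
\begin{itemize}
\item Pseudo-receptivity is immediate: it is the restriction of this property to negative reverse inclusions.
\item For innocence, take $u \posincl \sigma x$ and a pseudo-lift $(f : x' \pto x, \theta : \sigma x' \cong u)$. Factoring $f$ in $\confsym{S}$ and counting polarities shows $f = (x' \toiso z \posincl x)$. Then $\sigma z \posincl \sigma x$ has the same events as $u$ and lies inside $\sigma x$. Hence $\sigma z = u$, so $z \posincl x$ is a genuine lift. (The symmetry $\sigma x' \cong u$ becomes irrelevant because configurations are sets and the inclusion is literal.)
\end{itemize}

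\textbf{Main obstacle.} I expect the main difficulty to be the uniqueness half of the composition argument in the (if) direction. Concretely, it is showing that an arbitrary pseudo-lift of a mixed morphism $y \negrev w \toiso u \posincl \sigma x$ decomposes compatibly with the factorization of $h$. This requires a careful check that $\confsym{\sigma}$ sends the unique $S$-factorization of $f$ to the unique $A$-factorization of $h\circ\theta'$, with positive parts matched exactly and only the middle symmetry absorbing $\theta'$. I would first attempt this by rewriting $h\circ\theta'$ via the factorization and invoking its uniqueness directly. Should that prove awkward, I would fall back on standard Street-fibration lemmas: a composite of cartesian morphisms is cartesian, and isomorphisms are cartesian. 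Together with setoid fibres, these give essential uniqueness via Lemma~\ref{lem:setoidfibs}.
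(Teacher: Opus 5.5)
Most of your argument matches the paper. The reduction via the preceding theorem to innocence plus pseudo-receptivity is the same. Your proof that a Street fibration is innocent (factor the pseudo-lift, the negative part is trivial, hence $\sigma z = u$) is essentially the paper's. Your existence argument (innocence for the positive part, pseudo-receptivity for the negative part, the symmetry absorbed into the pseudo-isomorphism) also matches. Read your ``suitably transported'' step as the paper's bisimulation rewriting $h = y \toiso y' \negrev u \posincl \sigma x$, since $\theta \circ (y \negrev w)$ is not itself a generator. Making Lemmas~\ref{lem:ess-setoid-fibres} and~\ref{lem:setoidfibs} explicit is a good addition; the paper relies on them tacitly.

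The gap is essential uniqueness. You flag it but do not resolve it, and the mechanism you sketch is not right as stated. Take $h = z \negrev t \toiso u \posincl \sigma x$ and a pseudo-lift $(f, \alpha : \sigma v \toiso z)$. The factorization of $h \circ \alpha$ is $\sigma v \negrev \alpha^{-1}(t) \toiso u \posincl \sigma x$, so the pseudo-isomorphism is not absorbed by the middle symmetry alone; it also moves the negative part.

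Now take two pseudo-lifts $v \negrev w_0 \toiso u_0 \posincl x$ and $v' \negrev w_0' \toiso u_0' \posincl x$. You do get $u_0 = u_0'$ by local injectivity. But the negative pieces are strict lifts of two different negative inclusions over two different configurations $w_0 \neq w_0'$. Pseudo-receptivity only compares pseudo-lifts of one negative inclusion at one object, so it cannot be applied to these pieces directly.

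The missing idea is the paper's refactoring via the bisimulation property. Rewrite $v \negrev w_0 \toiso u_0$, $v' \negrev w_0' \toiso u_0$ and $z \negrev t \toiso u$ in symmetry-first form: $v \toiso w_1 \negrev u_0$, $v' \toiso w_1' \negrev u_0$ and $z \toiso t' \negrev u$. Then $w_1 \negrev u_0$ and $w_1' \negrev u_0$ are two pseudo-lifts of the single reverse inclusion $t' \negrev u$ at the single object $u_0$. Pseudo-receptivity gives an isomorphism $w_1 \toiso w_1'$, which transports to the required $v \toiso v'$. Equivalently, you could transport $v$ along $w_0 \toiso w_0'$ by bisimulation and then compare the result with $v'$; either way, bisimulation is essential.

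Your fallback does not close the gap either. Composites of cartesian morphisms are cartesian, but for each generator you only showed essential uniqueness among pseudo-lifts of that same generator. That does not imply cartesianness, and essential uniqueness is not closed under composition for a general functor. Proving the negative pseudo-lifts cartesian means comparing them with arbitrary morphisms $c \pto x$, which needs the same bisimulation transport.
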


\begin{proof}
(If) Pseudo-receptivity is immediate because $\confsym{\sigma}$ is a Street fibration. We check innocence. Supposing $y \posincl \sigma x$, by assumption it has a pseudo-lift, i.e. a morphism
$z \to x$ in $\confsym{S}$ together with a symmetry $\sigma z \xiso{\theta} y$ such that the morphism $z \to x$ maps to $\sigma z \xiso{\theta} y \posincl \sigma x$. It is clear that in this case the $\negrev$ part of the morphism $z \to x$ must be trivial, so it is of the form $z \xiso{\phi} u \posincl x$ for a unique $\phi$ and $u$. Thus
$\sigma z \xiso{\sigma \phi} \sigma u \posincl \sigma x$. By uniqueness of factorization of morphisms of the Scott category,
we must therefore have $\sigma u = y$ and $\sigma (u \posincl x)= (y \posincl \sigma x)$.

(Only if) We must show that for every $x \in \confsym{S}$, every morphism
\begin{equation}
\label{eq:proof}
z \negrev t \toiso u \posincl \sigma x
\end{equation}
has an essentially unique pseudo-lift.

\emph{Existence.} Using the bisimulation property of symmetry, there exists $w$ such that the morphism in \eqref{eq:proof} can be written as a composite
\[
z \toiso w \negrev u \posincl \sigma x.
\]
Combining our two assumptions on $\sigma$ (pseudo-receptivity and innocence), we can find
$t$ and $v$ in $\config{S}$ such that $t \negrev v \posincl x$, and a symmetry $\sigma t \toiso w$, such that
\[
\sigma (t \negrev v \posincl x) \quad = \quad \sigma t \toiso w \negrev u \posincl \sigma x.
\]
The morphism $t \negrev v \posincl x$, together with the symmetry
$\sigma t \toiso w \toiso z$,
is therefore a pseudo-lift of $z \toiso w \negrev u \posincl \sigma x$
and thus of the morphism in \eqref{eq:proof}.

\emph{Essential uniqueness.}
Suppose there exist two pseudo-lifts of the morphism  in \eqref{eq:proof}, i.e. two morphisms
 \begin{align}
 v' \negrev w_0' \toiso u_0' \posincl x  \qquad\qquad  \label{eq1} v \negrev w_0 \toiso u_0 \posincl x
 \end{align}
in $\confsym{S}$, respectively mapping to
\[
\sigma v' \toiso z \negrev t \toiso u \posincl \sigma x
\qquad
\text{and}
\qquad
\sigma v \toiso z \negrev t \toiso u \posincl \sigma x
\]
in $\confsym{A}$, for given symmetries $\sigma v' \toiso z$ and $\sigma v \toiso z$.

Observe that the sub-configurations $u_0, u_0' \hookrightarrow x$ must be the same, since by assumption there is a unique strict lift of $u \posincl \sigma x$, and thus consider the two morphisms $v' \negrev w_0' \toiso u_0$ and $v \negrev w_0 \toiso u_0$. Via the bisimulation property of symmetry, they can be rewritten in `symmetry-first' format, as the dashed composites in the diagram below ($w_1$ and $w_1'$ are not necessarily unique, but the choice does not matter):
\[\begin{tikzcd}[row sep=0em]
	v && {w_1} & \\[0.3em]
	& {w_0} \\
	&&& {u_0} \\
	& {w_0'} \\
	{v'} && {w_1'}
	\arrow["\sim"', curve={height=6pt}, dashed, from=1-3, to=1-1]
	\arrow["\scriptscriptstyle\boxminus", hook, from=2-2, to=1-1]
	\arrow["\scriptscriptstyle\boxminus"', dashed, hook', from=3-4, to=1-3]
	\arrow["\sim"{description}, from=3-4, to=2-2]
	\arrow["\sim"{description}, from=3-4, to=4-2]
	\arrow["\scriptscriptstyle\boxminus", dashed, hook, from=3-4, to=5-3]
	\arrow["\scriptscriptstyle\boxminus"', hook', from=4-2, to=5-1]
	\arrow["\sim", curve={height=-6pt}, dashed, from=5-3, to=5-1]
\end{tikzcd}\]
In the same way, we refactor the morphism $z \negrev t \toiso u$ through some configuration $t'$: 
\[\begin{tikzcd}[row sep=1em]
	{\sigma v'} \\
	& z && {t'} \\
	\sigma v && t && u
	\arrow["\sim"{description}, from=2-2, to=1-1]
	\arrow["\sim"{description}, from=2-2, to=3-1]
	\arrow["\sim"{description}, dashed, from=2-4, to=2-2]
	\arrow["\scriptscriptstyle\boxminus", hook, from=3-3, to=2-2]
	\arrow["\scriptscriptstyle\boxminus"', dashed, hook', from=3-5, to=2-4]
	\arrow["\sim"{description}, from=3-5, to=3-3]
\end{tikzcd}\]
At this point we have constructed two pseudo-lifts of the morphism $t' \negrev u$, namely
$v \toiso w_1 \negrev u_0$ and $v' \toiso w_1' \negrev u_0$. But we have assumed that pseudo-lifts of negative reverse inclusions are essentially unique, so there is a unique isomorphism $v \toiso v'$ between the pseudo-lifts in \eqref{eq1}, as required.
\end{proof}

\section{Strategies as $\Setoid$-profunctors}
\label{sec:strategies-as-profunctors}

Our characterization of strategies as Street fibrations in setoids 
allows for an easy connection with setoid-valued profunctors. In this
section we formalize this as an oplax functor of tricategories
$\coll{-} : \Strat \to \SetoidProf$. The `oplaxness' means that
there is a non-invertible comparison map
$\coll{\tau \comp \sigma} \to \coll{\tau} \circ \coll{\sigma}$ because
viewing strategies as profunctors forgets important causal structure affecting
their composition.

The components of the functor $\coll{-}$ are straightforward to
describe. On objects, we define $\coll{A}$ to be the Scott category
$\confsym{A}$. Then we turn any strategy $\sigma : A \profto B$ into a setoid-valued profunctor $\coll{\sigma}_{A, B} : \confsym{A} \profto \confsym{B}$ by letting $\coll{\sigma}_{A, B}(x, y)$ be the essential fibre in $S$ over the configuration $x \parallel y \in \confsym{A^\perp \parallel B}$. This is always a setoid (Lemma~\ref{lem:ess-setoid-fibres}).

To define this more formally, first note the following property of the Scott category construction:
\begin{lemma}
For games $A$ and $B$, there is an isomorphism of categories $\confsym{A^\perp \parallel B} \cong \confsym{B} \times \confsym{A}^\op$.
\end{lemma}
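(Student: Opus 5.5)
The plan is to define the isomorphism explicitly on objects and on the three generating classes of morphisms, using the unique factorization of morphisms in the Scott category. Configurations of $A^\perp \parallel B$ are exactly the disjoint unions $x \parallel y$ with $x \in \config{A}$ and $y \in \config{B}$, since consistency and causality in the parallel composition are componentwise. On objects we therefore send $x \parallel y$ to the pair $(y, x)$. By definition of $\parallel$ on symmetries, the symmetry bijections $x \parallel y \tosym{} x' \parallel y'$ are exactly the disjoint unions $\theta \parallel \varphi$ of symmetries $\theta : x \tosym{A} x'$ and $\varphi : y \tosym{B} y'$. We send such a bijection to $(\varphi, \theta^{-1})$; the inverse appears because the $A$-component lives in $\confsym{A}^\op$.

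Next we treat the inclusions. A positive inclusion $x \parallel y \posincl x' \parallel y'$ consists of an inclusion $y \posincl y'$ in $B$ together with an inclusion $x \subseteq x'$ whose new events are positive in $A^\perp$, that is, negative in $A$. This second inclusion is a reverse map $x' \negrev x$ in $\confsym{A}$, which is a morphism $x \to x'$ in $\confsym{A}^\op$. Symmetrically, a negative reverse inclusion $x' \parallel y' \negrev x \parallel y$ splits into a negative reverse inclusion $y' \negrev y$ in $\confsym{B}$ and a positive inclusion $x \posincl x'$ in $\confsym{A}$, read in the opposite direction. So each generator of $\confsym{A^\perp \parallel B}$ is assigned a pair of morphisms, and conversely each pair of generators (one in each component, padded with identities) comes from a generator. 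We then extend this assignment to all morphisms.

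The main step, and the only real obstacle, is checking that this assignment is well defined and functorial on arbitrary composites, not just on generators. We would handle this by working directly with partial functions. A morphism of $\confsym{A^\perp \parallel B}$ is a partial function on the disjoint union $(x \cup y) \pto (x' \cup y')$ that never mixes components, so it is the disjoint union of its restrictions to each component. We send it to this pair of restrictions, taking the converse (as a partial bijection) on the $A$ side. This assignment is manifestly functorial, agrees with the generator assignment above, and is injective.

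It remains to check that the image lands in $\confsym{B} \times \confsym{A}^\op$ and is all of it. Every morphism factors uniquely as $\negrev\,;\,\toiso\,;\,\posincl$. For a pair of morphisms, one factors each component and uses that negative and positive steps in different components commute: a negative step in $B$ is independent of a positive step in $A$, and likewise for the other combinations. This lets us realize any pair as a composite of generators of $A^\perp \parallel B$, which gives surjectivity. The key point in this argument is that taking the converse swaps positive inclusions and negative reverse inclusions, which is exactly compatible with the polarity flip in $A^\perp$.
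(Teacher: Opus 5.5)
Your proposal is correct. The paper states this lemma without proof, presenting it only as a property of the Scott category construction, so your argument supplies the routine verification it leaves implicit. The verification has three parts. Every morphism of $\confsym{A^\perp \parallel B}$ is a component-preserving partial injection. Sending it to its $B$-restriction paired with the converse of its $A$-restriction is functorial and injective. Via the unique factorization $\negrev\,;\toiso\,;\posincl$, this assignment hits exactly $\confsym{B}\times\confsym{A}^\op$, because taking converses exchanges positive inclusions and negative reverse inclusions in the same way that $(-)^\perp$ exchanges polarities.

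One small simplification: for surjectivity on hom-sets you do not need steps in different components to commute. A single negative reverse inclusion, symmetry, or positive inclusion of $A^\perp\parallel B$ may act on both components at once. So the two componentwise factorizations assemble directly into one factorization $x\parallel y \negrev w_A\parallel w_B \toiso u_A\parallel u_B \posincl x'\parallel y'$.
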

From this we derive an isomorphism 
$ \SetoidSFib(\confsym{A^\perp \parallel B}) \to \SetoidSFib(\confsym{B} \times \confsym{A}^\op)$
 of 2-categories, whose action on a Street fibration is defined by post-composition. Additionally, since every strategy is a Street fibration, there is an embedding  $\Strat[A, B]
  \hookrightarrow \SetoidSFib(\confsym{A^\perp \parallel
      B})$ and recall  from \S\ref{sec:fibrations} the 2-functor      $\essfibres:
    \SetoidSFib(\confsym{B} \times \confsym{A}^\op) \to
    [\confsym{B}^\op \times \confsym{A}, \Setoid]$.
    We now assemble all this data.
\begin{theorem}
\label{thm:oplax-functor}
There is an oplax functor of tricategories $\coll{-} : \Strat \to \SetoidProf$. 
\end{theorem}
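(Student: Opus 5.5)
The construction assembles, at each level, the data already prepared above. On objects we put $\coll{A} = \confsym{A}$. On hom-2-categories we take the composite 2-functor
\[
\Strat[A,B] \hookrightarrow \SetoidSFib(\confsym{A^\perp\parallel B}) \cong \SetoidSFib(\confsym{B}\times\confsym{A}^\op) \xrightarrow{\essfibres} [\confsym{B}^\op\times\confsym{A},\Setoid].
\]
Three facts make this well defined:
\begin{itemize}
\item The first arrow exists by Theorem~\ref{thm:characterization}, which says every strategy induces a Street fibration.
\item The Street fibration is one in setoids by Lemma~\ref{lem:ess-setoid-fibres} and Lemma~\ref{lem:setoidfibs}.
\item The 2-functor $\confsym{-}$ is locally full and faithful, so homotopies become unique natural isomorphisms and the 3-cells are respected.
\end{itemize}
Concretely, a map of strategies $f:\sigma\to\sigma'$ with $\sigma'f\sim\sigma$ gives the pair $(\confsym{f},\varphi)$, where $\varphi$ is the induced natural isomorphism. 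We fix a cleavage (axiom of choice) so that $\essfibres$ applies. Because every level above 2 is thin, it then suffices to exhibit the unit and composition comparison 2-cells and to check the oplax coherence laws up to the equivalence relation given by 3-cells.

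The unit comparison $\coll{\cc_A}\to\confsym{A}(-,=)$ is read off from Lemma~\ref{lem:conf-copycat}. Since $\confsym{\CC_A}\cong\Tw(\confsym{A})$ over $\confsym{A}^\op\times\confsym{A}$, the essential fibre over $(x,y)$ is the setoid of pairs $(h:x'\to y',\ \text{isos } x\cong x',\ y'\cong y)$. Transporting $h$ along the isos gives a map to $\confsym{A}(x,y)$ that respects the equivalence relation, which in fact gives an equivalence of setoids.

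The composition comparison $\coll{\tau\comp\sigma}(z,x)\to\int^{y}\coll{\sigma}(y,x)\times\coll{\tau}(z,y)$ is the main step. Take an element $(w,\ \varphi:(\tau\comp\sigma)w\cong x\parallel z)$ with $w\in\config{S\comp T}$. Form the completion $[w]\in\config{S\inter T}$, which projects via $\pi_1,\pi_2$ to configurations $s\in\config{S}$ and $t\in\config{T}$. These have images $x'\parallel y$ and $y'\parallel z'$, together with a symmetry $y\cong y'$ coming from the pseudo-pullback. Send $(w,\varphi)$ to the class of
\[
\bigl(y,\ (s,\ \varphi|_A\parallel \id),\ (t,\ \text{the $B$-symmetry}\parallel\varphi|_C)\bigr).
\]
I must check three things:
\begin{itemize}
\item Symmetries of $S\comp T$, which lift by definition of hiding to symmetries on $[w]$ and hence on $s$ and $t$, are sent to equivalent elements.
\item The choice of representative is irrelevant in the pseudo-coend; this is the third generating relation, applied to the $B$-symmetry.
\item The assignment is pseudo-natural in $(z,x)$ with respect to the Scott-category actions. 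By the factorization $\negrev\,\toiso\,\posincl$, it suffices to treat symmetries, negative reverse inclusions and positive inclusions separately. For the inclusions I use the chosen cleavage together with innocence and pseudo-receptivity of $\sigma$ and $\tau$, lifting to $[w]$ and restricting.
\end{itemize}
The map is not invertible, because not every pair of matching fibre elements is synchronizable (deadlock) and hidden causal structure is lost; this is why the functor is only oplax.

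I expect this pseudo-naturality of the composition comparison to be the main obstacle. A negative reverse inclusion on $A$ or $C$ in $\confsym{S\comp T}$ must be matched with cartesian pseudo-lifts chosen independently in $S$ and $T$. These agree only up to the essential uniqueness of Lemma~\ref{lem:setoidfibs}, and that agreement has to be pushed through the interaction. Since the codomain is a setoid, it is enough to show equivalence rather than equality, and essential uniqueness of pseudo-lifts should deliver exactly that.

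The remaining oplax coherence axioms (associativity against the pseudo-coend associator, and the unit triangles) then hold automatically up to 3-cells. Both sides are induced by the same projections from iterated interactions, and the target setoids identify any two such elements related by symmetries.
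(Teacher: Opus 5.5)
Your proposal follows the paper's proof essentially step for step: the same composite 2-functor on hom-2-categories, the twisted-arrow description of copycat giving a normal unit, a composition comparison obtained by projecting configurations of $S \comp T$ through the interaction to matching configurations of $S$ and $T$, and coherence checked only up to 3-cells. It is in fact more detailed than the paper's sketch, but two points need tightening. First, since $\tau \comp \sigma$ is read off the left way round the pseudo-pullback, the $C$-part of $\tau t$ can differ from that of $(\tau\comp\sigma)w$ by the $C$-component of the synchronizing symmetry, and this must be absorbed into the isomorphism attached to $t$. Second, in the pseudo-naturality check the delicate case is restriction along positive inclusions, not extension by negative visible events (which leaves the hidden part untouched): restricting $[w]$ shrinks the hidden $B$-part by an inclusion of mixed polarity, so the two coend representatives are linked by a chain of generating relations along a covering chain of hidden events, rather than by a single cleavage comparison.
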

\begin{proof}
We only give the key steps of the construction.  The action on $\coll{-}$ on hom-2-categories is given by a family of 2-functors $\coll{-}_{A, B}$, for games $A$ and $B$, defined by
\[ 
  \Strat[A, B] \hookrightarrow \SetoidSFib(\confsym{A^\perp \parallel
      B})
      \to \SetoidSFib(\confsym{B} \times \confsym{A}^\op)
      \to \SetoidProf[\coll{A}, \coll{B}].
\]
For a functor of tricategories we need transformations relating identities and composition in each model. For identities, Lemma~\ref{lem:conf-copycat} characterizes the configurations of copycat in terms of the twisted arrow category, which corresponds to the identity profunctor (in particular our oplax functor is \emph{normal}). For composition, we use that every configuration of a composite strategy $T \comp S$ can be projected to a pair of `matching' configurations from $S$ and $T$. This is extended in a straightforward way to account for essential fibres and symmetries. 

Finally we verify that the axioms of pseudo-functors between bicategories hold, up to symmetry. This is done by repeated use of the universal property of the composition of profunctors; see \cite{pierre-hugo-lics23} for a similar bicategorical result. There are no further axioms to verify on the 3-cells because the structure is degenerate and all necessary equations hold automatically.
\end{proof}

\section{Thin games}
\label{sec:thin-games}

\newcommand{\TCG}{\mathbf{Thin}}
In this section we review an established approach to symmetry in concurrent games based on `thin' concurrent games and strategies \cite{cg2,lics15}. The theory of thin games, primarily developed by Clairambault and collaborators, has been very successful in applications of concurrent games to semantics \cite{clairambault2018fully,POPL20}. 

We show that $\Strat$ embeds the bicategory $\TCG$ of thin concurrent games. This result can be seen as challenging the common view that there are two separate approaches to symmetry in game semantics: the thin approach (e.g.~\cite{mellies2003asynchronous,cg2,paquet2023bi}) and the saturated approach (e.g.~\cite{lics14,mellies2019template}). We show that the model of weak strategies as Street fibrations provides a general mathematical universe in which the two approaches coexist and are formally connected. 
We only give the key definitions, but see \cite{cg2} for a full account.

\newcommand{\possym}[1]{\sym{#1}_+}
\newcommand{\negsym}[1]{\sym{#1}_-}
\begin{definition}
A \emph{thin game} is an event structure with symmetry $A = (A, \sym A)$ equipped with two sub-families of $\sym A$, denoted $\possym A$ and $\negsym A$, both satisfying the conditions of event structures with symmetry (Def.~\ref{def:ess}) and subject to the following axioms:
\begin{itemize}
\item If $\theta \in \possym{A}[x, y] \cap \negsym{A}[x, y]$, then $x = y$ and $\theta = \id_{x}$.
\item If $\theta \in \possym{A}[x, y]$, if 
$x \posincl x'$ and $\theta' \in \sym{A}[x', y']$ with $\theta'|_{x} = \theta$ then $\theta' \in \possym{A}[x, y]$.
\item If $\theta \in \negsym{A}[x, y]$, if 
$x \posincl x'$ and $\theta' \in \sym{A}[x', y']$ with $\theta'|_{x} = \theta$ then $\theta' \in \negsym{A}[x, y]$.
\end{itemize}
A \emph{thin strategy} on a thin game $(A, \possym A, \negsym A)$
is a pre-strategy $\sigma : S \to A$ (in the sense of
\S\ref{subsec:pre-strategies}, \ie\ $S = (S, \sym S)$ is an ordinary
event structure with symmetry) which is innocent and strongly
receptive, and additionally satisfies the following condition: for all $\theta \in \sym{S}$, if $\sigma \theta \in \possym{A}$ then $\theta = \id_x$ for some $x\in \config{S}$. 
\end{definition}

Parallel composition extends to thin games componentwise, and the
operation $(-)^\perp$ acts on thin games by additionally reverting the
roles of the positive and negative symmetries.

\newcommand{\CCt}{\CC^t}
\newcommand{\cct}{\cc^t}
\newcommand{\compt}{\comp^t}

Thin games and strategies are very constrained. The main advantage of these stronger axioms is that they enable simpler notions of identities and composition, with the associativity and unit laws holding up to isomorphism rather than equivalence. 

Formally, the identity on a thin game $A$ is a `thin copycat' strategy $\cct_A : \CCt_A \to A^\perp \parallel A$. It is (weakly) equivalent to the copycat $\CC_A$ from \S\ref{subsec:copycat}, but the latter is not a thin strategy in general. (The strategy $\CCt_A$ is in fact much simpler than $\CC_A$, but sadly not well-defined when $A$ is an arbitrary game, see \cite[A.4]{cg2}.) Thin strategies can be composed using a composition operation $\compt$ defined in the same way as $\comp$ but using 1-categorical pullbacks rather than pseudo-pullbacks. (The operation $\compt$ is simpler than $\comp$ but not well-defined for arbitrary non-thin strategies, as $\ESS$ does not have enough pullbacks.)

Altogether one gets a 2-dimensional model:
\begin{proposition}[\cite{hugo-thesis,clairambault:tel-04523273}]
There is a bicategory $\TCG$ of thin games, thin strategies between them as morphisms, and (weak) isomorphisms of strategies as 2-cells. 
\end{proposition}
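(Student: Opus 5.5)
The plan is to assemble $\TCG$ directly from the data already described: thin games as objects, thin strategies as $1$-cells, and weak isomorphisms of pre-strategies as $2$-cells. Horizontal composition is $\compt$, the identity is $\cct_A$, and the associator and unitors will come from universal properties. The first ingredient is a decomposition lemma for thin games: every symmetry $\theta \in \sym A[x,y]$ factors uniquely as a negative symmetry followed by a positive one, $\theta = \theta_+ \circ \theta_-$. Existence would go by induction on $x$ using the bisimulation axiom and the extension axioms for $\possym A,\negsym A$; uniqueness would use the first axiom, $\possym A\cap\negsym A = \{\id\}$. I would check that this decomposition is respected by $\parallel$ and $(-)^\perp$.

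The second step is well-definedness of $\compt$. For thin $\sigma : S \to A^\perp\parallel B$ and $\tau : T \to B^\perp \parallel C$, I would show that the $1$-categorical pullback of $\sigma\parallel C$ and $A\parallel\tau$ exists in $\ESS$. Its configurations are the pairs $(x,y)$ with $\sigma x$ and $\tau y$ agreeing on $B$ and causally compatible. Its symmetries are the pairs $(\theta,\varphi)$ of symmetries agreeing on $B$. The real content lies in the bisimulation axiom. Given a pair of symmetries and an extension on one side, I would extend on the $S$-side, project to $B$, and transport the $B$-part to the $T$-side. The thin condition, that a symmetry mapped into $\possym{}$ is an identity, together with strong receptivity, forces the two extensions to agree on $B$. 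This step uses the decomposition lemma: the $B$-component seen by $\sigma$ is the $B^\perp$-component seen by $\tau$, so positive and negative parts swap. Hiding then gives $\compt$. I would verify that the result is innocent and strongly receptive, imitating the classical (symmetry-free) proofs, and that it is thin, by decomposing a symmetry of the composite into its components on $S$ and $T$ and applying thinness of each.

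The third step is the unit law: $\sigma \compt \cct_A \cong \sigma$ up to isomorphism, not merely equivalence. I expect this to be the main obstacle. The plan is to build the canonical map $\sigma \to \cct_A\compt\sigma$, whose existence follows as for copycat in \S\ref{subsec:copycat}, and show it is a bijection on configurations that reflects symmetries.
\begin{itemize}
\item Injectivity is local injectivity.
\item For surjectivity, a configuration of the composite is a configuration $x$ of $S$ together with a copycat configuration realising a morphism out of $\sigma x$, which in the thin copycat is essentially a $\negincl/\posincl$ pair. Innocence and receptivity produce a lift in $S$.
\item The thin condition guarantees that no nontrivial symmetry is needed to match it, so the lift is literally in the image.
\end{itemize}
If the direct argument stalls, I would first try to reduce to the case where the copycat configuration is a single covering step $\cov^e$, and induct.

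Finally I would establish associativity and coherence. Associativity up to isomorphism follows from pasting $1$-categorical pullbacks and from hiding commuting with pullback along projections. $\compt$ extends functorially to weak isomorphisms of strategies: a homotopy between maps is a family of symmetries, and by the decomposition lemma it induces a unique map on pullbacks. The pentagon and triangle axioms then hold because every structural isomorphism is the unique map induced by the universal property of iterated pullbacks compatible with the projections, so both sides of each axiom coincide.
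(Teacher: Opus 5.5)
The paper gives no proof of this proposition; it imports it from \cite{hugo-thesis,clairambault:tel-04523273}, which build on \cite{cg2}. Your architecture is the standard one: thin copycat as identity, $\compt$ by plain pullback and hiding, associator and unitors as strict isomorphisms over the game. Your first three steps are broadly sound. The genuine gap is in the last step, the action of $\compt$ on 2-cells.

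\textbf{The missing step.} A 2-cell $f:\sigma\Rightarrow\sigma'$ only satisfies $\sigma'\circ f\sim\sigma$. The universal property of the 1-categorical pullback $S\circledast T$ only accepts cones that commute on the nose. Your decomposition lemma concerns symmetries of the game and gives no way to move configurations of $T$ along them, so it cannot produce the induced map you claim.

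Here is a concrete counterexample. Let $B$ have two conflicting negative events $a_0,a_1$, with the swap of singletons as its only non-trivial symmetry, so $\negsym{B}=\sym{B}$ and $\possym{B}$ is trivial. Let $\sigma:S\to B$ with $S=\{s_0,s_1\}$ conflicting and symmetric, $\sigma(s_i)=a_i$. Let $\tau:T\to B^\perp$ with $T=\{t_0\}$ and $\tau(t_0)=a_0$.
\begin{itemize}
\item Both $\sigma$ and $\tau$ are thin strategies, and the swap $f$ of $S$ is an isomorphism with $\sigma f\sim\sigma$.
\item A map $S\circledast T\to S\circledast T$ lying over $f$ and $\id_T$ would send the event $(s_0,t_0)$ to $(s_1,t_0)$. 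This is not an event of the pullback, since $a_1\neq a_0$.
\item Moreover $f\sim\id_S$ although $f\neq\id_S$. So with arbitrary homotopies, 2-cells are not determined by their homotopy class. Any whiskering obtained by correcting along symmetries is only defined up to homotopy, which a bicategory with identity 3-cells cannot absorb.
\end{itemize}

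What is needed is twofold.
\begin{itemize}
\item[(a)] Take as 2-cells the isomorphisms $\varphi$ with $\sigma'\circ\varphi$ homotopic to $\sigma$ through symmetries in $\possym{A}$, i.e.\ the positive isomorphisms used in the thin setting. Thinness then forces homotopic 2-cells to be equal: the components of $\psi^{-1}\varphi\sim\id_S$ are symmetries of $S$ sent into $\possym{A}$, hence identities.
\item[(b)] Prove a uniformity lemma for thin strategies. For $x\in\config{S}$ and $\theta\in\negsym{A}$ with domain $\sigma x$, there are unique $\varphi\in\sym{S}$ with domain $x$ and $\theta'\in\possym{A}$ such that $\sigma\varphi=\theta'\circ\theta$. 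The proof is by induction on $x$: use receptivity on symmetries at negative events, the bisimulation axiom at positive ones, and thinness for uniqueness.
\end{itemize}
Lemma (b) underlies the fact, invoked in the paper's proof of the next theorem, that the plain pullback of thin strategies is a bipullback \cite{cg2}. Horizontal composition of 2-cells, and naturality of the associator and unitors, are defined through this bipullback, and (a) makes the definition canonical.

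\textbf{Smaller points.}
\begin{itemize}
\item You take $\cct_A$ as given. But its well-definedness (that the symmetry inherited from $A^\perp\parallel A$ satisfies the bisimulation axiom) and its thinness are exactly where the thin-game axioms are needed; the paper notes this fails for arbitrary games \cite[A.4]{cg2}.
\item Thinness is not what drives the steps where you invoke it. The bisimulation axiom for pairs in the pullback follows from strong receptivity alone: extend on the side for which the new event is positive, then lift on the other side. This needs a case split on polarity, which your sketch lacks. Surjectivity of $S\to\cct_A\compt S$ is the symmetry-free argument.
\item The existence half of your decomposition lemma needs the third thin-game axiom in the form $x\negincl x'$, i.e.\ negative symmetries are stable under negative extensions. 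As printed, with $\posincl$, the inductive step at a negative event fails.
\end{itemize}
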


\begin{theorem}
There is a pseudo-functor of tricategories $J : \TCG \to \Strat$, where the bicategory $\TCG$ is regarded as having only identity 3-cells.
\end{theorem}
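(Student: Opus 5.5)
We define $J$ as the identity on objects (forgetting the sub-symmetries $\possym A,\negsym A$) and, on hom-categories, as the inclusion of thin strategies into pre-strategies, acting as the identity on their 2-cells.

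\textbf{Step 1: thin strategies are weak strategies.} A thin strategy is innocent and strongly receptive, hence in particular pseudo-receptive. By the characterization of weak strategies as the innocent and pseudo-receptive pre-strategies (and Theorem~\ref{thm:characterization}), it is therefore an object of $\Strat[A,B]$. Isomorphisms of thin strategies are in particular weak equivalences, i.e.\ invertible maps of strategies, so $J$ is well defined on 2-cells. Since $\TCG$ has only identity 3-cells, the action on 3-cells is trivial.

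\textbf{Step 2: the unit comparison.} We construct a weak equivalence $\cct_A \weakequiv \cc_A$ of pre-strategies on $A^\perp\parallel A$. The paper already asserts such an equivalence. To exhibit it, I would build a map $\CCt_A \to \CC_A$ using Lemma~\ref{lem:conf-copycat}: a configuration of $\CCt_A$ is a pair $x^\perp \parallel y$ together with a positive/negative decomposition witnessing a morphism $y \to x$ in $\confsym{A}$. That witness is exactly an object of $\Tw(\confsym{A})$, giving a strict map. For a pseudo-inverse, apply the strong-strategy characterization (Theorem~\ref{thm:strong-characterization}) to $\sat{\cct_A}$, or argue directly via Lemma~\ref{lem:satweakequiv} that both are weakly equivalent to the same saturated strategy.

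\textbf{Step 3: the composition comparison.} This is the main obstacle. We need a natural weak equivalence $\tau \compt \sigma \weakequiv \tau \comp \sigma$ for thin $\sigma,\tau$. The universal property of the pseudo-pullback gives a canonical map from the strict pullback $S\inter^t T$ into $S\inter T$; after hiding it descends to $\tau\compt\sigma \to \tau\comp\sigma$. The difficulty is that $S\inter T$ contains extra configurations $(x,x',\theta)$ where $\theta$ is a nontrivial symmetry on $B$, so the map is not an isomorphism. I would first try to show that thinness makes every such $\theta$ factor as a $\negsym B$-part absorbed by $\sigma$ (via receptivity and pseudo-receptivity) and a $\possym B$-part absorbed by $\tau$ on its negative side. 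Using these absorptions, every configuration of $S \inter T$ is symmetric in $S\comp T$ to one coming from the strict pullback, which yields a pseudo-inverse up to homotopy. The essential uniqueness of pseudo-lifts (Lemma~\ref{lem:setoidfibs}) should make this assignment functorial enough to define a map of event structures. If a direct construction fails, the fallback is to compose with copycat: show $\tau\compt\sigma \weakequiv \cc\comp(\tau\compt\sigma)$ and compare both composites through saturations, where Lemma~\ref{lem:satweakequiv} reduces the question to an equality of essential fibres.

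\textbf{Step 4: coherence.} It remains to check the pseudo-functor axioms relating the associators and unitors of $\TCG$ and $\Strat$. All of these are required to hold up to 3-cells of $\Strat$. Since homotopies between maps are unique when they exist, each axiom reduces to showing that two maps of strategies are homotopic, which follows by comparing their actions on configurations through the universal properties of pullbacks and pseudo-pullbacks.
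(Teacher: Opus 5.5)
\textbf{Where the proposal agrees with the paper, and where it breaks.} Your architecture matches the paper's sketch. $J$ forgets $\widetilde{A}_\pm$. Thin strategies are innocent and pseudo-receptive, hence weak strategies. Units are compared through the equivalence $\CC_A\weakequiv\CC^t_A$. One half of the composition comparison is the canonical map from the strict pullback $S\inter^t T$ into $S\inter T$. The gap is the other half, the map $S\inter T\to S\inter^t T$ with its homotopies. The paper obtains it from the fact, imported from \cite{cg2}, that the pullback of thin strategies is a bipullback.

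\textbf{Why the one-shot factorization fails.} You plan to prove this by factorizing $\theta_B=\theta^+\circ\theta^-$ once, with $\theta^-\in\widetilde{B}_-$ lifted exactly by $\sigma$ and $\theta^+\in\widetilde{B}_+$ lifted by $\tau$. This fails. Receptivity and pseudo-receptivity only let $\sigma$ follow relabelled \emph{negative} events; the labels of its later positive answers are its own choice. For a concrete case, let $B$ have negative events $o_i$, each followed by positive events $p^i_j$ ($i,j\in\mathbb{N}$). Take $\widetilde{B}_-$ to be the symmetries preserving the $p$-indices, $\widetilde{B}_+$ those preserving the $o$-indices, and let $\sigma$ answer $o_i$ with $p^i_{f(i)}$. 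Then the negative symmetry $o_0\mapsto o_1$, $p^0_{f(0)}\mapsto p^1_{f(0)}$ has no lift to $S$ when $f(0)\neq f(1)$, and it does arise as $\theta_B$ in $S\inter T$ for a $\tau$ that plays $o_1$ and accepts any $p^1_j$.

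\textbf{What the construction actually requires.} The discrepancy is a positive relabelling for $\sigma$, so it must be passed to $\tau$, for which it is negative. In deeper games, $\tau$'s consequent relabelling of its own later moves must then be passed back to $\sigma$, and so on. Hence the witnesses $\phi\in\widetilde{S}$, $\psi\in\widetilde{T}$ with $\theta_B=(\tau_B\psi)^{-1}\circ\sigma_B\phi$ need not satisfy $\sigma_B\phi\in\widetilde{B}_-$ or $\tau_B\psi\in\widetilde{B}_+$. They have to be built by induction along the causal order of the synchronised configuration, each event being absorbed by the strategy for which it is negative.

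\textbf{The source of coherence.} The tool you invoke for coherence is the wrong one. Essential uniqueness of pseudo-lifts (Lemma~\ref{lem:setoidfibs}) says nothing about lifting symmetries: in a Street fibration every isomorphism has a trivial pseudo-lift. Moreover, uniqueness only up to unique isomorphism does not give choices compatible with inclusions, and that compatibility is what you need to get a function on events. What makes the inductive absorption canonical is thinness. A symmetry of $S$ or $T$ lying over a positive symmetry, in particular over an identity, is trivial. So the witnesses are unique and stable under restriction, and the map can then be defined prime by prime.

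\textbf{The fallback and Step 2.} Your fallback does not repair this. Equivalent essential fibres do not yield a map of event structures, so they cannot produce a weak equivalence; $\coll{-}$ is only oplax and is not claimed to reflect equivalences. Similarly, in Step 2, neither Theorem~\ref{thm:strong-characterization} (applied to the saturation of thin copycat) nor Lemma~\ref{lem:satweakequiv} gives a map $\CC_A\to\CC^t_A$. You should simply invoke the equivalence $\CC_A\weakequiv\CC^t_A$ stated earlier, as the paper does.
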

\begin{proof} We give only a sketch of the construction. For a thin game $\calA = (A, \possym A, \negsym A)$ we define the plain game $J\calA = A$. A thin strategy is in particular a strategy in the sense of $\Strat$ and indeed this extends to an embedding $\TCG[\calA, \calB] \to \Strat[J\calA, J\calB]$. 

We must then relate identities and compositions. We have already stated that $\CC_A$ and $\CCt_A$ are equivalent strategies.  To show that $\comp$ and $\compt$ are also equivalent, one direction uses the universal property of the pseudo-pullback and the other that the pullback of thin strategies is in fact a bipullback \cite{cg2}.
\end{proof}

The functor $J : \TCG \to \Strat$ appears to forget important structure, namely the positive and negative sub-families. But, as we show next, this structure is `property-like': if a game admits a thin game structure then that structure is essentially unique from the point of view of the bicategory $\TCG$. The result is new to this paper. 
\begin{proposition}
\label{prop:embedding}
Let $\calA = (A, \possym A, \negsym A)$ and $\calA' = (A, \possym A', \negsym A')$ be thin games with the same underlying event structure with symmetry and polarity $A$. Then there is an internal equivalence $\calA \simeq \calA'$ in the bicategory $\TCG$.
\end{proposition}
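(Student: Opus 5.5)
Our plan is to build the equivalence from copycat-like strategies, and to use the functor $J$ to transfer known facts from $\Strat$ back into $\TCG$.

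\emph{The candidate 1-cells.} As candidate 1-cells $\calA \profto \calA'$ and $\calA' \profto \calA$ we take a ``mixed'' thin copycat. Its underlying pre-strategy on $A^\perp \parallel A$ should be equivalent, in $\Strat$, to $\cc_A$. A natural candidate is the following. Take $\cc_A : \CC_A \to A^\perp \parallel A$ from Lemma~\ref{lem:conf-copycat}, and restrict the symmetry $\sym{\CC_A}$ to those symmetries $\theta$ whose image $\cc_A\theta$ lies in the thin-game symmetry of $\calA^\perp \parallel \calA'$ that combines $\negsym A$ on the left with $\negsym A'$ on the right. (By the definition of $(-)^\perp$ for thin games, these are exactly the negative symmetries of $\calA^\perp\parallel\calA'$.)

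The key steps, in order, are these.

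\begin{enumerate}
\item Show that this restricted symmetry is still a symmetry, i.e. that it satisfies congruence and bisimulation. This uses the extension axioms of thin games for $\negsym A$ and $\negsym A'$.
\item Show that the resulting pre-strategy $\gamma_{\calA,\calA'}$ is innocent and strongly receptive. These properties are inherited from $\cc_A$, using Lemma~9 of \cite{lics14}.
\item Prove the thinness condition. Suppose $\cc_A\theta$ lies in the positive symmetries of $\calA^\perp\parallel\calA'$, that is, in $\negsym A^{\perp}$-positive $=\possym A$ on the left and $\possym A'$ on the right. Since $\theta$ was also chosen in the negative family, the axiom $\possym{} \cap \negsym{} = \{\id\}$ forces $\theta=\id$.
\end{enumerate}

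Steps 1 and 3 are where we expect the real difficulty. Concretely, we must choose the restriction so that copycat's configurations $x \posincl$-matched with $x$ carry only symmetries that are identity ``up to the polarity split''. Whether $\negsym A$ or $\possym A$ is the right family on each side will be fixed by checking the thinness condition directly. If the naive restriction fails to be bisimulation-closed, we would instead use the saturated form of $\Tw(\confsym{A})$ and pick canonical representatives via the essentially unique pseudo-lifts of Lemma~\ref{lem:setoidfibs}.

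\emph{The equivalence.} Next we show $\gamma_{\calA',\calA}\compt\gamma_{\calA,\calA'} \cong \cct_{\calA}$ as thin strategies, and symmetrically. Rather than computing the composite directly, we argue in $\Strat$. By the pseudo-functor $J$, the thin composite $\compt$ is equivalent to $\comp$. Each $\gamma$ is weakly equivalent to $\cc_A$, since the inclusion of the restricted copycat into $\CC_A$ is an equivalence: every symmetry class has a representative in the restricted family. Hence the composite is weakly equivalent to $\cc_A\comp\cc_A \weakequiv \cc_A \weakequiv \cct_\calA$.

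It remains to upgrade this weak equivalence to an isomorphism in $\TCG$. We use the standard fact from \cite{cg2} that a weak equivalence between thin strategies on the same thin game is an isomorphism: thinness forces the homotopies $f\circ g\sim\id$ to be identities. Applying this to both composites gives the internal equivalence $\calA\simeq\calA'$.
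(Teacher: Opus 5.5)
\textbf{There is a genuine gap, in the choice of 1-cell.} Your plan keeps all configurations of the saturated copycat $\CC_{A}$ and restricts only its symmetry. That does not give anything equivalent to copycat.

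Saturation puts into $\CC_{A}$ configurations where Player answers an Opponent move by a \emph{symmetric} copy of it. These are identified with the ``diagonal'' answers only through symmetries whose image is not negative. Once those symmetries are removed, symmetry classes split, and $\gamma$ becomes a genuinely nondeterministic strategy.

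Concretely, let $A$ consist of two concurrent, symmetric, negative events $a_0,a_1$, with $\negsym{A}=\sym{A}$, $\possym{A}$ the identities, and $\calA'=\calA$.
\begin{itemize}
\item Since $(-)^\perp$ swaps the two families, the negative symmetries of $\calA^\perp\parallel\calA$ are $\possym{A}\parallel\negsym{A}$, not $\negsym{A}\parallel\negsym{A}'$ as you wrote. So they are identities on the left component.
\item The configurations of $\CC_{A}$ with one event on each side are the four pairs ``$a_i$ answered by the copy of $a_j$''. They form a single symmetry class in $\cc_{A}$. The two diagonal ones, which are all that $\CCt_{A}$ has, also form a single class.
\item In $\gamma$ they form two classes, indexed by $j$.
\end{itemize}
An equivalence of pre-strategies induces a bijection on symmetry classes of configurations, compatible with their images in the game. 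Hence $\gamma\not\weakequiv\cc_{A}$.

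More generally, your identity-on-events inclusion $\iota$ can be an equivalence only when the restriction is vacuous. If $h\circ\iota\sim\id$, every full symmetry $\theta:x_1\cong x_2$ equals $(h|_{x_2})^{-1}\circ h\theta\circ h|_{x_1}$, which is a composite of restricted symmetries.

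With the families exactly as you wrote them, the restriction \emph{is} vacuous in this example, so $\gamma=\cc_{A}$. Thinness then fails: the symmetry changing only the Player copy is a non-identity positive symmetry, lying in $\negsym{A}\parallel\possym{A}$.

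So the step ``each $\gamma$ is weakly equivalent to $\cc_{A}$'' fails, and with it your computation of the composites. Your final upgrade step is also unsupported. Thinness only trivializes symmetries whose image is positive, whereas homotopies in $\Strat$ are witnessed by arbitrary game symmetries. The paper itself notes it is unclear whether $J$ reflects equivalences.

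\textbf{The missing idea: no new 1-cell is needed.} One should cut down configurations, not symmetries, and doing so yields exactly the thin copycat. The pre-strategy $\cct_{A}:\CCt_{A}\to A^\perp\parallel A$ depends only on the underlying game $A$, not on $\possym{A},\negsym{A}$: its symmetry is inherited from $A^\perp\parallel A$, and the thin structure is needed only for this to be well defined. The thin composition $\compt$, a plain pullback followed by hiding, likewise depends only on $A$.

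Hence $\cct_{A}$ serves directly as a 1-cell $\calA\profto\calA'$ and as one $\calA'\profto\calA$. Both round-trip composites are literally $\cct_{A}\compt\cct_{A}$, which is isomorphic to $\cct_{A}$ by the unit law in $\TCG$.

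The only real work is the thinness of $\cct_{A}$ over $\calA^\perp\parallel\calA'$, whose positive symmetries are $\negsym{A}\parallel\possym{A}'$. This is exactly what your restriction trick was designed to sidestep. It reduces to a cross-structure lemma: a bijection lying in $\possym{A}\cap\negsym{A}'$ (or in $\negsym{A}\cap\possym{A}'$) is an identity, proved by induction on its size.
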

\begin{proof}
An equivalence in $\TCG$ is given by a pair of strategies that are each other's pseudo-inverse. The key observation is that the thin copycat strategy $\CCt_A \to A^\perp \parallel A$ can simultaneously serve as a strategy $\calA \profto \calA'$ and a strategy $\calA' \profto \calA$, in addition to its usual role as identity strategy on $\calA$ and $\calA'$. That this forms an equivalence follows quickly from the fact that copycat is an identity strategy. The only minor difficulty is in verifying the thinness axiom for copycat in each case. This follows from the observation that any bijection in the intersection of $\possym{A}$ and $\negsym{A}'$ (similarly $\negsym{A}$ and $\possym{A}'$) must be an identity bijection. This can be shown by induction on the size of the bijection. We omit the details.
\end{proof}

\begin{remark}
The pseudo-functor $J$ is identity on morphisms and higher-cells, and Proposition~\ref{prop:embedding} makes it `essentially injective on objects' in a bicategorical sense. It would seem that $J$ can thus be regarded as a higher-categorical embedding, in a sense that is consistent with existing 1-categorical notions of property-like structure and embeddings \cite{kelly1997property,Carboni1994modulated}. We content ourselves with this informal observation, as more work is required for a formal statement: it is still unclear whether $J$ reflects equivalences in an appropriate sense and higher-categorical generalizations can be subtle. 
\end{remark}

\section{Conclusion}
We have presented a very general model of concurrent games on event structures with symmetry, in which strategies satisfy the laws of composition up to a weak notion of equivalence. This model brings together several existing approaches to symmetry \cite{lics14,lics15}. 
 
 Importantly, we have characterized weak strategies as Street fibrations. This gives them a canonical status and provides a mathematically straightforward connection with $\Setoid$-valued profunctors. This reinforces our view that weak notions of equivalences must be taken seriously in models of this kind. 

 An interesting problem is that of characterizing strong strategies in a
 similar style. It is known that all strong strategies define
 Grothendieck fibrations, but Example~\ref{ex:isofib-not-saturated}
 shows that the converse fails, so a characterization would necessarily rely on a stronger
 condition. The problem remains open, at least without restricting the symmetry on strategies.
 
 This work brings us closer to a formalization of concurrent games in a type theory (or proof assistant) that supports and encourages reasoning up to homotopy. We will pursue this in further work.

 \bibliographystyle{entics}
 \bibliography{biblio}

\appendix

\section{Stable families, event structures, and symmetry}
\label{app:stable-families}

\paragraph*{Stable families.}

Some proofs in Appendix~\ref{app:collapse} rely on stable families \cite{evstrs}, the key results on which are presented here. 

\begin{definition}
A \emph{stable family} is a nonempty family $\F$ of finite sets which is:
\begin{itemize}
  \item \emph{Complete:} if $Z \fsubseteq \F$ is compatible, meaning that there
    exists $x \in \F$ with $\bigcup Z \subseteq x$, then $\bigcup Z \in \F$;
  \item \emph{Stable:} if $Z \fsubseteq \F$ is compatible and nonempty, then
    $\bigcap Z \in \F$;
  \item \emph{Coincidence-free:} for all $x \in \F$ and all $e, e' \in x$ with
    $e \neq e'$, there exists $x_0 \in \F$ with $x_0 \subseteq x$  satisfying $e \in x_0 \iff e' \notin x_0$. 
\end{itemize}
\end{definition}

We call elements of $\F$ its {\em configurations} and $\bigcup\F$ its {\em events}. %

\begin{definition}
A {\em (total) map} $f:\F\to \G$  between stable families $\F$ and $\G$ is a %
function $f$ from the events of $\F$ to those of $\G$ such that 
for all   $x\in\F$
its direct image $f x\in\G$  and 
if  $e, e' \in x$ and $f(e) =f(e')$ then  %
$e=e'$. 
\end{definition}

\paragraph*{Stable families and event structures.}

The finite configurations of an event structure form a stable family, and the definition of maps between stable families ensures that the category of stable families embeds that of event structures. 

It turns out that the embedding functor has a right adjoint $\Pr$, thus giving a coreflection (an adjunction whose unit is an isomorphism). We recall, without proof, the construction of this adjoint. 

Let $x$ be a configuration of a stable family $\Fam$. 
Define the {\em prime} configuration of $e$ in $x$ by
$$[ e ]_x \eqdef \bigcap\ \{ y\in \Fam \mid  e\in y  \ \& \  y\subseteq x \}.
$$

\begin{theorem}\label{thm:Pr}
For $\Fam$ a stable family,   
$\Pr(\Fam) \eqdef 
(P, \Con, \leq)$ is an event structure where
 
\begin{itemize}
\item $P= \set{[ e ]_x }{e\in x\ \&\ x\in \Fam}$;
\item $Z \in \Con$ if and only if  $Z \subseteq P$ and $\hbox{$\bigcup$} Z \in\Fam$; and 
\item $p\leq p'$ if and only if $p, p'\in P$ and $p\subseteq p'.$
\end{itemize}
There is an order-isomorphism $\theta: (\config{\Pr(\Fam)},\subseteq) \iso (\Fam, \subseteq)$
where $\theta(y)\eqdef\max\, y = 
\bigcup y$ for $y\in \config{\Pr({\F})}$%
; its  mutual inverse is $\varphi$ where $\varphi(x) = \set{[e]_x}{e\in x}$ for  $x\in\Fam$.
\end{theorem}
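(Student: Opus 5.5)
Theorem~\ref{thm:Pr} is the standard coreflection result for stable families, and I would prove it directly. The plan has three steps: check that $\Pr(\Fam)$ is an event structure, check that $\theta$ and $\varphi$ land in the right places, and check that they are mutually inverse and monotone.

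\emph{Primes are configurations, and the event-structure axioms.} First I show that $[e]_x \in \Fam$. The family $\{y\in\Fam \mid e\in y\subseteq x\}$ is nonempty (it contains $x$), bounded by $x$, and finite (subsets of the finite set $x$), so stability gives $[e]_x\in\Fam$, and clearly $e\in[e]_x$. A key auxiliary fact is that if $e\in y\subseteq x$ with $y\in\Fam$, then $[e]_y=[e]_x$: the intersection over subsets of $y$ is at least the one over subsets of $x$, and conversely $[e]_x\cap y\in\Fam$ by stability and contains $e$.

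With this I verify the axioms. The order $\subseteq$ on $P$ is clearly a partial order. For finite causes, if $p'\subseteq p=[e]_x$ with $p'=[e']_{x'}$, then $e'\in p$ and by the auxiliary fact $p'=[e']_p$; so there are at most $|p|$ such $p'$. Singletons are consistent because $\bigcup\{p\}=p\in\Fam$. Subsets of consistent sets are consistent by completeness, since a subset's union is bounded by $\bigcup Z$. For axiom (4), if $p\le p'\in Z$ then $p\subseteq\bigcup Z$, so $\bigcup(Z\cup\{p\})=\bigcup Z$.

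\emph{Well-definedness of the maps.} Configurations of $\Pr(\Fam)$ are finite, down-closed and consistent, so $\theta(y)=\bigcup y\in\Fam$ by the definition of $\Con$. For $\varphi(x)$, finiteness is clear. Consistency holds because $\bigcup\varphi(x)=x$, using $e\in[e]_x\subseteq x$. For down-closure, if $p'\subseteq[e]_x$ with $p'=[e']_{x'}$, then $e'\in[e]_x\subseteq x$, and the auxiliary fact gives $p'=[e']_{[e]_x}=[e']_x\in\varphi(x)$.

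\emph{Mutual inverses and order.} The identity $\theta\varphi(x)=x$ was just shown. For $\varphi\theta(y)=y$, put $x=\bigcup y$. If $p=[e]_{x'}\in y$, then $p\subseteq x$ and $e\in p$, so $[e]_x=[e]_p=p$; hence $y\subseteq\varphi(x)$. Conversely, take $e\in x$; then $e\in p$ for some $p\in y$, so $[e]_x=[e]_p\subseteq p$ is below $p$, and down-closure puts it in $y$. Both maps are evidently monotone, so $\theta$ is an order-isomorphism.

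\emph{Main obstacle.} Coincidence-freeness has not yet been used. It should be needed to show that primes determine events, i.e.\ that $[e]_x=[e']_x$ forces $e=e'$; this is needed only if one wants the event map $[e]_x\mapsto e$, not for the isomorphism itself. The argument: if $e\neq e'$ in $p=[e]_x=[e']_x$, coincidence-freeness gives $x_0\subseteq p$ separating them, say $e\in x_0\not\ni e'$, and then $[e]_x\subseteq x_0$, a contradiction. I would include this as a remark; the delicate bookkeeping is the auxiliary fact $[e]_y=[e]_x$, on which every step depends.
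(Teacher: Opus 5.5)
Your proof is correct. The paper gives no proof of this theorem: it recalls the construction of $\Pr$ without proof from the stable-families literature. Your direct argument, resting on the localisation fact $[e]_y=[e]_x$ for $e\in y\subseteq x$, is the standard one.

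Three small points to tidy:

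\emph{Nonemptiness of $\Con$.} When $P=\emptyset$ this needs $\emptyset\in\Fam$. That follows from completeness applied to the empty subfamily.

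\emph{Monotonicity of $\varphi$.} This is not quite ``evident''. It is again the localisation fact: $[e]_x=[e]_{x'}$ whenever $e\in x\subseteq x'$.

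\emph{The notation $\max y$.} The statement writes $\theta(y)$ as $\max y$, which presupposes that the map $[e]_x\mapsto e$ is well defined. So your coincidence-freeness remark should be stated as $[e]_x=[e']_{x'}\Rightarrow e=e'$. This reduces to your same-index case, because for the common prime $p$ both sides equal $[e]_p$ and $[e']_p$. Your $\varphi\theta$ step then gives $\max y=\bigcup y$.
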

By  coincidence-freeness, the  function 
  $\max:\config{\Pr({\F})}\to \F$ which takes  
a prime configuration $[e]_x$ to %
$e$ is well-defined;   it is  the 
counit  of the adjunction~\cite{icalp82,evstrs}.

\paragraph*{Symmetry as spans of open maps.}

In the main body of the paper we have presented symmetry on an event
structure $E$ as a family of isomorphisms $\sym E[x,y]$ for
configurations $x, y \in \config{E}$. But in fact (as we alluded to in
Remark~\ref{rem:spans}) the structure of a symmetry on $E$ can
equivalently be presented as another event structure $\sym E$,
together with a span of jointly monic open maps
\[
E \xleftarrow{\ l\ } \sym E \xrightarrow{\ r\ } E
\]
forming an internal equivalence relation in the category $\ES$. Each
configuration $w \in \config{\sym E}$ corresponds to an element
$\sym E[lw, rw]$ in the above presentation.

In the other direction, one constructs the event structure $\sym E$ by
observing that it forms a stable family (regarding each bijection as a
set of pairs) and applying the functor $\Pr$.

The above equivalence extends to morphisms of event structures with
symmetry. Indeed maps $f : (E, \sym E) \to (D, \sym D)$ of event
structures with symmetry (in the sense of \S\ref{sec:setoids}) are
equivalently presented as pairs of maps $f : E \to D$ and
$\sym f : \sym E \to \sym D$ of ordinary event structures, appropriately
commuting with the spans of open maps. Both perspectives are helpful and
we use the notation $\sym f$ in proofs of \S\ref{app:collapse}.

\section{The collapse of a pre-strategy}
\label{app:collapse}

\subsection{Preliminary terminology and notation}

Recall that a pre-strategy $\sig:S\to A$ is \emph{$\boxminus$-innocent} if whenever $s'\imc s$ with $\pol(s)=\boxminus$ in $S$ then $\sigma(s') \imc\sigma(s)$ in $A$. This condition is implied by pseudo-receptivity (Lemma~\ref{lem:minusinnocence}), but the results below typically only rely on $\boxminus$-innocence. 

Next, we say that a pre-strategy $\sigma:S\to A$ is {\em collapsed} if, for every $x\negincl x_1, x_2$ in $\config S$, $\sigma x_1 = \sigma x_2$ implies $x_1 = x_2$.  Provided a pre-strategy $\sigma:S\to A$  is 
pseudo-receptive we will show how to construct an equivalent collapsed pre-strategy $\collapse(\sig)$, the objective of this section.

For a configuration $x$ of any event structure with polarity $E$ we
write $x^{\bplus}$ and $x^{\bminus}$ for the subsets of positive and negative events,
respectively (these are not generally configurations themselves).
Similarly, let $E^{\bminus}$ and $E^{\bplus}$ denote the corresponding subsets of
events in $E$. And, when $E$ has symmetry and $\theta : x \toiso y$ is a
symmetry bijection, we denote by $\theta^{\bplus} : x^{\bplus} \iso y^{\bplus}$ and
$\theta^{\bminus} : x^{\bminus} \iso y^{\bminus}$ the appropriate restricted bijections (recall $\theta$ must preserve polarity). 

Finally, for an event $e$ of an event structure $E$, we let $[e]$ denote
its down-closure $\set{e' \in E}{e' \leq e}$. This is the smallest
configuration of $E$ containing $e$. Similarly, we write $[Z]$ for the down-closure of any consistent subset $Z \subseteq E$, which is always a configuration.  

\subsection{The collapse construction}

\begin{definition}
 \label{def:collapse}{\rm Let $\sigma:S\to A$ be a  $\boxminus$-innocent pre-strategy with essentially unique $\negincl$-lifts.  (In particular these assumptions hold if $\sig$ is pseudo-receptive.)
W.l.o.g.~assume that $S^{\bplus}$ and $A^{\bminus}$ are disjoint. 

Taking
$$
{\cal S} \eqdef \set{x^{\bplus}\cup (\sigma x)^{\bminus}}{x\in \config S}\,.
$$
defines
a stable family; this only depends on the $\boxminus$-innocence of $\sig$ ---see Lemma~\ref{lem:collapse} below.  (Noting that $(\sigma x)^{\bminus} = \sigma (x^{\bminus})$ we shall simply write $\sigma x^{\bminus}$ from now on.)

It is easy to see   that for any $x\in \config S$ there is a bijection
\begin{equation}
\label{eq:collapse-bij}
x\iso x^{\bplus} \cup \sigma x^{\bminus}
\end{equation}
taking $s \in x$ to $s$ if $s$ is positive
and to $\sig(s)$ if $s$ is negative.  Equip $\cal S$ with a symmetry $\sym{\cal S}$, the family comprising all composite bijections
$$
x^{\bplus} \cup \sigma x^{\bminus} \iso x \iso^\theta_S y \iso y^{\bplus} \cup \sigma y^{\bminus} 
$$
for $x\iso^\theta_S y$ in the isomorphism family of $S$, using the bijections \eqref{eq:collapse-bij}. Using that $\sig$  is both $\boxminus$-innocent and %
pseudo-receptive, we can show $\sym{\cal S}$ is an isomorphism %
family---Lemma~\ref{lem:collapse} below.
 
Define 
$\collapse(S)$ to be 
the event structure $\Pr({\cal S})$ with the polarity of $p\in \Pr({\cal S})$ the same as the polarity of $\max (p)$.  Its symmetry 
$\sym{\collapse(S)}=\Pr(\sym {\cal S})$ where $\sym{\cal S}$ is the
stable family $\sym{\cal S}$ which comprises all composite bijections
$$
x^{\bplus}\cup \sigma x^{\bminus} \iso  x \iso^\theta  y \iso y^{\bplus}\cup\sigma y^{\bminus}\,,
$$
where $x \iso_S^\theta y$. %
The family $\sym{\cal S}$  is shown to be a stable family below in Lemma~\ref{lem:collapse}. %
  
The prime configurations of $\cal S$ are precisely the configurations $[s]^{\bplus} \cup \sig[s]^{\bminus}$ for $s\in S$   ---see Lemma~\ref{lem:collapse} below.  (By definition, the prime configurations are the events of $\collapse(S)$.) 
Define the function 
$\collapse(\sig): \collapse(S) \to A$ to take a prime configuration $[s]^{\bplus} \cup \sig[s]^{\bminus}$, where $s\in S$, to $\sig(s)$. 
}\end{definition}

\begin{definition}\label{def:collapse-g}
Under the assumptions on $\sig$ above, define a function
$g:S\to \collapse(S)$ as the map taking 
$s\in S$ to the prime configuration $[s]^{\bplus} \cup \sig[s]^{\bminus}$. 
\end{definition}

To justify the claims in the definition of $\collapse$ above we first need a proposition on  a consequence of  $\boxminus$-innocence. 
\begin{prop}\label{prop:wneginnoc}
Let $\sigma:S\to A$ be a $\boxminus$-innocent  pre-strategy. Suppose that $x_0\subseteq x$ in $\config S$ and that $
\sigma x_0 \negincl y \subseteq \sigma x$ in $\config A$.  Then there is a (necessarily unique) $z\in \config S$ such that $x_0\negincl z \subseteq x$ and $\sigma z =y$.  
\end{prop}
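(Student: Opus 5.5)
We argue by induction on the number $n = |y \setminus \sigma x_0|$ of (necessarily negative) events added in $\sigma x_0 \negincl y$. Uniqueness is disposed of first and separately. If $z, z'$ both satisfy the conclusion, then $z, z' \subseteq x$ and $\sigma z = \sigma z' = y$. By local injectivity of $\sigma$ on $x$, the restriction $\sigma_{\mid x}$ is a bijection onto $\sigma x$, so $z = \sigma_{\mid x}^{-1}(y) = z'$. The same remark tells us what the candidate must be: $z \eqdef x_0 \cup \sigma_{\mid x}^{-1}(y\setminus \sigma x_0)$. The whole content of the proposition is therefore that this set is a configuration of $S$, i.e.\ consistent and down-closed. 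Consistency is automatic, since $z \subseteq x$. Once down-closure is established, $\sigma z = y$ and $x_0 \negincl z$ follow because $\sigma$ preserves polarity and $y \setminus \sigma x_0$ consists of negative events.

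For the induction, the case $n=0$ is trivial. For $n>0$, choose an event $a \in y \setminus \sigma x_0$ that is maximal in $y$. Then $y' = y \setminus \{a\}$ is a configuration with $\sigma x_0 \negincl y' \subseteq \sigma x$, and the induction hypothesis yields $z' \in \config S$ with $x_0 \negincl z' \subseteq x$ and $\sigma z' = y'$. Let $s \in x$ be the unique event with $\sigma s = a$; it is negative. It then remains to show that $z = z' \cup \{s\}$ is down-closed, i.e.\ that every $s'' < s$ lies in $z'$.

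This is the step where $\boxminus$-innocence enters, and we expect it to be the main obstacle. Since $x$ is finite and down-closed, any $s'' < s$ is connected to $s$ by a chain of immediate causalities inside $x$, so it suffices to show that every $s' \imc s$ lies in $z'$. Because $s$ is negative, $\boxminus$-innocence gives $\sigma s' \imc \sigma s = a$ in $A$. In particular $\sigma s' < a$ and $a \in y$, so $\sigma s' \in y$ by down-closure of $y$. Also $\sigma s' \neq a$, so $\sigma s' \in y'$. As $s' \in x$ and $\sigma_{\mid x}$ is injective, we conclude $s' = \sigma_{\mid x}^{-1}(\sigma s') \in z'$. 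Applying the same argument to the predecessors of $s'$ would fail, since $s'$ may be positive. That is why we use the chain argument only to reduce to immediate predecessors of $s$, and then rely on down-closure of $z'$, which is already a configuration, for everything below $s'$. This closes the induction and completes the proof.
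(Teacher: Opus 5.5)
Your proof is correct and is essentially the paper's. The paper takes the same candidate $z=\{s\in x\mid \sigma(s)\in y\}$ and uses $\boxminus$-innocence exactly as you do, to show that immediate predecessors of events of $z\setminus x_0$ (which are all negative) are sent into $y$ and therefore lie in $z$. The only difference is that the paper checks closure under $\imc$ for all of $z$ at once, handling events of $x_0$ by down-closure of $x_0$, so your induction on $|y\setminus\sigma x_0|$ is a harmless but unnecessary wrapper.
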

\begin{proof}
Assume  $x_0\subseteq x$ in $\config S$ and
$\sigma x_0 \negincl y \subseteq \sigma x$
in $\config A$.   
Define the set 
$
z\eqdef \set{s\in x}{\sig(s) \in y}.
$

Clearly $x_0\negincl z \subseteq x$ and $\sigma z = y$.  
It remains to show that $z$ is a configuration of $S$. 
As $z\subseteq x$ the set $z$ is clearly consistent.  
So for $z$ to be a configuration of $S$ it suffices to show it down-closed.  And for this it suffices to
show for all $s \in z\setdif x_0$ if $s'\imc_S s$ then $s' \in z$ ---this property already holds of all events in $x_0$ as it is a configuration.   
Suppose $s \in z\setdif x_0$.  Then $s$ is negative, so as $\sig$ is  $\boxminus$-innocent, $\sig(s')\imc_A \sig(s)$. As $\sig(s)\in y$, and $y$ is down-closed,  $\sig(s')\in y$ making $s'\in z$ ---and $z$ down-closed so a configuration.
\end{proof}

\begin{lemma}\label{lem:collapse}
Above, in Definition~\ref{def:collapse},  on the assumption that the pre-strategy $\sigma:S\to A$ is $\boxminus$-innocent, 
\begin{enumerate}
\item[(A)]
the family $\cal S$ is stable. 
\end{enumerate}
Moreover, on the further assumption that $\sig$ is pseudo-receptive,
\begin{enumerate}
\item[(B)]
 the family $\sym{\cal S}$ is an isomorphism family, alternatively described by
 $
\sym{\cal S} = \set{\theta^{\bplus} \cup \sym\sigma \theta^{\bminus}}{\theta \in \sym S}\,;
$
 \item[(C)]
  the prime configurations of $\cal S$ are precisely those of the form $[s]^{\bplus} \cup \sigma [s]^{\bminus}$, for $s\in S$; and 
 the functions $\collapse(\sig)$ and $g$ are maps 
 of event structures with polarity for which $\collapse(\sig) g = \sig\,,$
 such that 
 \item[(D)]
 if $z\negincl z_1, z_2$ and 
$\collapse(\sig) z_1 =  \collapse(\sig) z_1$ then $z_1=z_2$, for all $z, z_1, z_2\in \config{\collapse(S)}$; and 
 \item[(E)]
 $g x_1 = g x_2 \implies x_1\iso_S x_2$, for all $x_1, x_2\in \config S$. 
 \end{enumerate}
\end{lemma}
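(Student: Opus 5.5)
The central tool throughout will be the function
\[
\kappa_x : x \to x^{\bplus}\cup\sigma x^{\bminus},
\]
equal to the identity on positive events and to $\sigma$ on negative ones, together with Proposition~\ref{prop:wneginnoc}. It lets us read relations among collapsed configurations as relations among configurations of $S$, up to the negative part.

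\emph{(A).} Non-emptiness is clear, since $\emptyset\in\mathcal S$. I will first establish a key sub-lemma: if $x^{\bplus}\cup\sigma x^{\bminus}\subseteq y^{\bplus}\cup\sigma y^{\bminus}$ then there is $x'\subseteq y$ with $x'^{\bplus}\cup\sigma x'^{\bminus}=x^{\bplus}\cup\sigma x^{\bminus}$. To build $x'$, I take $x'=\{s\in y \mid s\in x^{\bplus}\ \text{or}\ (s \text{ negative and } \sigma s\in\sigma x^{\bminus})\}$. Down-closure follows by inducting along $\imc$: a negative $s$ has $\sigma$-image predecessors by $\boxminus$-innocence, and local injectivity of $\sigma$ on $y$ pins these down. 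With the sub-lemma in hand, the argument goes as follows.
\begin{itemize}
\item \emph{Completeness.} For a compatible $Z$ bounded by $y^{\bplus}\cup\sigma y^{\bminus}$, replace each member of $Z$ by its representative $x_i'\subseteq y$. Then $\bigcup_i x_i'$ is a configuration of $S$, and its image under $\kappa$ is $\bigcup Z$, because $\kappa_y$ is injective.
\item \emph{Stability.} The same argument works, using $\bigcap_i x_i'$ and again the injectivity of $\kappa_y$.
\item \emph{Coincidence-freeness.} This is inherited from $\config S$ through $\kappa_y$.
\end{itemize}

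\emph{(B), (C).} The description $\theta^{\bplus}\cup\sigma\theta^{\bminus}$ is just the definition unfolded. For the isomorphism-family axioms, identities, inverses and composition transfer directly from $\sym S$. Composition needs care only where two representations $x,x'$ give the same collapsed set. In that case $x$ and $x'$ are both $\negincl$-lifts of the same inclusion over their common positive part, and essential uniqueness of $\negincl$-lifts (pseudo-receptivity) supplies a symmetry $x\cong x'$ that is the identity on positive events and is compatible with $\sigma$. Restriction follows from the sub-lemma, and bisimulation follows from bisimulation in $S$. For the stable-family structure of $\sym{\mathcal S}$, I will apply the same union and intersection argument to $\sym S$, viewed as a stable family of graphs.

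For (C), I claim that the minimal element of $\mathcal S$ containing $\kappa(s)$ inside a given configuration is $\kappa([s])$. This holds because any $x'$ representing a configuration that contains $\kappa(s)$ contains an event mapping to $\kappa(s)$, and hence a copy of $[s]$ up to the sub-lemma. Once (C) is established, the maps are routine.
\begin{itemize}
\item \emph{The map $g$.} It sends $x$ to $\varphi(\kappa_x x)$, so it preserves configurations. It is locally injective because $\kappa_x$ is injective, and it preserves polarity and symmetry by the definition of $\sym{\mathcal S}$.
\item \emph{The map $\collapse(\sigma)$.} It acts on $\varphi(x^{\bplus}\cup\sigma x^{\bminus})$ with image $\sigma x$, which is well defined by the sub-lemma.
\item The equation $\collapse(\sigma)\,g=\sigma$ holds by construction.
\end{itemize}

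\emph{(D), (E).} For (D), a negative extension $z\negincl z_1$ adds only events whose $\max$ lies in $A^{\bminus}$, namely $\sigma$-images. Hence $\theta(z_1)=\theta(z)\cup(\sigma\text{-image})$, and $\theta(z_1)$ is determined by $\collapse(\sigma)z_1$. Since $\theta$ is an order-isomorphism by Theorem~\ref{thm:Pr}, this gives $z_1=z_2$. For (E), $gx_1=gx_2$ means that $x_1$ and $x_2$ have the same positive events and the same $\sigma$-image. Both are then negative extensions of their largest common sub-configuration $x_0$, and both lift the same negative inclusion. Essential uniqueness of pseudo-lifts therefore provides the required symmetry $x_1\cong_S x_2$.

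I expect the main obstacle to be the sub-lemma together with the well-definedness of composition in (B). These are the places where non-uniqueness of negative lifts must be controlled, and they need a careful induction along $\imc$ combined with essential uniqueness of $\negincl$-lifts.
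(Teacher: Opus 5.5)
Your proposal is correct and follows the paper's plan. Your sub-lemma is exactly what the paper extracts from Proposition~\ref{prop:wneginnoc}: the members of $\mathcal S$ contained in $x^{\bplus}\cup\sigma x^{\bminus}$ are the collapses of sub-configurations of $x$. Your use of essential uniqueness of $\negincl$-lifts, to identify two representatives of the same collapsed set, is the paper's auxiliary claim stated right after (A), which it uses for transitivity and for (E).

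The differences are local:
\begin{itemize}
\item \emph{Stability.} You realign both representatives inside the common bound $y$ and intersect, using injectivity of $\kappa_y$. The paper instead constructs an auxiliary configuration of $A$ and applies Proposition~\ref{prop:wneginnoc} again, so your argument is shorter.
\item \emph{Extension axiom for $\sym{\mathcal S}$.} This is the one step you state too tersely. Given $\theta : x\cong_S y$ and $x^{\bplus}\cup\sigma x^{\bminus}\subseteq v^{\bplus}\cup\sigma v^{\bminus}$, bisimulation in $S$ cannot be applied directly, because in general $x\not\subseteq v$. You must first take $x'\subseteq v$ from the sub-lemma. Then take the symmetry $\psi : x'\cong_S x$ given by essential uniqueness; it is the identity on $[x^{\bplus}]$ and satisfies $\sigma\psi=\id$, so it collapses to the identity. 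Now extend $\theta\psi$ along $x'\subseteq v$ in $S$ and collapse. Done this way the step is correct, and simpler than the paper's argument, which extends only $[x^{\bplus}]\cong_S[y^{\bplus}]$ to $v$ and then repairs the negative part using Proposition~\ref{prop:wneginnoc} and pseudo-receptivity.
\item \emph{Stability of $\sym{\mathcal S}$ as a family of graphs.} You address this; it is needed for $\Pr(\sym{\mathcal S})$ but left implicit in the paper.
\item \emph{(E).} Your common base $x_1\cap x_2$ works as well as the paper's $[x_1^{\bplus}]$.
\end{itemize}

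A small correction: $\collapse(\sigma)$ is well defined on events because the top element $\kappa(s)$ of a prime already determines $\sigma(s)$. The sub-lemma is not what is needed there.
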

\begin{proof}

 (A). 
We show $\cal S$ satisfies the three axioms of stable families, namely\footnote{It suffices to show binary versions of completeness and stability. We use $\uparrow$ to denote binary compatibility: $u \uparrow v$ means that $u$ and $v$ admit an upper bound in the family.}: (i) if $u \uparrow v$  in ${\cal S}$ then $u\cup v\in {\cal S}$; (ii) if $u \uparrow v$ in ${\cal S}$ then $u\cap v\in {\cal S}$; and (iii) if $e_1, e_2$ are distinct events in $u \in {\cal S}$ then there is $v \in {\cal S}$ with $v\subseteq u$ and $(e_1\in v \iff e_2 \notin v).$
\begin{enumerate}
\item[(i)] Suppose that $x_1, x_2, x \in\config S$ satisfy $x_1^{\bplus} \cup \sigma x_1^{\bminus}  \subseteq x^{\bplus} \cup \sigma x^{\bminus}$ and $x_2^{\bplus} \cup \sigma x_2^{\bminus}  \subseteq x^{\bplus} \cup \sigma x^{\bminus}$ in ${\cal S}$. 
Then, $[x_1^{\bplus} \cup x_2^{\bplus}] \subseteq x$ in $\config S$ and $
\sigma [x_1^{\bplus} \cup x_2^{\bplus}] \negincl \sigma x_1 \cup \sigma x_2 \subseteq  \sigma x
$ in $\config A$. By Proposition~\ref{prop:wneginnoc}, there is $z\in\config S$ with $\sigma z = \sigma x_1 \cup \sigma x_2$ and $[x_1^{\bplus} \cup x_2^{\bplus}] \negincl z \subseteq x.$ Then, $z^{\bplus} \cup \sigma z^{\bminus}\in  {\cal S}$ with 
$
z^{\bplus} \cup \sigma z^{\bminus} = (x_1^{\bplus} \cup x_2^{\bplus}) \cup (\sigma x_1 \cup \sigma x_2)^{\bminus}$. 
\item[(ii)]
Again, suppose $x_1, x_2, x \in\config S$ satisfy $x_1^{\bplus} \cup \sigma x_1^{\bminus}  \subseteq x^{\bplus} \cup \sigma x^{\bminus}$ and $x_2^{\bplus} \cup \sigma x_2^{\bminus}  \subseteq x^{\bplus} \cup \sigma x^{\bminus}$ in ${\cal S}$. We first claim there is $y\in\config A$, necessarily unique, such that $\sigma ([x_1^{\bplus}] \cap [x_2^{\bplus}]) \negincl y  \posincl \sigma x_1  \cap \sigma x_2.
$
To see this suppose $a\in  \sigma x_1  \cap \sigma x_2$ has negative polarity.  Then if 
$a' < a$ with $a'$ positive there is $s\in [x_1^{\bplus}] \cap [x_2^{\bplus}]$ for which $\sig(s) =a'$.  Thus, $a'\in \sigma ([x_1^{\bplus}] \cap [x_2^{\bplus}])$.
Hence we achieve the claim by 
 taking
$y\eqdef \sigma ([x_1^{\bplus}] \cap [x_2^{\bplus}]) \cup \set{a\in \sigma x_1  \cap \sigma x_2}{ a\hbox{ is negative} }.$ 

Now $\sigma ([x_1^{\bplus}] \cap [x_2^{\bplus}]) \negincl y  \subseteq  \sigma x.$ By  Proposition~\ref{prop:wneginnoc}, there is $z\in\config S$ with $\sigma z = y$ and
$
[x_1^{\bplus}] \cap [x_2^{\bplus}] \negincl z \subseteq x.
$
We have
$
z^{\bplus} \cup \sigma z^{\bminus} = (x_1^{\bplus} \cap x_2^{\bplus})\cup ( \sigma x_1^{\bminus}  \cap \sigma x_2^{\bminus}),
$ as required. 

\item[(iii)]
Suppose $e_1, e_2 \in x^{\bplus} \cup \sigma x^{\bminus}$  are distinct in the configuration of ${\cal S}$ obtained from $x\in \config S$. 
As remarked in Definition~\ref{def:collapse}, there is a bijection $x\iso x^{\bplus} \cup \sigma x^{\bminus}$. So 
in all cases there is a subconfiguration $v$ of $x$ such that $v^{\bplus}\cup \sigma v^{\bminus}$ separates them. For instance, assuming $e_1=s_1\in x^{\bplus}$ and $e_2=\sig(s_2)\in \sigma x^{\bminus}$ we can take $v$ to be a subconfiguration of $x$  containing one but not the other of $s_1$ and $s_2$. 
\end{enumerate}
We have shown $\cal S$ is a stable family and so determines an event structure $\collapse({\cal S}) \eqdef \Pr({\cal S})$. 

Now, further assuming that $\sig$ %
is pseudo-receptive, we will show properties (B)--(E). It will be helpful to note that, for $x_1, x_2\in \config S$, if $x_1^{\bplus} \cup \sigma x_1^{\bminus} = x_2^{\bplus} \cup \sigma x_2^{\bminus}$ then $x_1^{\bplus}  =  x_2^{\bplus}$ and $\sigma x_1 =\sigma x_2$ and we claim that moreover there exists a symmetry $x_1 \iso_S^\theta x_2$ such that $\id_{[x_1^{\bplus}]} \subseteq \theta$ and $\sym\sig\theta =\id_{\sigma x_1}.$ Indeed, as $x_1^{\bplus} = x_2^{\bplus}$, their down-closures, the configurations $[x_1^{\bplus}]$ and $[x_2^{\bplus}]$ are equal and hence $\sigma [x_1^{\bplus}] = \sig[x_2^{\bplus}]\negincl \sigma x_1 = \sigma x_2.$
 As $\sig$  has essentially unique $\negincl$-lifts, we obtain $x_1 \iso_S^\theta x_2$ as above.  

(B). 
By definition, $\sym{\cal S}$ comprises all bijections $\theta': x^{\bplus}\cup\sigma x^{\bminus} \iso x \iso_S^\theta y \iso y^{\bplus}\cup\sigma y^{\bminus}$
as $\theta$ ranges over the isomorphism family of $S$. 
It can be checked that $\theta' = \theta^{\bplus} \cup \sym\sigma \theta^{\bminus}$, so that $\sym{\cal S} = \set{\theta^{\bplus} \cup \sym\sigma \theta^{\bminus}}{\exists x, y.\ x\iso_S^\theta y}.$

We
prove that
  $\sym{\cal S}$ inherits the properties required of an isomorphism family from the isomorphism family of $S$.
 It is easy to see reflexivity and symmetry of $\sym{\cal S}$.  To check transitivity, suppose given $\theta': x^{\bplus}\cup\sigma x^{\bminus} \iso x\iso_S^\theta y \iso y^{\bplus}\cup\sigma y^{\bminus}$ and $\phi': z^{\bplus}\cup\sigma z^{\bminus} \iso y\iso_S^\phi w \iso w^{\bplus}\cup\sigma w^{\bminus}$
such that $y^{\bplus}\cup\sigma y^{\bminus}  = z^{\bplus}\cup\sigma z^{\bminus}.
$
 By the paragraph below (A) above, there is $y \iso_S^\psi z$ which restricts to the identity on $y^{\bplus} = z^{\bplus}$ and induces the identity between $\sigma y^{\bminus}$ and $\sigma z^{\bminus}$.  The composite bijection
 $x^{\bplus}\cup\sigma x^{\bminus} \iso x\iso_S^{\phi\psi\theta} w \iso w^{\bplus}\cup\sigma w^{\bminus}$ provides the composition $\phi'\theta'$. 
  
  We turn to the extension and restriction properties required of an isomorphism family. Here it will be useful to observe  a
 necessary and sufficient condition for one bijection to extend another in $\sym{\cal S}$. 

\emph{Observation:}  Let $\theta': x_1^{\bplus}\cup\sigma x_1^{\bminus} \iso x_1\iso_S^\theta y_1 \iso y_1^{\bplus}\cup\sigma y_1^{\bminus}$ and $\phi': x^{\bplus}\cup\sigma x^{\bminus} \iso x\iso_S^\phi y \iso y^{\bplus}\cup\sigma y^{\bminus}$
be bijections in $\sym{\cal S}$.  The bijection $\theta'$ extends the bijection $\phi'$, \ie~$\phi' \subseteq \theta'$,  iff both the following commute,
$$
  \begin{tikzcd}[row sep=1.8em, column sep=1.8em]
	{x_1} & {y_1} \\
	{[x^{\bplus}]} & {[y^{\bplus}]}
	\arrow["{\theta}"', from=1-1, to=1-2]
	\arrow[hook, from=2-1, to=1-1]
	\arrow["{\phi}"', from=2-1, to=2-2]
	\arrow[hook, from=2-2, to=1-2]
  \end{tikzcd}
  \hbox{\quad  and \quad }
  \begin{tikzcd}[row sep=1.8em, column sep=1.8em]
	{\sigma x_1} & {\sigma y_1} \\
	{\sigma x} & {\sigma y\,.}
	\arrow["{\sym\sigma \theta}"', from=1-1, to=1-2]
	\arrow[hook, from=2-1, to=1-1]
	\arrow["{\sym\sigma \phi}"', from=2-1, to=2-2]
	\arrow[hook, from=2-2, to=1-2]
  \end{tikzcd}
  $$
  To prove the observation, it suffices to consider the two cases: events in $x^{\bplus}$ and events in $\sigma x^{\bminus}$. We omit the details.
  
Resuming the proof that $\sym{\cal S}$ is an isomorphism family, suppose 
  $x^{\bplus}\cup\sigma x^{\bminus} \iso x\iso_S^\theta y \iso y^{\bplus}\cup\sigma y^{\bminus}$ and $v^{\bplus}\cup\sigma v^{\bminus} \subseteq x^{\bplus}\cup\sigma x^{\bminus}$ for configurations $x, y,v \in\config S$. 
  As $[v^{\bplus}] \negincl v$, $\sigma [v^{\bplus}] \negincl \sigma v\subseteq \sigma x.$
  Clearly $ [v^{\bplus}] \subseteq x$. So, by Proposition~\ref{prop:wneginnoc}, there is a unique $x_1\in\config S$ such that $[v^{\bplus}] \negincl  x_1 \subseteq x$ and $\sigma x_1 =\sigma v.$ Therefore $x_1^{\bplus}= v^{\bplus}$ which with  $\sigma x_1 =\sigma v$ implies $x_1^{\bplus} \cup\sigma x_1^{\bminus} = v^{\bplus}\cup\sigma v^{\bminus}.$

  Now we have
  $x\iso_S y$ and $x_1\subseteq x$, and from the isomorphism family of $S$, we obtain (a unique) $y_1\in\config S$ with $x_1\iso_S y_1$ a restriction of $x\iso_S y$, \ie
\begin{equation}
  \label{eq:diagram}
  \begin{tikzcd}[row sep=1.8em, column sep=1.8em]
	x & y \\
	{x_1} & {y_1}
	\arrow["{\sim}"'{outer sep=-1.2ex}, from=1-1, to=1-2]
	\arrow[hook, from=2-1, to=1-1]
	\arrow["{\sim}"'{outer sep=-1.2ex}, from=2-1, to=2-2]
	\arrow[hook, from=2-2, to=1-2]
\end{tikzcd}
\end{equation}
commutes. 
  As required, we obtain a sub-configuration $y_1^{\bplus} \cup\sigma y_1^{\bminus} $ of $y^{\bplus} \cup\sigma y^{\bminus}$ and a bijection
  $$
   v^{\bplus}\cup\sigma v^{\bminus} = x_1^{\bplus} \cup\sigma x_1^{\bminus} \iso x_1 \iso_S y_1 \iso y_1^{\bplus} \cup\sigma y_1^{\bminus} \,,
   $$
   in $\sym{\cal S}$ which is a restriction of the original bijection $x^{\bplus}\cup\sigma x^{\bminus} \iso y^{\bplus}\cup\sigma y^{\bminus}$.  To show this, the \emph{observation} says that diagrams $$
   \begin{tikzcd}[row sep=1.8em, column sep=1.8em]
	x & y \\
	{[v^{\bplus}]} & {[y_1^{\bplus}]}
	\arrow["{\sim}"'{outer sep=-1.2ex}, from=1-1, to=1-2]
	\arrow[hook, from=2-1, to=1-1]
	\arrow["{\sim}"'{outer sep=-1.2ex}, from=2-1, to=2-2]
	\arrow[hook, from=2-2, to=1-2]
   \end{tikzcd}
   \hbox{ \quad and \quad}
   \begin{tikzcd}[row sep=1.8em, column sep=1.8em]
	{\sigma x} & {\sigma y} \\
	{\sigma v} & {\sigma y_1}
	\arrow["{\sim}"'{outer sep=-1.2ex}, from=1-1, to=1-2]
	\arrow[hook, from=2-1, to=1-1]
	\arrow["{\sim}"'{outer sep=-1.2ex}, from=2-1, to=2-2]
	\arrow[hook, from=2-2, to=1-2]
   \end{tikzcd}
   $$
     must commute. But this follows from the commuting diagram \eqref{eq:diagram}, recalling that $v^{\bplus} = x_1^{\bplus}$ and $\sigma v = \sigma x_1$.

  For the remaining condition, suppose $x^{\bplus}\cup\sigma x^{\bminus} \iso x\iso_S^\theta y \iso y^{\bplus}\cup\sigma y^{\bminus}$ and $x^{\bplus}\cup\sigma x^{\bminus} \subseteq v^{\bplus}\cup\sigma v^{\bminus}$, 
  for configurations $x, y,v \in\config S$. This clearly entails
  $\sigma x\subseteq \sigma v$.  It is easy to check that the bijection  $x\iso_S^\theta y$ restricts to make the diagram
  $$
  \begin{tikzcd}[row sep=1.8em, column sep=1.8em]
	x & y \\
	{[x^{\bplus}]} & {[y^{\bplus}]}
	\arrow["{\theta}", from=1-1, to=1-2]
	\arrow["{\scriptscriptstyle\boxminus}", hook, from=2-1, to=1-1]
	\arrow["{\sim}"'{outer sep=-1.2ex}, from=2-1, to=2-2]
	\arrow["{\scriptscriptstyle\boxminus}"', hook, from=2-2, to=1-2]
  \end{tikzcd}
  $$
  commute.
  In particular,   $[x^{\bplus}]\iso_S^\theta  [y^{\bplus}]$ with  $[x^{\bplus}] \subseteq v$ so there is $w\in\config S$ and a commuting diagram
\begin{equation}
  \label{eq:diagram2}
  \begin{tikzcd}[row sep=1.8em, column sep=1.8em]
	v & w \\
	{[x^{\bplus}]} & {[y^{\bplus}]\,.}
	\arrow["{\sim}"'{outer sep=-1.2ex}, from=1-1, to=1-2]
	\arrow[hook, from=2-1, to=1-1]
	\arrow["{\sim}"'{outer sep=-1.2ex}, from=2-1, to=2-2]
	\arrow[hook, from=2-2, to=1-2]
  \end{tikzcd}
\end{equation}
As $\sigma x\subseteq \sigma v$ we have $\sigma [x^{\bplus}] \negincl \sigma x \subseteq \sigma v,$ and by the above a commuting diagram,
   $$
  \begin{tikzcd}[row sep=1.8em, column sep=1.8em]
	{\sigma v} & {\sigma w} \\
	{\sigma x} & {} \\
	{\sigma [x^{\bplus}]} & {\sigma [y^{\bplus}]\,.}
	\arrow["{\sim}"'{outer sep=-1.2ex}, from=1-1, to=1-2]
	\arrow[hook, from=2-1, to=1-1]
	\arrow["{\scriptscriptstyle\boxminus}"', hook, from=3-1, to=2-1]
	\arrow["{\sim}"'{outer sep=-1.2ex}, from=3-1, to=3-2]
	\arrow[hook, from=3-2, to=1-2]
  \end{tikzcd}
  $$
  As $\iso_A$ forms an isomorphism family, this determines a unique $z\in\config A$ for which
\begin{equation}
  \label{eq:diagram4}
  \begin{tikzcd}[row sep=1.8em, column sep=1.8em]
	{\sigma v} & {\sigma w} \\
	{\sigma x} & z \\
	{\sigma [x^{\bplus}]} & {\sigma [y^{\bplus}]\,.}
	\arrow["{\sim}"'{outer sep=-1.2ex}, from=1-1, to=1-2]
	\arrow[hook, from=2-1, to=1-1]
	\arrow["{\sim}"'{outer sep=-1.2ex}, from=2-1, to=2-2]
	\arrow[hook, from=2-2, to=1-2]
	\arrow["{\scriptscriptstyle\boxminus}"', hook, from=3-1, to=2-1]
	\arrow["{\sim}"'{outer sep=-1.2ex}, from=3-1, to=3-2]
	\arrow["{\scriptscriptstyle\boxminus}", hook, from=3-2, to=2-2]
  \end{tikzcd}
\end{equation}
 commutes. As $[y^{\bplus}] \subseteq w$, by Proposition~\ref{prop:wneginnoc}, there is a unique $w_1\in\config S$ such that $[y^{\bplus}] \negincl w_1 \subseteq w \quad \& \quad \sigma w_1 =z.$  This makes $ [y^{\bplus}] \negincl w_1$ an $\negincl$-lift of $ \sigma [y^{\bplus}]\negincl z$.
 We also have that
 $$
  \begin{tikzcd}[row sep=1.8em, column sep=1.8em]
	{\sigma x} & {\sigma y} \\
	{\sigma [x^{\bplus}]} & {\sig[y^{\bplus}]\,.}
	\arrow["{\sim}"'{outer sep=-1.2ex}, from=1-1, to=1-2]
	\arrow["{\scriptscriptstyle\boxminus}"', hook, from=2-1, to=1-1]
	\arrow["{\sim}"'{outer sep=-1.2ex}, from=2-1, to=2-2]
	\arrow["{\scriptscriptstyle\boxminus}", hook, from=2-2, to=1-2]
  \end{tikzcd}
  $$
commutes.  This provides another  (essential) lift $  [y^{\bplus}] \negincl y$, this time with $\sigma y \iso_A z$.  As $\sig$ is pseudo-receptive we obtain the commuting diagram
 $$
 \begin{tikzcd}[row sep=1.8em, column sep=1.8em]
	{w_1} & y \\
	{[y^{\bplus}]\,,}
	\arrow["{\sim}"'{outer sep=-1.2ex}, from=1-1, to=1-2]
	\arrow["{\scriptscriptstyle\boxminus}"', hook, from=2-1, to=1-1]
	\arrow["{\scriptscriptstyle\boxminus}"{description}, hook, from=2-1, to=1-2]
 \end{tikzcd}
  $$
 which, by the isomorphism family of $S$, we can extend to
\begin{equation}
  \label{eq:diagram5}
 \begin{tikzcd}[row sep=1.8em, column sep=1.8em]
	w & {y_1} \\
	{w_1} & y \\
	{[y^{\bplus}]\,,}
	\arrow["{\sim}"'{outer sep=-1.2ex}, from=1-1, to=1-2]
	\arrow[hook, from=2-1, to=1-1]
	\arrow["{\sim}"'{outer sep=-1.2ex}, from=2-1, to=2-2]
	\arrow[hook, from=2-2, to=1-2]
	\arrow["{\scriptscriptstyle\boxminus}"', hook, from=3-1, to=2-1]
	\arrow["{\scriptscriptstyle\boxminus}"{description}, hook, from=3-1, to=2-2]
 \end{tikzcd}
\end{equation} 
  for some $y_1\in\config S$. Because $y\subseteq y_1$ we get $y^{\bplus}\cup\sigma y^{\bminus} \subseteq y_1^{\bplus}\cup\sigma y_1^{\bminus}$. 
  Also, the bijection
  $$
  v^{\bplus}\cup\sigma v^{\bminus} \iso v \iso_S w \iso_S y_1 \iso y_1^{\bplus}\cup\sigma y_1^{\bminus}
  $$
  in $\sym{\cal S}$ 
    extends the original bijection $x^{\bplus}\cup\sigma x^{\bminus} \iso y^{\bplus}\cup\sigma y^{\bminus}$ 
  in $\sym{\cal S}$.  To see this, use the \emph{observation} above. Combine the commuting diagrams \eqref{eq:diagram2} and \eqref{eq:diagram5}, to obtain that
  $$
   \begin{tikzcd}[row sep=1.8em, column sep=1.8em]
	v & {y_1} \\
	{[x^{\bplus}]} & {[y^{\bplus}]}
	\arrow["{\sim}"'{outer sep=-1.2ex}, from=1-1, to=1-2]
	\arrow[hook, from=2-1, to=1-1]
	\arrow["{\sim}"'{outer sep=-1.2ex}, from=2-1, to=2-2]
	\arrow[hook, from=2-2, to=1-2]
   \end{tikzcd}$$
  commutes.  From \eqref{eq:diagram4}, noting $z=\sigma w_1$, combined with the application of $\sig$ to the top commuting square of \eqref{eq:diagram5} we obtain that 
   $$
   \begin{tikzcd}[row sep=1.8em, column sep=1.8em]
	{\sigma v} & {\sigma y_1} \\
	{\sigma x} & {\sigma y}
	\arrow["{\sim}"'{outer sep=-1.2ex}, from=1-1, to=1-2]
	\arrow[hook, from=2-1, to=1-1]
	\arrow["{\sim}"'{outer sep=-1.2ex}, from=2-1, to=2-2]
	\arrow[hook, from=2-2, to=1-2]
   \end{tikzcd}$$
  commutes. 
 \\

  (C). 
We show that the functions $\collapse(\sig)$ and $g$ are maps of event structures with polarity.

First, a characterization of the primes of $\cal S$, which constitute the events of $\collapse(S)$. 
Let $x\in\config S$. 
From $(1)$ above, independently of the choice of $x$, the sub-configurations of 
$x^{\bplus}\cup\sigma x^{\bminus} $ are exactly $y^{\bplus}\cup\sigma y^{\bminus}$,  
where $y$ is a sub-configuration of $x$. It follows that prime configurations of $\cal S$ always take the form 
$[s]^{\bplus} \cup \sigma [s]^{\bminus}$, for some $s\in S$. 

The function $\collapse(\sig): \collapse(S) \to A$ takes a prime $[s]^{\bplus} \cup \sigma [s]^{\bminus}$ to $\sig(s)$.  A typical configuration of $\collapse(S)$ is a set
$$
\set{p\in \Pr({\cal S})}{p \subseteq x^{\bplus}\cup\sigma x^{\bminus} }\,,
$$
for some $x\in\config S$.  Hence  $\collapse(\sig)$ takes it to $\sigma x$ in a locally injective way.  

It is easy to see that $g: S \to \collapse(S)$ is a map.  It takes a configuration $x\in\config S$ to the set of prime sub-configurations of $x^{\bplus}\cup\sigma x^{\bminus} $; it is locally injective because of the bijection $x\iso x^{\bplus}\cup\sigma x^{\bminus} $.
In addition,  
$$\collapse(\sig) g (s) = 
\collapse(\sig)([s]^{\bplus} \cup \sigma [s]^{\bminus}) =\sig(s)\,,$$
 for all $s\in S$.  Hence, $\collapse(\sig) \circ g = \sig$.

  (D). %
 Assume  $z, z_1, z_2\in \config{\collapse(S)}$ with $z\negincl z_1$ and $z\negincl z_2$.  Suppose  
$\collapse(\sig) z_1 =  \collapse(\sig) z_2$.  Then $\bigcup z_1 = x_1^{\bplus} \cup \sigma x_1^{\bminus}$ and $\bigcup z_2 = x_2^{\bplus} \cup \sigma x_2^{\bminus}$, 
for some $x_1, x_2\in \config S$. From $z\negincl z_1, z_2$, we see that $x_1^{\bplus} = x_2^{\bplus}$ and, from $\collapse(\sig) z_1 =  \collapse(\sig) z_2$, that $\sigma x_1 = \sigma x_2$.  Therefore  
 $z_1=z_2$. 
 
  (E).   
  Finally we show that
$g x_1 = g x_2 \implies x_1\iso_S x_2$, for all $x_1, x_2\in \config S$. Supposing $g x_1 = g x_2$ we also have 
$x_1^{\bplus}\cup\sigma x_1^{\bminus}  = \bigcup g x_1 = \bigcup g x_2 =  x_2^{\bplus}\cup\sigma x_2^{\bminus}$ which implies $x_1\iso_S x_2$, by (1) above.
\end{proof}

\begin{lemma}\label{lem:collapseequiv}
If $\sig$ is any pseudo-receptive (and therefore $\boxminus$-innocent) pre-strategy,  %
$g:\sigma \strongequiv \collapse(\sig)$. 
\end{lemma}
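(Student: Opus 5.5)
We must exhibit a strict map $h:\collapse(S)\to S$ with $\sigma h=\collapse(\sig)$ such that $gh\sim\id_{\collapse(S)}$ and $hg\sim\id_S$. By Lemma~\ref{lem:collapse}(C), $g$ is already a strict map $\sigma\to\collapse(\sig)$. The plan has three steps: first check that $g$ preserves symmetry, then construct $h$, then verify the two homotopies.

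For the symmetry of $g$, take $\theta\in\sym S[x,y]$. Then $g\theta$ is, via the order isomorphism $\config{\Pr(\mathcal S)}\cong\mathcal S$ of Theorem~\ref{thm:Pr}, the bijection $\theta^{\bplus}\cup\sym\sigma\theta^{\bminus}$. This lies in $\sym{\mathcal S}$ by Lemma~\ref{lem:collapse}(B), so $g$ is a map in $\ESSP$.

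The pseudo-inverse $h$ is where choice enters. Each configuration $z$ of $\collapse(S)$ has the form $\bigcup z=x^{\bplus}\cup\sigma x^{\bminus}$ for some $x\in\config S$. By the paragraph after (A) in Lemma~\ref{lem:collapse}, all such $x$ are related by symmetries that fix $[x^{\bplus}]$ and lie over $\id_{\sigma x}$. We need to choose representatives $x_z$ coherently, so that $z\subseteq z'$ implies $x_z\subseteq x_{z'}$; the event map $h$ then sends a prime $p=[s]^{\bplus}\cup\sigma[s]^{\bminus}$ to the top event of $x_p$. I would do this with Zorn's lemma. Consider partial sections $\chi$, defined on down-closed subfamilies of $\config{\collapse(S)}$, that are monotone and satisfy $\sigma\chi(z)=\collapse(\sig)z$ and $g\chi(z)=z$. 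Order them by extension. Chains have upper bounds by taking unions, which is legitimate because configurations are finite.

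Maximality is then proved by extension along a covering $z\cov z'$. If the new event is positive, Proposition~\ref{prop:wneginnoc} applied inside any representative of $z'$, together with the restriction/extension property of $\sym S$, transports the representative so that it contains $\chi(z)$. If the new event is negative, the essentially unique $\negincl$-lifts (pseudo-receptivity) give a lift of $\sigma\chi(z)\negincl\collapse(\sig)z'$ extending $\chi(z)$. This lift lies over $z'$ by Lemma~\ref{lem:collapse}(D).

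Two further points must be checked. Compatibility with all other already-defined extensions $z''\subseteq z'$ follows from the uniqueness in Proposition~\ref{prop:wneginnoc}. Local injectivity and preservation of configurations follow because $h$ restricted to $z$ is $z\cong x_z$. Strictness, $\sigma h=\collapse(\sig)$, holds by construction.

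Finally, the homotopies. We have $gh=\id$ on the nose, since $g\chi(z)=z$. For $hg\sim\id_S$, note that for each $x$ the configurations $hg(x)$ and $x$ have the same image under $g$. Lemma~\ref{lem:collapse}(E) then yields a symmetry $x\cong hg(x)$. We must show that this symmetry is exactly the canonical bijection $s\mapsto hg(s)$, and is not merely some symmetry. For this I would use the bijection \eqref{eq:collapse-bij}: the map $hg$ restricted to $x$ fixes positive events up to the choice, and matches negative events over the same $\sigma$-image. The symmetry produced by (E) is characterised by exactly these two properties, being the identity on $[x^{\bplus}]$ and lying over $\id_{\sigma x}$. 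Hence the two coincide.

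I expect the Zorn step, in particular the coherence of choices across non-linear extensions, to be the main obstacle. I would first try to make $\chi$ canonical on positive extensions, so that only the negative steps require choice.
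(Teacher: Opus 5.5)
Your plan is essentially the paper's own argument: use Zorn's lemma to build a strict section $h$ of $g$ with $gh=\mathrm{id}$, then use Lemma~\ref{lem:collapse}(E) to get $hg\sim\mathrm{id}$. It has a genuine gap, and the paper's proof has the same one: the invariant $g\chi(z)=z$ cannot be maintained. For a counterexample, let the game $A$ have events $a^{\boxminus}\le b^{\boxplus}$ and only identity symmetries. Let $S$ have events $n\le t$ and $n'\le t'$, with $n,n'$ negative, $t,t'$ positive, and $n\# n'$. Let its symmetries be generated by $\{n,t\}\cong\{n',t'\}$, and set $\sigma(n)=\sigma(n')=a$, $\sigma(t)=\sigma(t')=b$. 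This $\sigma$ is pseudo-receptive, and also innocent and saturated, so it is exactly the situation in which the paper uses the lemma. Here $\{x^{\bplus}\cup\sigma x^{\bminus}\}=\{\emptyset,\{a\},\{a,t\},\{a,t'\}\}$. So $\collapse(S)$ has events $p_a<p_t,p_{t'}$ with $p_t\# p_{t'}$, and $g(t)=p_t$, $g(t')=p_{t'}$. The only configurations with $g$-images $\{p_a,p_t\}$ and $\{p_a,p_{t'}\}$ are $\{n,t\}$ and $\{n',t'\}$. A monotone $\chi$ with $g\chi=\mathrm{id}$ would therefore need $\chi\{p_a\}$ to be both $\{n\}$ and $\{n'\}$. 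Your positive step is exactly where this breaks. Transporting the representative $\{n',t'\}$ of $z'=\{p_a,p_{t'}\}$ so that it contains $\chi\{p_a\}=\{n\}$ gives $\{n,t\}$, whose $g$-image is not $z'$: symmetries need not fix positive events, so transport changes $g$-images. The conclusion does hold in this example, via $h(p_a)=n$ and $h(p_t)=h(p_{t'})=t$, but only with $gh\sim\mathrm{id}$, never $gh=\mathrm{id}$. In the paper the same slip is the line ``$gf'(p)=g(s)=p$'', although $f'(p)=s_0$ and in general $g(s_0)\neq g(s)$.

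Your negative step has a second gap. Pseudo-receptivity only gives $\chi(z)\negincl x'$ with $\sigma x'$ \emph{symmetric} to $\collapse(\sigma)z'$, not equal to it, so strictness is lost. Lemma~\ref{lem:collapse}(D) does not help, since it concerns configurations of $\collapse(S)$, not $g$-images. Under the stated hypothesis this cannot be repaired. Let $A$ have negative events $c<a_1$, $c<a_2$ with $a_1\# a_2$, and symmetry $\{c,a_1\}\cong\{c,a_2\}$ fixing $c$. Let $S$ have negative events $n<m$, $n'<m'$ with $n\# n'$, symmetries generated by $\{n,m\}\cong\{n',m'\}$, and $\sigma(n)=\sigma(n')=c$, $\sigma(m)=a_1$, $\sigma(m')=a_2$. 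This $\sigma$ is pseudo-receptive and ($\boxminus$-)innocent. Its collapse has events $p_c<p_1,p_2$ with $p_1\# p_2$ and $\collapse(\sigma)(p_i)=a_i$. Any strict $h$ must send $p_1\mapsto m$ and $p_2\mapsto m'$, which forces both $h(p_c)=n$ and $h(p_c)=n'$. So no strict map $\collapse(\sigma)\to\sigma$ exists, $\sigma$ and $\collapse(\sigma)$ are not strongly equivalent, and the statement as given is false. An extra hypothesis is needed; the paper only ever applies the lemma to $\Sat(\sigma)$, in Theorem~\ref{thm:strong-characterization}. The paper's proof hides this by setting $\sigma_0'(p)=\sigma(s_0)$, which need not equal $\collapse(\sigma)(p)=\sigma(s)$.
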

\begin{proof}
Recall the 
map $g:S\to \collapse(S)$ which 
takes $s\in S$ to the prime configuration $[s]^{\bplus} \cup \sig[s]^{\bminus}$. 

We produce a converse map
$f:\collapse(S) \to S$
via Zorn's Lemma.
By Zorn's Lemma there is a $\subseteq$-maximal map $f: S_0 \to S$ such that  $g f =\id_{S_0}$ and 
$$
\begin{tikzcd}[row sep=1.8em, column sep=1.8em]
	{S_0} & S \\
	& A
	\arrow["f", from=1-1, to=1-2]
	\arrow["{\sig_0}"', from=1-1, to=2-2]
	\arrow["\sig", from=1-2, to=2-2]
\end{tikzcd}
$$
commutes, with a rigid embedding $S_0 \hookrightarrow \collapse(S)$ s.t. $\sig_0$ is %
the restriction of $\collapse(\sig)$ to $S_0$, \ie~$\sig_0$ is the composition
$$
\sig_0: 
S_0 \hookrightarrow \collapse(S) \arr{\collapse(\sig)} A\,.
$$
(Clearly the empty event structure embeds rigidly in $\collapse(S)$ and %
satisfies the above %
constraints;  Zorn's Lemma ensures this can be extended maximally.)

We show that $\sig_0 =\collapse(\sig)$.  Suppose otherwise, that $S_0 \neq \collapse(S)$.  Then, there is $p\in \Pr({\cal S})$ such that $p\notin S_0$ yet $[p)%
\in\config{S_0}$.  By assumption, $g f [p) = [p)$.  

In  Lemma~\ref{lem:collapse}, we saw that a 
prime configuration $p$ of $\cal S$, \ie~an event of $\collapse(S)$, has the form 
$$p=[s]^{\bplus} \cup \sigma [s]^{\bminus}\,,$$
 for some $s\in S$.
 We have $g[s) = [p)$ 
 so 
$g[s) = gf [p)$.  By Lemma~\ref{lem:collapse}, it follows that  
 $ [s) \iso_S f[p)$. 
 Because $[s) \longcov s [s]$ we also have $f[p) \longcov{s_0} [s_0]$ with $[s]\iso_S [s_0]$ extending $ [s) \iso_S f[p)$.
 
 Extend $S_0$ to $S_0'=S_0\cup \setof p$, a larger substructure of $\collapse(S)$. The map $f$ extends to a map $f':S_0'  \to S$ which  acts so $f'(p) =s_0$, and the function $\sig_0$ to $\sig_0': S_0' \to A$ so $\sig_0'(p) = \sig(s_0)$.  
Clearly  we still have $\sigma f' = \sig_0'$.
With the new value $p$, we maintain $g f' =\id_{S_0'}$, as we have
 $$
 g f'(p) = g(s) = [s]^{\bplus}\cup \sigma [s]^{\bminus} = p\,.
 $$
 This contradicts the maximality of $\sig_0$, unless $S_0 =\collapse(S)$ as we require. 
 
 To conclude we show $f$ and $g$ form   an equivalence $$f:\collapse(\sig) \strongequiv \sigma:g\,.$$ Certainly $g f =\id_{\collapse(S)}$. 
Let $x\in \config S$. We have
$$
g(fg\, x) = gfg\, x = \id_{\collapse(S)} g\, x =  g\, x\,,
$$
whence 
$$
fg\, x \iso_S x\,,
$$
using Lemma~\ref{lem:collapse}.
 \end{proof}

\subsection{Universality of the collapse}

Although we have not used it directly, we note the following universal property of the collapse construction.

\begin{corollary}(universal characterization)\label{cor:absview}
Let $\sigma:S\to A$ and $\sig':S'\to A$ be pre-strategies where $\sig$ is  $\boxminus$-innocent and pseudo-receptive and $\sig'$ is collapsed. %
 Then for a map of pre-strategies $f:\sigma \to \sig'$ there is a unique  map of pre-strategies $h: \collapse(\sig) \to \sig'$ such that $f=h g$:
$$
\begin{tikzcd}[row sep=1.8em, column sep=2.2em]
	S & {\collapse(S)} & {S'} \\
	& A &
	\arrow["f", dotted, bend left=32, from=1-1, to=1-3]
	\arrow["g", from=1-1, to=1-2]
	\arrow["\sig"', from=1-1, to=2-2]
	\arrow["{\collapse(\sig)}"{description}, from=1-2, to=2-2]
	\arrow["h", dashed, from=1-2, to=1-3]
	\arrow["{\sig'}", dotted, bend left=18, from=1-3, to=2-2]
\end{tikzcd}
$$

Amongst pre-strategies in $A$ which are $\boxminus$-innocent and pseudo-receptive,
 $\collapse$ is left adjoint to the inclusion of the subcategory of collapsed pre-strategies in the category of all pre-strategies.  
 
 If $\sigma:S \to A$ is  $\boxminus$-innocent and pseudo-receptive then  $\sym{\collapse(S)} \iso {\collapse(\sym S)}$ and $\sym{\collapse(\sig)} \iso {\collapse(\sym \sig)}$. 
 
 \end{corollary}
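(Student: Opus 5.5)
The plan is to get the universal property directly from the concrete description of $\collapse(S)$ in Definition~\ref{def:collapse} and Lemma~\ref{lem:collapse}. The adjunction and the statements about symmetries should then follow formally.

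\emph{Step 1: uniqueness.} Every event of $\collapse(S)$ is a prime $[s]^{\bplus}\cup\sigma[s]^{\bminus} = g(s)$ for some $s\in S$, by Lemma~\ref{lem:collapse}(C). So $g$ is surjective on events, and likewise on configurations, since configurations of $\collapse(S)$ are the sets $g\,x$ for $x\in\config S$. Hence any $h$ with $hg=f$ is forced on events, namely $h(g(s)) = f(s)$, and uniqueness is immediate.

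\emph{Step 2: existence, i.e.\ well-definedness.} We must show that $g(s_1)=g(s_2)$ implies $f(s_1)=f(s_2)$.
\begin{itemize}
\item If $g(s_1)=g(s_2)$, then $g[s_1]=g[s_2]$. The paragraph after (A) in the proof of Lemma~\ref{lem:collapse} then gives $[s_1]^{\bplus}=[s_2]^{\bplus}$, $\sigma[s_1]=\sigma[s_2]$, and a symmetry $\theta:[s_1]\iso_S[s_2]$ that is the identity on $[[s_1]^{\bplus}]$ and satisfies $\sigma\theta=\id$. By rigidity of $\sigma$ on negative events we may assume $\theta(s_1)=s_2$.
\item Put $z' = f[[s_1]^{\bplus}]$. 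Since $f$ preserves polarity, $f[s_1]$ and $f[s_2]$ are both $\negincl$-extensions of $z'$ in $S'$.
\item When $f$ is strict, $\sigma' f[s_1]=\sigma[s_1]=\sigma[s_2]=\sigma' f[s_2]$, so collapsedness of $\sigma'$ gives $f[s_1]=f[s_2]$.
\item Comparing the events playing the role of $s_1,s_2$ under the bijection $f\theta$, we then get $f(s_1)=f(s_2)$.
\end{itemize}

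For a general (non-strict) map $\sigma'f\sim\sigma$, the homotopy components $\sigma'f[s_i]\cong\sigma[s_i]$ compose with $\sigma\theta=\id$. This shows that $\sigma'(f\theta)$ is the symmetry induced by the homotopy, and it is the identity on $\sigma' z'$. To conclude $\sigma'f[s_1]=\sigma'f[s_2]$ as sets, I would try to show that the homotopy components over $[s_1]$ and $[s_2]$ agree on corresponding negative events; that agreement is what should force the two images to coincide.

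I expect this to be the main obstacle. If it fails, I would restrict the statement to strict maps, which is the sense in which $g$ itself is used in Lemma~\ref{lem:collapseequiv}.

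\emph{Step 3: $h$ is a map of pre-strategies.}
\begin{itemize}
\item $h$ sends the configuration $g\,x$ to $f\,x\in\config{S'}$.
\item Local injectivity holds because $g_{\mid x}$ is a bijection onto $g\,x$ and $f_{\mid x}$ is injective.
\item Symmetries are preserved by Lemma~\ref{lem:collapse}(B): every symmetry of $\collapse(S)$ is induced by some $\theta\in\sym S$, and $h$ sends it to $f\theta\in\sym{S'}$.
\item Finally, $\sigma'hg=\sigma'f\sim\sigma=\collapse(\sigma)\,g$, and homotopy is checked configuration-wise. Since every configuration of $\collapse(S)$ has the form $g\,x$, this gives $\sigma'h\sim\collapse(\sigma)$.
\end{itemize}

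\emph{Step 4: adjunction and symmetries.} The left adjoint statement is the standard reformulation of Steps 1--3, with $g$ as unit. It uses that $\collapse(\sigma)$ is collapsed, by Lemma~\ref{lem:collapse}(D), and that composition with $g$ is compatible with maps of pre-strategies.

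For the last claim, present symmetry as a span of open maps as in Appendix~\ref{app:stable-families}, and apply the collapse construction to $\sym\sigma:\sym S\to\sym A$. Its stable family consists of the sets $\theta^{\bplus}\cup\sym\sigma\theta^{\bminus}$, which is exactly $\sym{\cal S}$ by Lemma~\ref{lem:collapse}(B). Taking $\Pr$ of both sides then gives the isomorphisms $\sym{\collapse(S)}\iso\collapse(\sym S)$ and $\sym{\collapse(\sigma)}\iso\collapse(\sym\sigma)$.
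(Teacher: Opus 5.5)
\paragraph{Overall.} For strict maps your argument is correct and follows the paper's sketched proof. There too, $h$ is forced to send $g(s)=[s]^{\bplus}\cup\sigma[s]^{\bminus}$ to $f(s)$, the adjunction is read off from the universal property with $g$ as unit, and the symmetry claims come from applying $\Pr$ to $\sym{\cal S}\iso\set{z^{\bplus}\cup\sym\sigma z^{\bminus}}{z\in\config{\sym S}}$. You go further than the paper in checking that $h$ is well defined and is a map, which the paper leaves implicit.

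\paragraph{The non-strict case is false.} The step that would fail is the non-strict case you plan to attempt. There the statement does not hold, so no comparison of homotopy components can rescue it. Those components are determined by $f$, namely $\sigma_{\mid x}\circ(\sigma' f)_{\mid x}^{-1}$, so their ``agreement'' merely restates $\sigma' f(s_1)=\sigma' f(s_2)$.

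A counterexample:
\begin{itemize}
\item Let $A$ have two conflicting negative events $a,a'$, with the symmetry $\{a\}\iso\{a'\}$.
\item Let $S$ have two conflicting negative events $s_1,s_2$, with the symmetry $\{s_1\}\iso\{s_2\}$, and set $\sigma(s_1)=\sigma(s_2)=a$. This $\sigma$ is $\boxminus$-innocent and pseudo-receptive.
\item Let $\sigma'=\id_A$, which is collapsed.
\item Let $f(s_1)=a$ and $f(s_2)=a'$.
\end{itemize}
Then $f$ preserves symmetry and $\sigma' f\sim\sigma$: the component over $\{s_2\}$ is the symmetry $a'\mapsto a$. But $\collapse(S)$ has the single event $g(s_1)=g(s_2)$, while $f(s_1)\neq f(s_2)$. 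So no $h$ with $f=hg$ exists; one can only achieve $hg\sim f$.

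Hence the corollary must be read with strict maps, as the commuting diagram over $A$ indicates. Under the paper's general (non-strict) definition of map of pre-strategies it fails. Your fallback is therefore forced, not optional, and the last bullet of Step~3 becomes the equation $\sigma' h=\collapse(\sigma)$.

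\paragraph{Smaller points.}
\begin{itemize}
\item In the strict case, well-definedness needs neither $\theta$ nor pseudo-receptivity. Since $S^{\bplus}$ and $A^{\bminus}$ are disjoint, $g(s_1)=g(s_2)$ directly gives $[s_1]^{\bplus}=[s_2]^{\bplus}$ and $\sigma[s_1]=\sigma[s_2]$.
\item The top event of $g(s)$ is $s$ if $s$ is positive and $\sigma(s)$ if $s$ is negative. Comparing top events then gives either $s_1=s_2$, or both are negative with $\sigma s_1=\sigma s_2$.
\item Collapsedness of $\sigma'$, applied to $f[[s_1]^{\bplus}]\negincl f[s_1],f[s_2]$, gives $f[s_1]=f[s_2]$. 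Local injectivity of $\sigma'$ then gives $f(s_1)=f(s_2)$.
\item Where you do use $\theta$, the equality $\theta(s_1)=s_2$ is forced by local injectivity of $\sigma$ on $[s_2]$, not by rigidity.
\item For the last claim, you should also check, as the paper does, that the isomorphism $\sym{\collapse(S)}\iso\collapse(\sym S)$ commutes with the legs $l,r$ and $\collapse(l_S),\collapse(r_S)$. An isomorphism of the bare event structures is not enough.
\end{itemize}
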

\begin{proof} (Sketched)
To verify the universal characterization, the map $h$ must take an event $[s]^{\bplus}\cup \sigma [s]^{\bminus}$ of $\collapse(S)$, where $s\in S$, to $f(s)$. 
The claimed adjunction follows directly from the universality.  Via the adjunction $\collapse$ extends to a functor. 

An event of $\sym{\cal S} $ is a pair of events of $\cal S$.  Define 
$L: \sym{\cal S} \to {\cal S}$ and $R: \sym{\cal S} \to {\cal S}$ to be the left and right projections on events.  
That  $\sym{\cal S}$ satisfies the axioms of an isomorphism family expresses that $L,R: \sym{\cal S} \to {\cal S}$ forms a symmetry in the category of stable families:
$$
\begin{tikzcd}[row sep=1.8em, column sep=1.8em]
	& {\sym{\cal S}} & \\
	{\cal S} && {\cal S}
	\arrow["L"', from=1-2, to=2-1]
	\arrow["R", from=1-2, to=2-3]
\end{tikzcd}
$$
Applying the right adjoint $\Pr$ to the inclusion functor of (families of configurations of) event structures to stable families, we obtain the symmetry
$$
\begin{tikzcd}[row sep=1.8em, column sep=1.8em]
	& {\Pr(\sym{\cal S})} & \\
	{\collapse(S)} && {\collapse(S)}
	\arrow["l"', from=1-2, to=2-1]
	\arrow["r", from=1-2, to=2-3]
\end{tikzcd}
$$
where we have written $l\eqdef \Pr(L)$ and $r\eqdef \Pr(R)$.  
By definition, $\sym{\collapse(S)} = \Pr(\sym{\cal S})$. 

Recall by Lemma~\ref{lem:collapse},
$$
\sym{\cal S} = \set{\theta^{\bplus} \cup \sym\sigma \theta^{\bminus}}{\exists x, y.\ x\iso_S^\theta y}\,.
$$
There is an order-isomorphism between finite configurations of ${\sym S}$, with symmetry maps $l_S,r_S:\sym S\to S$, and the isomorphism family of $S$: under the isomorphism 
$$
\config{\sym S} \iso \set{\theta}{\exists x, y.\ x\iso_S^\theta y}
$$
 $z\in\config{\sym S} $ is taken to the bijection $l_S z\iso z\iso r_S z$. This isomorphism induces an order-isomorphism
 $$
 \sym{\cal S} \iso \set{z^{\bplus} \cup \sym\sigma z^{\bminus}}{z\in\config{\sym S}}\,.
 $$
 Whence 
 $$
 \begin{aligned}
\sym{\collapse(S)} \eqdef  \Pr(\sym{\cal S}) \iso  &\Pr(\set{z^{\bplus} \cup \sym\sigma z^{\bminus}}{z\in\config{\sym S}})\\
 & \eqqcolon \collapse(\sym S)\,.
\end{aligned}
 $$  
 The isomorphism respects the left and right symmetry maps:
 $$
 \begin{tikzcd}[row sep=1.8em, column sep=1.8em]
	& {\sym{\collapse(S)}} & \\
	{\collapse(S)} && {\collapse(S)} \\
	& {\collapse(\sym S)} &
	\arrow["l"', from=1-2, to=2-1]
	\arrow["r", from=1-2, to=2-3]
	\arrow["{\rotatebox[origin=c]{-90}{$\iso$}}"{description}, phantom, from=1-2, to=3-2]
	\arrow["{\collapse(l_S)}", from=3-2, to=2-1]
	\arrow["{\collapse(r_S)}"', from=3-2, to=2-3]
 \end{tikzcd}
 $$
 commutes.
   It follows that 
 $$
 \sym{\collapse(\sig)} \iso {\collapse(\sym \sig)}\,.
 $$
\end{proof}

\end{document}